\def\qut#1{\left(#1\right)}
\def\bmy{\bm{y}}
\def\bmyh{\bm{\hat{y}}}
\def\bmx{\bm{x}}
\def\bmz{\bm{z}}
\def\bme{\bm{e}}
\def\bmp{\bm{p}}
\def\bmph{\bm{\hat{p}}}
\def\bmn{\bm{n}}
\def\bmnh{\bm{\hat{n}}}
\def\bbE{\mathbb{E}}
\def\calY{\mathcal{Y}}
\def\calX{\mathcal{X}}
\def\qut#1{{\left( #1 \right)}}
\def\qutc#1{{\left\{ #1 \right\}}} %
\def\qutn#1{{\left\| #1 \right\|}} %
\def\qutl#1{\mathopen{}\left(#1\right)\mathclose{}}
\def\qutan#1{\langle#1\rangle}
\newtheorem{theorem}{Theorem}
\newtheorem{lemma}[theorem]{Lemma}
\newtheorem{proposition}[theorem]{Proposition}
\newtheorem{remark}{Remark}
\newtheorem*{example}{Example}
\begin{document}
\title{Unsupervised Image Restoration Using Partially Linear Denoisers}

\author{Rihuan Ke  and 
        Carola-Bibiane Sch\"onlieb
\thanks{R. Ke and C.-B. Sch\"onlieb are with DAMTP,  University  of  Cambridge, Cambridge CB3 0WA, UK.}%
}

\maketitle
\begin{abstract}
Deep neural network based methods are the state of the art in various image restoration problems. Standard supervised learning frameworks require a set of noisy measurement and clean image pairs for which a distance between the output of the restoration model and the ground truth, clean images is minimized. The ground truth images, however, are often unavailable or very expensive to acquire in real-world applications. We circumvent this problem by proposing a class of structured denoisers that can be decomposed as the sum of a nonlinear image-dependent mapping, a linear noise-dependent term and a small residual term. We show that these denoisers can be trained with only noisy images under the condition that the noise has zero mean and known variance. 
The exact distribution of the noise, however, is not assumed to be known. We show the superiority of our approach for image denoising, and demonstrate its extension to solving other restoration problems such as image deblurring where the ground truth is not available.
Our method outperforms some recent unsupervised and self-supervised deep denoising models that do not require clean images for their training. For deblurring problems, the method, using only one noisy and blurry observation per image, reaches a quality not far away from its fully supervised counterparts on a benchmark dataset.
\end{abstract}

\begin{IEEEkeywords}
Image denoising, deep learning, convolutional neural networks, unsupervised learning, partially linear denoiser
\end{IEEEkeywords}

\section{Introduction}

\IEEEPARstart{T}{he}
acquisition of real life images usually suffers from noise corruption due to different factors such as sensor sensitivity, variations of photon numbers, air turbulence, just to name a few.
Image denoising \cite{rudin1992nonlinear,mallat1992singularity, donoho1995noising, weickert1998anisotropic} aims to remove noise from corrupted images. It is one of the most fundamental and central problems tackled by the image processing community. 
A variety of image denoising approaches have been developed in the past decades, which can be broadly classified into model based denoisers (see e.g., \cite{rudin1992nonlinear,chan2005salt, dabov2007image, buades2005non}) and data driven denoisers \cite{roth2005fields, chen2016trainable, zhang2017beyond}.

Recent data driven techniques
outperform conventional model based methods and achieve the state of the art denoising quality \cite{liu2018non, zhang2017beyond, zhang2020residual, lefkimmiatis2018universal}. 
These methods take advantage of large sets of image data and use the models, particularly deep convolutional neural networks (CNN), to learn the image prior from the datasets rather than relying on predefined image features. 
Compared to many model based methods, which need to solve a difficult optimization problem for each test image, CNN based denoisers are computationally efficient once the CNN is trained. Nevertheless, CNN based denoising approaches usually require a massive amount of ground truth images in the training phase. Specifically, conventional training pipelines consist of a loss function or a metric, which defines the distance between a clean ground truth image and its reconstructed version, and an optimization step in which the parameters of the models are adjusted so as to minimize the loss function. One of the most commonly used metrics is the mean squared error (MSE), the calculation of which depends explicitly on the ground truth image.
While these learning processes can lead to high-quality denoisers, they are problematic for application domains where ground truth images are not accessible. 

In the past years, several unsupervised deep learning techniques have been developed to overcome this difficulty. It is found that it is possible to train a deep neural network denoiser by using only the noisy data if multiple versions of noisy images are available for each unknown clean image \cite{lehtinen2018noise2noise}, or if the noise is independent within different regions of the image \cite{batson2019noise2self, krull2019noise2void, krull2019probabilistic}. 
Under the assumption of i.i.d. Gaussian noise, loss functions can also be adapted, based on Stein's unbiased risk estimate of the MSE  \cite{soltanayev2018training},
such that they are defined only on the noisy images while providing good approximations of the MSE. 
\begin{figure}[ht!]
    \centering
        \includegraphics[width=1\linewidth, trim=100 0 90 0, clip]{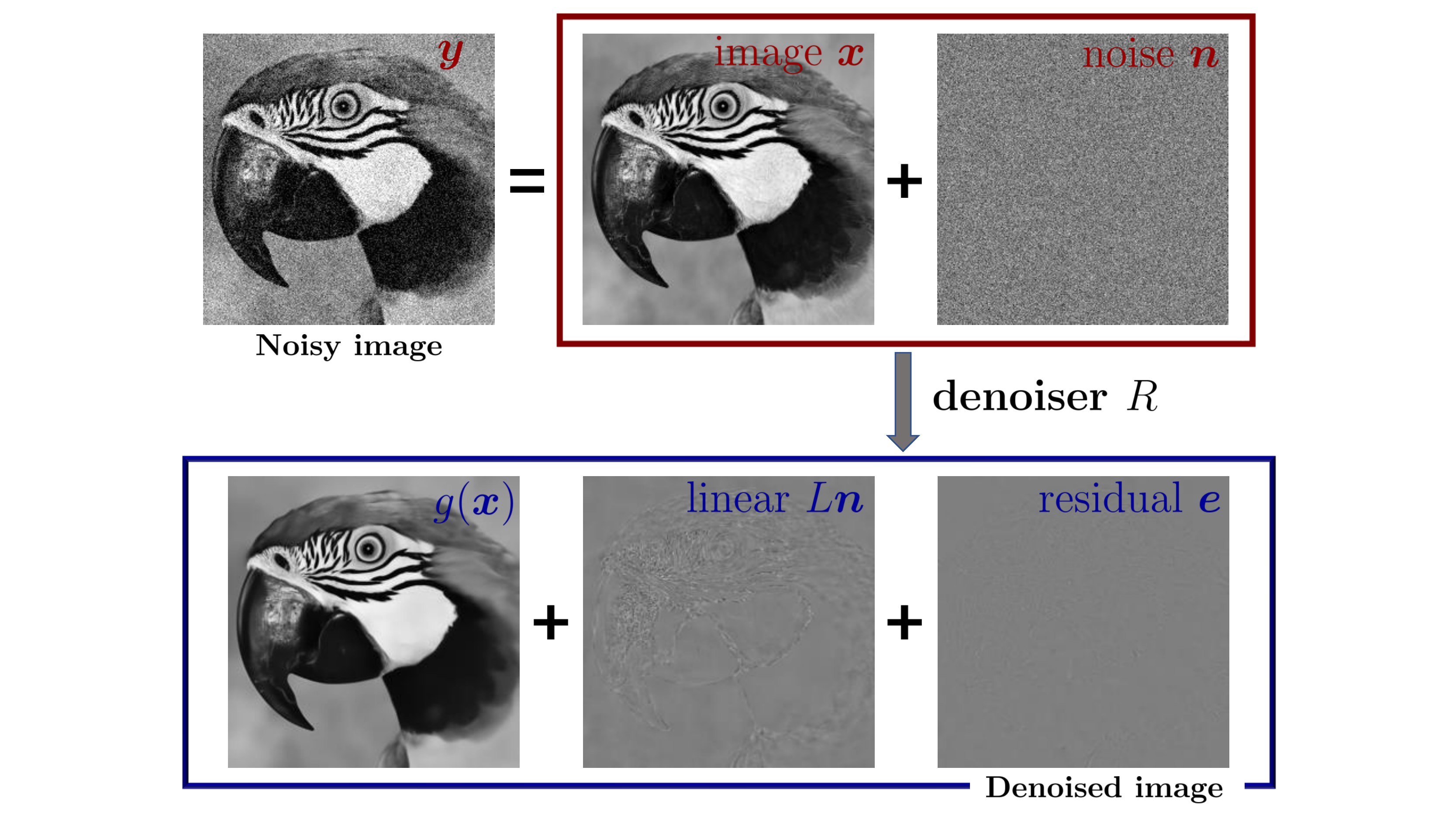}
    \caption{Partially linear denoiser. 
    The clean image $\bmx$ (top middle), the noise $\bmn$ (top right), and the noisy image $\bmy:=\bmx+\bmn$ (top left) are modeled as random variables.
    A denoiser $R$ is decomposed as $R(\bm{y}) = g(\bm{x}) + L\bm{n} + \bm{e}$ with
    a function $g\qut{\bmx}$ (can be nonlinear), 
    a linear mapping $L$ (can depend on $\bmx$), and a residual term $\bme$. If the random variable $\bme$ is of \textbf{small variance}, then $R$ is called a \textit{partially linear denoiser}. Such denoisers can be learned from only noisy images.} 
    \label{fig:diagram}
\end{figure}

In this work, we address the problem of learning efficient denoisers from a set of \textit{noisy images} without the need for precise modeling of the noise, and without multiple noisy observations per image.
We investigate a class of structured nonlinear denoisers and their applications to \textbf{inverse imaging problems} including denoising and deblurring. 
Specifically, we formulate a given denoiser $R\qut{\cdot}$
as the sum of three terms (cf. Fig. \ref{fig:diagram}), including a (nonlinear) function of the ground truth signal $g\qut{\bmx}$, a linear mapping of the noise $L\bmn$, and a residual term $\bme$. 
If the linear factor $L\bmn$ is relatively small and the residual term $\bme$, as a \textit{random variable}, has small variance, then the denoised image mainly depends on the ground truth image and is not sensitive to the changes of the noise. In this paper, we study denoisers with the property that $\bme$ has \textbf{small variance}, and we call denoisers of this type \textit{partially linear denoisers} (cf. Fig. \ref{fig:diagram}). Such denoisers preserve nonlinear image structure (similar to many classic image denoisers), and additionally allow the nonlinearity to be learned from noisy data only.

We observe that some natural denoisers, including deep neural networks, exhibit certain degrees of partial linearity.  By exploiting this property, we show that a partially linear denoiser, in the learning setting, can be trained on noisy images by only assuming that the noise is zero mean conditioned on the images with known variance. Specifically, we introduce an auxiliary random vector together with a partially linear constraint, that we show %
establishes a correspondence between the MSE and a loss function defined without clean images. 
By doing this one can approximate the best partially linear denoisers with a theoretically guaranteed approximation error bound. Moreover, 
we propose an image deblurring approach, 
based on the partially linear denoisers,
that learns from single noisy observation of the images.

Different from other existing unsupervised methods for denoising, our approach does not require a group (or a pair) of noisy observations for each image or an estimate on the underlying noise distribution. Yet our approach leads to a performance close to the fully supervised counterparts on some denoising benchmark datasets.
The proposed partially linear denoisers are learned in an end-to-end manner and can be built on top of any deep neural network architectures for denoising. Once they are trained, the denoised results are obtained without the auxiliary vectors or any post-processing. Furthermore, we demonstrate our denoiser's capacity of handling other image restoration tasks in the absence of ground truth or multiple noisy observations per image, by utilizing the direct approximation to the MSE and its end-to-end learning nature. 

\section{Related work}
In the past decades, a wide variety of image denoising algorithms were developed. They range from analytic approaches such as filtering, variational methods, wavelet transforms, Bayesian estimation to data-driven approaches such as deep learning. In this section we review some major nonlinear denoisers as well as recent deep learning approaches that do not depend on ground truth images for learning. 

\subsection{Nonlinear denoisers}
Natural images are non-Gaussian signals with fine structures such as sharp edges and textures. 
One of the main challenges in the image denoising task is to preserve these structures while removing the noise. 
Traditional linear denoisers such as Gaussian filtering and Tikhonov regularization usually do not achieve satisfactory denoising quality, as they tend to smooth the edges. Most of the existing image denoisers in the literature are therefore nonlinear. 

Total variation (TV) denoising \cite{rudin1992nonlinear} is one of the most fundamental image denoising algorithms. The TV denoising solution is a minimizer of the optimization problem
\[
\min_{x} \frac{\lambda}{2} \| x - y \|^2 + \| x \|_{\rm TV}
\]
in which $y\in\mathbb R^m$ is the noisy image and $\| x \|_{\rm TV}:=\|\nabla x\|_1$ is the total variation of $x\in\mathbb R^m$. TV denoising is known to preserve sharp edges thanks to the non-linearity introduced by the TV norm $\| \cdot \|_{\rm TV}$. 

Patch based methods have gained popularity for image denoising tasks because of their capacity of capturing the self-similarity of images. 
The non-local means (NL-means) algorithm proposed by Buades et al. \cite{buades2005non} is among the most
successful methods in this category. The NL-means algorithm removes noise by calculating a weighted average of the pixel intensities, with weights defined based on a patch similarity measure which emphasizes the connection among pixels in similar patches. It is a nonlinear denoiser, different from the local mean filtering, as the weights are image-dependent. As another nonlinear patch based denoising method, the Block-matching and 3D filtering (BM3D) algorithm \cite{dabov2007image} divides image patches into groups based on a similarity criterion, and collaborative filtering on each of the groups is then performed to clean the patches.  
While patch based denoisers achieve promising denoising performance, they are often time-consuming due to the high computational complexity in calculating the weights or matching with similar patches.

In recent years, convolutional neural networks (CNN) based denoisers became the state of the art for image denoising \cite{zhang2017beyond, zhang2020residual}, thanks to the rapid development of deep learning techniques. In particular, CNN are known to be efficient in modeling image priors \cite{ulyanov2018deep}, which are crucial for the quality of the denoiser. A typical CNN denoiser can be formulated as a composition of mappings called layers, and a basic type of layers $y^{(k)} \rightarrow y^{(k+1)}$ has the form 
\[
y^{(k+1)} = \sigma \qut{W * y^{(k)} + b},
\]
where $*$ denotes the convolution operation, $(W,b)$ are parameters of the network that can be learned from the data, and $\sigma\qut{\cdot}$ is an activation function which is often nonlinear. As such, CNN denoisers are in general nonlinear. There has been growing interest in developing CNN denoisers with new network architectures and building blocks being proposed, such as batch-normalization \cite{zhang2017beyond}, residual connection \cite{fu2017removing}, and residual dense block \cite{zhang2020residual}.

\subsection{Unsupervised deep learning for denoising}
While deep CNN based models have great advantages in image denoising, the most standard learning techniques are limited to the availability of sufficiently many noise-free images. 
Recently, CNN-based learning algorithms for image denoising that do not require clean images as training data have been developed. Soltanayev et al. \cite{soltanayev2018training} propose to use Stein’s unbiased risk estimator (SURE) based loss function, which is computed without knowing ground truth images, as a replacement of the MSE loss function. In the setting of i.i.d. Gaussian noise, SURE provides an unbiased estimate of the MSE, and hence the minimization problems with respect to these two losses are equivalent. Consequently, CNN denoisers can be trained in an unsupervised manner, based on only one noisy realization of each training image. Nevertheless, SURE may not be identical to the MSE in the case of non-Gaussian noise, e.g., shot noise. 
The SURE training scheme has been extended to the setting of multiple noise realizations per image as well as imperfect ground truths \cite{zhussip2019extending}. 

The Noise2Noise approach developed by Lehtinen et al. \cite{lehtinen2018noise2noise} offers a different ground-truth free learning strategy for denoising, based on the assumption that for each image at least two different noisy observations are available. It is found that replacing the clean targets by their noisy observations in the MSE loss function leads to the same minimizers of the original supervised loss if the noise has zero mean and an infinite amount of training data are provided. Specifically, the Noise2Noise loss function is
\begin{equation}\label{eq:n2nloss}
    \sum_{\qut{y, \hat{y}} \in \qut{\calY, \hat{\calY}} } \| f_\theta \qutl{\hat{y}} - y \|^2
\end{equation}
where $y$ and $\hat{y}$ represent two independent noisy observations of the same image sample, and $f_\theta$ is the denoiser parameterized by $\theta$. 
In contrast to the MSE loss
$\sum_{\qut{y, x} \in \qut{\calY, \calX }} \| f_\theta \qutl{y} - x \|^2$
where $x$ denotes the ground truth image, the loss function \eqref{eq:n2nloss} is defined on a noisy pair $\qut{y, \hat{y}}$ only. If the noise in $y$ and the noise in $\hat{y}$ are independent and have zero means, it is shown that the minimization of this loss function \eqref{eq:n2nloss} with respect to $\theta$ is equivalent to minimizing the MSE \cite{lehtinen2018noise2noise}. 
This implies that the parameter $\theta$ can be computed from a training set of noisy pairs. Besides, the Noise2Noise approach does not rely on an explicit image prior or on structural knowledge about the noise models. 

In certain denoising tasks, however, the acquisition of two or more noisy copies per image can be very expensive or impractical, in particular in medical imaging where patients are moving during the acquisition, or in videos with moving cars, etc. 
Several authors have investigated learning techniques that overcome this restriction \cite{ehret2019model, valery2020self, batson2019noise2self, krull2019noise2void, krull2019probabilistic}.
The Frame2Frame \cite{ehret2019model} developed by Ehret et al. fine-tunes a denoiser for blind video denoising. The idea is to use optical flow to warp one video frame to match its neighboring frame, and then minimize the Noise2Noise loss \eqref{eq:n2nloss} with $(y, \hat{y})$ being the pair of matched frames.
The work \cite{moran2020noisier2noise} generalizes the Noise2Noise into the setting of a single noisy realization for each image. A synthetic noise, that is drawn from the same distribution as the underlying noise, is added to the noisy image $y$, and the new noisy image is then used to replace the second observation $\hat{y}$ in the training loss \eqref{eq:n2nloss}. At test time, the raw output of the network is post-processed to obtain the denoising results, by computing a linear combination of the output and the input. In this approach, the synthetic noise has to be of the same noise type as the underlying noise, and the post-processing can magnify the errors of the output of the network.
Batson et al. \cite{batson2019noise2self} show that a denoiser $f_\theta\qut{\cdot}$, satisfying a so called $\mathcal{J}$-invariant property for a partition $\mathcal{J}$ of the image pixels, can be trained without accessing a second observation of the image if the noise is independent across different regions in $\mathcal{J}$. 
A function $f_\theta \qutl{\cdot}$ is said to be $\mathcal{J}$-invariant if for each subset of pixels $J \in \mathcal{J}$, the pixel values of $f_\theta \qut{y}$ at $J$ can be calculated without knowing the values of $y$ at $J$.
By additionally assuming that the noises of $y$ at different elements of $\mathcal{J}$ are independent and have zero mean, one minimizes the loss
\begin{equation}\label{eq:n2sloss}
\begin{split}
    \sum_{y \in \calY}  \qutn{ f_\theta \qut{y} - {y} }^2 & = \sum_{y \in \calY} \sum_{J \in \mathcal{J}} \qutn{ [f_\theta \qut{y}]_J - y_J }^2 \\
    & = \sum_{J \in \mathcal{J}} \sum_{y \in \calY} \qutn{ [f_\theta \qut{y}]_J - y_J }^2 
\end{split}
\end{equation}
where the subscript $J$ in $y_J$ denotes the restriction of the image $y$ to the pixel collection $J$. It should be noted that, for a $\mathcal{J}$-invariant function $f_\theta\qut{\cdot}$ and any $J \in \mathcal{J}$, the loss 
$\sum_{y \in \calY} \qutn{ [f_\theta \qut{y}]_J - y_J }^2$ 
can be interpreted as a variant of \eqref{eq:n2nloss}, given the fact that the noise contained in $y_J$ is independent of $[f_\theta \qut{y}]_J$ and has zero mean.
This enables the training of a model using a set of noisy images $y$ only. 
However, as the denoiser is a $\mathcal{J}$-invariant function, the images can not be perfectly reconstructed as the level of noise decreases to zero, i.e., it can not learn the identity mapping which is clearly not $\mathcal{J}$-invariant. The unused information from $y$ can be further leveraged if the noise model is known or can be estimated.  Krull et al. \cite{krull2019probabilistic} propose to combine $y$ with the network predictions, based on a probabilistic model for each pixel and a Bayesian formulation, to obtain the minimal MSE estimate.
The integration of statistical inference effectively removes the noise remained in the predictions. A downside of this method is that it requires an explicit expression of the posterior distribution of the noisy images beforehand, and it is not end-to-end because the network outputs intermediate results that are improved in the post-processing step.

\section{The proposed method}\label{sec:method}
We consider image restoration problems of the form
\begin{equation}\label{eq:generalproblem}
\bmy = A \bmx + \bmn \end{equation} 
in which $\bmx$ and $\bmn$ are random vectors of \textit{unknown distributions} representing the ground truth images and the noise respectively, $\bmy$ is the noisy image from which we want to restore $\bmx$, and $A$ is a linear forward operator determined by the data acquisition process.  
In this paper, random vectors are always denoted by boldface lower-case letters like $\bmx$, $\bmy$, $\bmz$, etc. For ease of presentation, in this section we first focus on denoising problems, i.e., $A:=I$ defines an identity map. The more general cases where $A$ is not the identity will be discussed in Subsection \ref{subsect:deblurring_1}. 

For Problem \eqref{eq:generalproblem}, the only assumption we make on the noise distribution is that it has zero mean conditioned on the image, i.e.,
\begin{itemize}[noitemsep]
    \setlength\itemindent{15pt} 
    \item[\textbf{(A1)}.] $\bbE\qut{\bmn \mid \bmx} = 0$.
\end{itemize}
This assumption holds for various common types of noises including Gaussian white noise, Poisson noise, as well as some mixed noises. We underline that the noise $\bmn$, however, are not assumed to be independent of the image or pixel-wise independent. %

The central issue in image denoising is to find an operator, denoted by $R$, that takes the noisy image $\bmy$ to the clean image $\bmx$ or its approximations. In this work we are interested in a class of denoisers that can be decomposed as
\begin{equation}\label{eq:part-lin-pure}
    R\qut{\bmy} = g\qut{\bmx} + L \bmn + \bme,
\end{equation}
where $g$ is a (possibly nonlinear) function of the clean image $\bmx$, $L:=L_{\bmx}$ is a linear operator and $\bme:= \bme_{\bmx,\bmn}$ is a residual term with \textbf{small variance}. 
In the rest of the paper, we omit the subscripts of  $L_{\bmx}$ and $\bme_{\bmx,\bmn}$ if there is no ambiguity.

\begin{remark}\label{remark1}
For a given denoiser, 
the decomposition \eqref{eq:part-lin-pure} always exists, but in this work we only consider the setting of $\bme$ having small variance. 
Furthermore, the decomposition \eqref{eq:part-lin-pure} is not unique, given the fact that $L$ can be an arbitrary  linear operator.
However, in order to characterize the structure of the denoiser, we assume $\bbE\qut{\bme \mid \bmx} = 0$ and let $L$ be chosen such that $\bbE\qut{ \qutn{\bme}^2 \mid \bmx}$ is minimized. As $\bmn$ has zero mean, the nonlinear term is then determined by $g\qut{\bmx} = \bbE\qut{R\qut{\bmy} \mid \bmx}.$
\end{remark}

If $R$ satisfies \eqref{eq:part-lin-pure} with $\bme$ of small variance, then we call $R$ a \textit{partially linear denoiser}. The first term $g\qut{\bmx}$, which can be nonlinear, does not depend on realizations of the noise $\bmn$. This formulation implies that the non-linearity of the denoisers in this class is mainly due to intrinsic image structures, and the denoisers respond to the noise in a linear (encoded in $L \bmn$) or less sensitive manner (encoded in the fact that $\bme$ has small variance). In fact, for any denoiser that can effectively remove noise from images, its output has to be minimally dependent on the changes of noise, and therefore when written in the form \eqref{eq:part-lin-pure}, the noise dependent components on the right hand side should be small in variance compared to the noise, which implies that $\bbE(\|\bme\|^2)$ is small.

The selection of a good denoiser $R$ requires knowledge of certain prior information about the target $\bmx$, especially when we want to find the fine details like edges from the corrupted image. Many conventional analytical approaches aim to find the reconstruction $R\qut{\bmy}$ from a lower dimensional space that the images lie in. This can be done, for example, by assuming some sparseness properties in the gradient fields or the wavelet domains. 

In a data-driven setting, given the clean image $\bmx$, one could instead minimize the \textit{mean square error} (MSE) defined as
$\mathcal{J}_{0}\qut{R} := \bbE( \qutn{R(\bmy) - \bmx}^2),$
where the expectation is taken over $\bmx$ and $\bmn$. The minimization of $\mathcal{J}_{0}$ leads to the conditional mean $R_{0}(\bmy):=\bbE\qut{\bmx \mid \bmy}$. Next, we will discuss how to approximate $R_{0}$ in the absence of the clean image $\bmx$. 

\subsection{Auxiliary random vectors}
The motivation for this paper comes from the fact that, in practice the distribution of $\bmx$ is often unknown, and samples of $\bmx$ (i.e., the ground truth images) or of the noise $\bmn$ are not readily available. What can be easily accessed are noisy observations $\bmy$. With these alone, however, a direct evaluation of the MSE is not possible.
In this work, we circumvent the need for $\bmx$ 
by introducing an \textit{auxiliary random vector} and replacing the MSE 
$\mathcal{J}_0$
by an approximation. 
First, let $\bmz$ be a random vector satisfying assumptions
\begin{itemize}[noitemsep]
    \setlength\itemindent{15pt}
    \item[\textbf{(A2)}]  the conditional mean $\bbE\qut{\bmz \mid \bmx} = 0$,
    \item[\textbf{(A3)}]  $\bmz$ is independent of $\bmn$,
    \item[\textbf{(A4)}]  
    the conditional covariance:
    \[{\rm Cov}\qut{\bmz, \bmz \mid \bmx} = {\rm Cov}\qut{\bmn, \bmn \mid \bmx}.\]
\end{itemize}
The auxiliary vector $\bmz$ does not necessarily need to follow the same distribution as $\bmn$. Samples of $\bmz$, therefore, can be easily generated from e.g., Gaussian distributions, once the variances of $\bmz$ are known.
Then, associated with $\bmz$, we define 
\begin{equation}\label{eq:nhat_yhat}
    \begin{cases}
        \bmnh & := \bmn + \alpha \bmz\\
        \bmyh & := \bmy + \alpha \bmz,
    \end{cases}
\end{equation}
where $\alpha$ is a real constant. The new random vector $\bmyh = \bmx + \bmnh$ can be regarded as a noisy version of image $\bmx$ with noise vector $\bmnh$. In the following, the discussion will focus on denoising $\bmyh$ rather than denoising $\bmy$, but the objective remains unchanged, i.e., getting the same clean image $\bmx$. Specifically, if one can obtain a high-quality solution $\bmx$ from $\bmyh$, then there exists an algorithm taking $\bmy$ to the clean image because $\bmyh$ can be computed  from $\bmy$. Such an algorithm can be achieved, for instance, by $R\qut{V\qut{\bmy}}$ where $R$ is a denoiser for the $\bmyh$ problem and $ V(\bmy):=\bmyh$. 
It should be noted that the difficulty of the denoising problem is raised because of the additional uncertainty from the auxiliary random vector $\bmz$ encoded in the observed data $\bmyh$. However, one of the benefits of considering $\bmyh$ is that the quantity $\bmz$ is known and can be leveraged for constructing approximations to the MSE without knowing $\bmx$ or $\bmn$ as we will show in the following.

For the $\bmyh$ denoising problem, the MSE associated with the denoiser $R$ is defined as
\begin{equation}\label{eq:mse-new}
    \mathcal{J}_{\rm mse}\qut{R} := \bbE\qut{ \qutn{R(\bmyh) - \bmx}^2 }
\end{equation}
where the expectation $\bbE$ is taken over random variables $\bmx$, $\bmn$ and $\bmz$. This is connected to $\mathcal{J}_0$ via $\mathcal{J}_{\rm mse} (R) = \mathcal{J}_{0} (R\qut{V})$. Since $V$ is known, it can be shown that $\min_R \mathcal{J}_0(R) \leq \min_R \mathcal{J}_{\rm mse}(R)$. Nevertheless, if $\alpha$ is close to zero, the noise distributions of $\bmy$ and $V(\bmy)$ are close, so the minimum of $\mathcal{J}_0$ can be approximated by the minimum of $\mathcal{J}_{\rm mse}$ which promises similar denoising quality.

Now, using the auxiliary vector $\bmz$, we define the following objective function for our proposed partially linear denoiser
\begin{equation}\label{eq:newmse}
    \mathcal{J}\qut{R} := \bbE\qut{ \qutn{R\qut{\bmyh} - \qut{\bmy - \bmz/\alpha}}^2 }.
\end{equation}
for $\alpha \neq 0$. Indeed, as we will see later in Subsections \ref{subsect:lmnme} and \ref{subsect:opld}, if we consider the partially linear denoising model, then $\mathcal{J}$ provides a good estimate of $\mathcal{J}_{\rm mse}$. More precisely, according to the definition \eqref{eq:part-lin-pure}, we consider a set of denoisers for $\bmyh$ that have the form 
\begin{equation}\label{eq:part-lin}
    R\qut{\bmyh} = g\qut{\bmx} + L \bmnh + \bme
\end{equation}
where $\bmnh$ is defined in \eqref{eq:nhat_yhat}, and $L$ and $\bme$ are depending on $\bmx$ and $\qut{\bmx,\bmnh}$, respectively, and the residual $\bme$ is of small variance.

In the rest of this section, we show that for $\alpha\!\neq\! 0$ and partially linear denoiser $R$, the term $\mathcal{J}\qut{R}$ in \eqref{eq:newmse}
approximates the MSE $\mathcal{J}_{\rm mse}\qut{R}$ up to an additive constant.
This implies that an optimal denoiser of this class can be computed even if the distribution of ground truth images $\bmx$ and the distribution of $\bmn$ are not known. We then discuss unsupervised learning methods based on our proposed objective $\mathcal{J}\qut{R}$
for image restoration tasks like denoising and deblurring. 

\subsection{Linear minimum mean square error estimator}\label{subsect:lmnme}
To start with, we focus the discussion on the simplest case, in which the denoiser $R$ is linear.
This is a subset of the set of partially linear denoisers \eqref{eq:part-lin} with $g = L$ and $\bme = 0$.
The best estimator in this setting is the linear minimum mean square error (LMMSE) estimator, i.e., the minimizer of 
$\mathcal{J}_{\rm mse}\qut{R}$ in \eqref{eq:mse-new}
over all linear denoisers $R$.
The following proposition establishes the equivalence between the MSE loss and $\mathcal{J}\qut{R}$ in \eqref{eq:newmse} over linear denoisers.

\begin{proposition}\label{prop:1}
If $\bmn$ and $\bmz$ satisfy Assumption (A1) - (A4) and $R$ is linear, then there is some constant $c$ not depending on $R$, such that 
\begin{equation}\label{eq:prop1-eq0}
\mathcal{J}\qut{R}
= 
\bbE\qut{ \qutn{{R \bmyh  - \bmx}}^2} + c.
\end{equation}
\end{proposition}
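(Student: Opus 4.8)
The plan is to expand both $\mathcal{J}(R)$ and $\bbE\qut{\qutn{R\bmyh-\bmx}^2}$ into sums of inner‑product terms, exploiting the linearity of $R$ together with the decompositions $\bmyh = \bmx + \bmn + \alpha\bmz$ and $\bmy - \bmz/\alpha = \bmx + \bmn - \bmz/\alpha$, and then to check that the two expansions differ only by a quantity that does not involve $R$. Concretely, set $\bm{u} := R\bmx - \bmx$, so that
$R\bmyh - \qut{\bmy - \bmz/\alpha} = \bm{u} + \qut{R\bmn - \bmn} + \qut{\alpha R\bmz + \bmz/\alpha}$
and $R\bmyh - \bmx = \bm{u} + R\bmn + \alpha R\bmz$; both squared norms then share the common term $\bm{u}$ (the \emph{signal error}), and squaring each produces six inner‑product terms.

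First I would eliminate every cross term that pairs $\bm{u}$ with a noise or auxiliary component. Conditioning on $\bmx$, the vector $\bm{u}$ is deterministic, while $\bbE\qut{R\bmn - \bmn \mid \bmx} = 0$ and $\bbE\qut{R\bmn\mid\bmx}=0$ by (A1), and $\bbE\qut{\alpha R\bmz + \bmz/\alpha\mid\bmx}=0$ and $\bbE\qut{\alpha R\bmz\mid\bmx}=0$ by (A2); hence all such terms vanish after taking $\bbE$. Next, the cross terms $\bbE\qutan{R\bmn-\bmn,\ \alpha R\bmz+\bmz/\alpha}$ and $\bbE\qutan{R\bmn,\ \alpha R\bmz}$ each reduce, after writing the inner products as traces of the form ${\rm tr}\qut{M\,\bbE\quts{\bmz\bmn^\top}}$ with $M$ assembled from $R$ and $\alpha$, to multiples of $\bbE\quts{\bmz\bmn^\top}$; by (A3) together with $\bbE\quts{\bmz} = \bbE\qut{\bbE\quts{\bmz\mid\bmx}} = 0$ (from (A2)) we get $\bbE\quts{\bmz\bmn^\top} = \bbE\quts{\bmz}\bbE\quts{\bmn}^\top = 0$, so these vanish as well. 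After this, the difference $\mathcal{J}(R) - \bbE\qut{\qutn{R\bmyh-\bmx}^2}$ collapses to $\bbE\qut{\qutn{\bmn}^2} + \bbE\qut{\qutn{\bmz}^2}/\alpha^2 - 2\bbE\qutan{R\bmn,\bmn} + 2\bbE\qutan{R\bmz,\bmz}$.

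The crux is to show the last two $R$‑dependent quadratic forms cancel, and this is exactly where the covariance‑matching assumption (A4) enters. I would write $\bbE\qutan{R\bmn,\bmn} = {\rm tr}\qut{R\,\bbE\quts{\bmn\bmn^\top}}$ and $\bbE\qutan{R\bmz,\bmz} = {\rm tr}\qut{R\,\bbE\quts{\bmz\bmz^\top}}$, then use the tower property and (A1) to get $\bbE\quts{\bmn\bmn^\top} = \bbE\qut{{\rm Cov}\qut{\bmn,\bmn\mid\bmx}}$, and similarly $\bbE\quts{\bmz\bmz^\top} = \bbE\qut{{\rm Cov}\qut{\bmz,\bmz\mid\bmx}}$ via (A2); these are equal by (A4), so the two traces agree and the $R$‑dependent remainder is zero. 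What is left is $c := \bbE\qut{\qutn{\bmn}^2} + \bbE\qut{\qutn{\bmz}^2}/\alpha^2$, which depends only on $\bmn$, $\bmz$ and $\alpha$ but not on $R$, giving \eqref{eq:prop1-eq0}. I do not anticipate a genuine obstacle — the argument is essentially linear‑algebra bookkeeping — but care is needed to organize the roughly dozen inner‑product terms around the common factor $\bm{u}$ and to keep the $\bmz/\alpha$ pieces in $\mathcal{J}$ lined up so that (A4) is precisely the ingredient that closes the cancellation; choosing a slightly different $L$‑ or $\bmz$‑scaling would break it.
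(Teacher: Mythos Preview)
Your proof is correct and uses the same ingredients as the paper's: the zero‑mean assumptions (A1)–(A2) to kill the signal–noise cross terms, independence (A3) to kill the $\bmn$–$\bmz$ cross terms, and the trace/covariance identity with (A4) to cancel the two $R$‑dependent quadratic forms, arriving at the same constant $c=\bbE\qut{\qutn{\bmn-\bmz/\alpha}^2}=\bbE\qut{\qutn{\bmn}^2}+\bbE\qut{\qutn{\bmz}^2}/\alpha^2$. The only difference is organizational: the paper writes $R\bmyh-(\bmy-\bmz/\alpha)=(R\bmyh-\bmx)-(\bmn-\bmz/\alpha)$ and expands once, treating $R\bmyh-\bmx$ as a single block, so it has just two cross terms to dispatch rather than the dozen you manage — but this is bookkeeping, not a different argument.
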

\begin{proof}
If $R$ is linear, then from the definitions of $\bmy$ and $\bmyh$,
\[
\begin{split}
    R{\bmyh} - \qut{\bmy -  \bmz / \alpha} 
     =  \qut{R{\bmyh} - \bmx}
     - \qut{\bmn - \bmz/\alpha},
\end{split}
\]
and $R{\bmyh} - \bmx = \qut{R \bmx  - \bmx}+\qut{R\bmn + \alpha R\bmz }$. 
Let $\qutan{\cdot,\cdot}$ denote the inner product operator. Then, using the above, \eqref{eq:newmse} can be rewritten as
\begin{equation}\label{eq:lmmse-decompose}
    \begin{split}
    \mathcal{J}\qut{R}
    = 
    & \bbE\qut{ \qutn{R\bmyh - \bmx}^2}
     + \bbE\qut{ \qutn{\bmn - \bmz/\alpha}^2} \\
    & - 2\bbE\qut{ \qutan{R\bmx - \bmx, \bmn - \bmz/\alpha}} \\
    &
    - 2\bbE\qut{ \qutan{R\bmn+\alpha R\bmz, \bmn - \bmz/\alpha}}.
    \end{split}
\end{equation}
Here the expectations are taken over $\bmx$, $\bmn$ and $\bmz$. We next show that the last two terms on the right hand side of \eqref{eq:lmmse-decompose} vanish.

First, since $\bmn$ has zero mean conditioned on $\bmx$ according to Assumption (A1), it holds that 
$
\bbE\qut{\qutan{R{\bmx}-\bmx, \bmn}} = 
\bbE\qut{ \bbE\qut{\qutan{R{\bmx} - \bmx, \bmn} \mid \bmx} } = 0.
$
The same property applies to $\bmz$ by Assumption (A2), so $\bbE\qut{\qutan{R{\bmx}-\bmx, \bmz}} = 0$. With a linear combination of these two equalities, 
\begin{equation}\label{eq:prop1_eq1}
\bbE\qut{ \qutan{R \bmx - \bmx, \bmn - \bmz/\alpha}} = 0.     
\end{equation}

Second, given the fact that $R$ is linear and both $\bmn$ and $R\bmn$ have zero mean conditioned on $\bmx$, we have
\[
\begin{split}
  \bbE\qut{\qutan{ R{\bmn}, \bmn }\mid \bmx}
= & {\rm tr} \qut{{\rm Cov}\qut{ R{\bmn}, \bmn \mid \bmx}} \\
= & {\rm tr} \qut{R \ {\rm Cov}\qut{\bmn, \bmn \mid \bmx}} \\
\end{split}
\]
in which ${\rm tr}\qut{\cdot}$ denotes the trace of the matrix. The same equality holds for $\bmz$, i.e.,
\[
\bbE\qut{\qutan{ R{\bmz}, \bmz }\mid \bmx} = 
 {\rm tr} \qut{R \ {\rm Cov}\qut{\bmz, \bmz \mid \bmx}}
\]
Therefore, it follows from Assumption (A4) that 
$
\bbE\qut{\qutan{ R{\bmn}, \bmn }\mid \bmx}
=
\bbE\qut{\qutan{ R{\bmz}, \bmz }\mid \bmx}. 
$
This together with Assumption (A3) gives
\begin{equation}\label{eq:noise_compensate}
\begin{split}
& \bbE\qut{\qutan{ R{\bmn} + \alpha R{\bmz}, \bmn - \bmz/\alpha } \mid \bmx}\\ 
= & \bbE\qut{\qutan{ R{\bmn}, \bmn }\mid \bmx} +
\bbE\qut{\qutan{\alpha R{\bmz},  -\bmz/\alpha }\mid \bmx} 
=  0.
\end{split}
\end{equation}
Taking expectation with respect to $\bmx$, we have
\begin{equation}\label{eq:prop1_eq2}
\bbE\qut{ \qutan{R\bmn+\alpha R \bmz, \bmn - \bmz/\alpha}} = 0.
\end{equation}

Finally, let $c:= \bbE\qut{\qutn{\bmn - \bmz/\alpha}^2}$, then Equation \eqref{eq:lmmse-decompose}, \eqref{eq:prop1_eq1} and  \eqref{eq:prop1_eq2} imply
\eqref{eq:prop1-eq0}
which completes the proof. 
\end{proof}

According to Proposition \ref{prop:1}, for all linear denoisers, the term $\mathcal{J}\qut{R}$ in \eqref{eq:newmse} differs from the MSE by a constant $c$ not depending on $R$.
Therefore a minimization of $\mathcal{J}\qut{R}$ over all linear $R$ also leads to the LMMSE estimator. We remark that the objective function \eqref{eq:newmse} is not defined based on the ground truth image $\bmx$, and the distribution of the random vector $\bmn$ is not necessarily known or equal to that of $\bmz$. 
In fact, the denoising problem can be reformulated as follows. Given $R$, the quantity $R\bmyh$ is known, and we want to decouple the term $R\bmyh-\bmx$ from
\begin{equation}\label{eq:compensate-1}
   R\bmyh-\bmy=(R\bmyh-\bmx)-\bmn.
\end{equation}
In the loss $\mathcal{J}\qut{R}$, this is implemented by adding the scaled auxiliary vector $\bmz/\alpha$ to both sides of \eqref{eq:compensate-1}. The vector $\bmz/\alpha$ compensates the noise $\bmn$ in the sense that, when taking the expectation of $\qutn{R\bmyh - \bmy+\bmz/\alpha}^2$ over $\bmn$ and $\bmz$, the $\bmn$-related term $\bbE\qut{\qutan{ R{\bmn}, \bmn }\mid \bmx}$ is canceled by the $\bmz$-related term $\bbE\qut{\qutan{ R{\bmz}, \bmz }\mid \bmx}$ (i.e., the last equality in \eqref{eq:noise_compensate}), which leads to \eqref{eq:prop1-eq0}.

\subsection{Optimal partially linear denoiser}\label{subsect:opld}
Though linear denoisers have nice properties that link \eqref{eq:newmse} to the MSE, they may not be the best denoisers for imaging data. Nonlinearity is unavoidable in order to achieve good denoising quality and preserve fine structures such as edges in the image. 
To this end, we consider a more general class of denoisers that are partially linear, as expressed in \eqref{eq:part-lin}, where $g$ is nonlinear and $\bme$ is not necessarily zero. 

In a special case where the denoiser outputs exactly the clean image, i.e. $R\qut{\bmyh} := \bmx$, $R$ is also partially linear with $L$ and $\bme$ being both zero. However, such denoisers do not exist in most settings. Nevertheless, for good denoisers one could still expect the residual term $\bme$ to be small.

The equivalence of the two objective functions stated in Proposition \ref{prop:1} does not hold for nonzero $\bme$. It is therefore of interest to know how well $\eqref{eq:newmse}$ approximates the MSE in this general case. The next proposition quantifies the approximation error in the presence of a small nonzero residual term $\bme$.

\begin{proposition}\label{prop:2}
    If $\bmn$ and $\bmz$ satisfy Assumption (A1) - (A4) and $R$ satisfies \eqref{eq:part-lin}, then there is some constant $c$ not depending on $R$, such that 
    \begin{equation}\label{eq:part-lin-mse}
        \mathcal{J}\qut{R}
        = 
    \bbE\qut{ \qutn{{R(\bmyh) - \bmx}}^2} - 2 \bbE\qut{\qutan{\bme, {\bmn - \bmz/\alpha}}} + c
    \end{equation}
    Additionally, if the variance $\bbE\qut{\qutn{\bme}^2} \leq \epsilon^2$ for $\epsilon>0$, then the approximation error 

\begin{equation}\label{eq:bound1}
        \begin{split}
        {\rm Err} := & \left| %
        \mathcal{J}\qut{R}-
        \bbE(\qutn{{R(\bmyh) - \bmx}}^2) - c \right| 
        \\
        \leq  
        & 
        2 \epsilon \sqrt{ \bbE(\qutn{ \bmn - \bmz/\alpha }^2) }
    \end{split}
\end{equation}
If furthermore the residual $\bme\qut{\bmx, \bmnh}$ is Lipschitz continuous with respect to $\bmnh$ 
i.e., $\|\bme\qut{\bmx, \bmnh_1 } - \bme\qut{\bmx, \bmnh_2}\| \leq K \| \bmnh_1 - \bmnh_2 \|$ for some constant $K$, then 
\begin{equation}\label{eq:boudn2}
      {\rm Err} \leq 2\epsilon \sqrt{ \bbE(\qutn{ \bmn }^2) } + 2K \bbE\qut{\qutn{\bmz}^2}.
    \end{equation}
\end{proposition}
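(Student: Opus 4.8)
The plan is to mirror the expansion in the proof of Proposition~\ref{prop:1} and to collect the one extra term produced by the residual $\bme$. As there, I would write $R\qut{\bmyh} - \qut{\bmy - \bmz/\alpha} = \qut{R\qut{\bmyh} - \bmx} - \qut{\bmn - \bmz/\alpha}$ and expand the square in \eqref{eq:newmse}, giving
\[
\mathcal{J}\qut{R} = \bbE\qut{\qutn{R\qut{\bmyh} - \bmx}^2} + \bbE\qut{\qutn{\bmn - \bmz/\alpha}^2} - 2\bbE\qut{\qutan{R\qut{\bmyh} - \bmx,\, \bmn - \bmz/\alpha}}.
\]
Using \eqref{eq:part-lin} together with $\bmnh = \bmn + \alpha\bmz$ from \eqref{eq:nhat_yhat}, I would decompose $R\qut{\bmyh} - \bmx = \qut{g\qut{\bmx} - \bmx} + \qut{L\bmn + \alpha L\bmz} + \bme$ and split the cross term into three pieces. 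The $g\qut{\bmx} - \bmx$ piece vanishes by (A1) and (A2) exactly as in \eqref{eq:prop1_eq1}, since $g\qut{\bmx} - \bmx$ is deterministic given $\bmx$. The $L\bmn + \alpha L\bmz$ piece vanishes by (A1)--(A4) via the same trace-and-independence computation as in \eqref{eq:noise_compensate}--\eqref{eq:prop1_eq2}, with $L = L_\bmx$ (a fixed linear operator once we condition on $\bmx$) playing the role of the linear $R$ there. Setting $c := \bbE\qut{\qutn{\bmn - \bmz/\alpha}^2}$, the same constant as in Proposition~\ref{prop:1}, this leaves only $-2\bbE\qut{\qutan{\bme, \bmn - \bmz/\alpha}}$, which is \eqref{eq:part-lin-mse}; in particular ${\rm Err} = 2\,\qutb{\bbE\qut{\qutan{\bme, \bmn - \bmz/\alpha}}}$.

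For \eqref{eq:bound1} I would apply the Cauchy--Schwarz inequality in $L^2$, $\qutb{\bbE\qut{\qutan{\bme, \bmn - \bmz/\alpha}}} \le \sqrt{\bbE\qut{\qutn{\bme}^2}}\,\sqrt{\bbE\qut{\qutn{\bmn - \bmz/\alpha}^2}}$, and then use $\bbE\qut{\qutn{\bme}^2} \le \epsilon^2$.

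For the Lipschitz bound \eqref{eq:boudn2} I would split $-2\bbE\qut{\qutan{\bme, \bmn - \bmz/\alpha}} = -2\bbE\qut{\qutan{\bme, \bmn}} + \tfrac{2}{\alpha}\bbE\qut{\qutan{\bme, \bmz}}$ and handle the two halves differently. For the $\bmn$-half, plain Cauchy--Schwarz again gives $\qutb{\bbE\qut{\qutan{\bme,\bmn}}} \le \epsilon\sqrt{\bbE\qut{\qutn{\bmn}^2}}$. For the $\bmz$-half I would write $\bme = \bme\qut{\bmx,\bmnh} = \bme\qut{\bmx,\bmn} + \qut{\bme\qut{\bmx,\bmn + \alpha\bmz} - \bme\qut{\bmx,\bmn}}$; the first summand depends only on $\qut{\bmx,\bmn}$, hence is independent of $\bmz$ conditioned on $\bmx$ (using (A3) as in Proposition~\ref{prop:1}), and since $\bbE\qut{\bmz \mid \bmx} = 0$ by (A2) its inner product with $\bmz$ has zero expectation. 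Thus $\tfrac1\alpha\bbE\qut{\qutan{\bme,\bmz}} = \tfrac1\alpha\bbE\qut{\qutan{\bme\qut{\bmx,\bmn+\alpha\bmz} - \bme\qut{\bmx,\bmn},\, \bmz}}$, and the Lipschitz hypothesis bounds the increment by $K\qutn{\alpha\bmz}$, so after the $1/\alpha$ cancels this half is at most $K\bbE\qut{\qutn{\bmz}^2}$ in magnitude; adding the two contributions and multiplying by $2$ yields \eqref{eq:boudn2}.

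I expect this last step to be the main obstacle: one must resist applying Cauchy--Schwarz naively to the $\bmz$-half, which would reintroduce an $\epsilon$ and a diverging $1/\alpha$ factor, and instead observe that the $\alpha$-independent part $\bme\qut{\bmx,\bmn}$ decorrelates with $\bmz$, so only the Lipschitz increment survives and its $\alpha$ cancels the $1/\alpha$ --- this is exactly what makes \eqref{eq:boudn2}, unlike \eqref{eq:bound1}, remain bounded as $\alpha \to 0$. A minor technical caveat, already implicit in the proof of Proposition~\ref{prop:1}, is that the vanishing cross terms rely on conditional independence of $\bmn$ and $\bmz$ given $\bmx$, which is slightly stronger than the plain independence stated in (A3).
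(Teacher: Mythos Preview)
Your proposal is correct and follows essentially the same route as the paper's proof: the same expansion of $\mathcal{J}\qut{R}$, the same vanishing of the $g\qut{\bmx}-\bmx$ and $L\bmnh$ cross terms via (A1)--(A4), the same choice $c=\bbE\qut{\qutn{\bmn-\bmz/\alpha}^2}$, Cauchy--Schwarz for \eqref{eq:bound1}, and for \eqref{eq:boudn2} the same split $\bme=\bme\qut{\bmx,\bmn}+\qut{\bme\qut{\bmx,\bmn+\alpha\bmz}-\bme\qut{\bmx,\bmn}}$ with the first piece decorrelating from $\bmz$ and the Lipschitz increment absorbing the $1/\alpha$. Your closing remark about needing conditional independence of $\bmn$ and $\bmz$ given $\bmx$ (rather than the unconditional independence literally stated in (A3)) is accurate and is indeed left implicit in the paper.
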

\begin{proof}
By the decomposition of $R$ in \eqref{eq:part-lin},
\begin{equation}\label{eq:prop2-eq1}
    \begin{split}
    R\qut{\bmyh} - \qut{\bmy - \bmz/\alpha} 
     = & \qut{g\qut{\bmx} + L \bmnh + \bme - \bmx} \\ & - \qut{\bmn - \bmz/\alpha},
\end{split}
\end{equation}
where $\hat{\bmn}$ is defined in \eqref{eq:nhat_yhat}. For function $g\qut{\cdot}$, from assumption (A1) and (A2) it is clear that
$\bbE\qut{\qutan{g\qut{\bmx}, {\bmn - \bmz/\alpha}} \mid \bmx} = 0$. Therefore,
\begin{equation}\label{eq:prop2-eq2-1}
     \bbE\qut{ \qutan{g\qut{\bmx}-\bmx, {\bmn - \bmz/\alpha}}} = 0.
\end{equation}
Moreover, as $L$ is linear, with a similar argument to the proof of \eqref{eq:prop1_eq2}, we can show that
\begin{equation}\label{eq:prop2-eq2-2}
     \bbE\qut{ \qutan{L\bmnh, {\bmn - \bmz/\alpha}}} = 0. 
\end{equation}
According to Equation \eqref{eq:prop2-eq1}, \eqref{eq:prop2-eq2-1} and \eqref{eq:prop2-eq2-2}, if we let $c:= \bbE\qut{\|\bmn - \bmz/\alpha\|^2}$, then Equation \eqref{eq:part-lin-mse} holds.

Next, we consider the bound of $\bbE\qut{\qutan{\bme, {\bmn - \bmz/\alpha}}}$.
Assuming that $\bbE(\qutn{\bme}^2) \leq \epsilon^2$, 
a straightforward application of Cauchy-Schwarz inequalities leads to 
\begin{equation}\label{eq:CS-ineq}
    \begin{split}
    \left| \bbE \qut{ \qutan{\bme, {\bmn - \bmz/\alpha}} } \right| 
    & \leq \sqrt{ \bbE\qut{\qutn{\bme}^2}} 
    \sqrt{ \bbE\qut{\qutn{ \bmn - \bmz/\alpha }^2} } \\
    & = \epsilon 
    \sqrt{ \bbE\qut{\qutn{ \bmn - \bmz/\alpha }^2} },
    \end{split}
\end{equation}
and therefore Equation \eqref{eq:bound1} holds.

Furthermore, assume that $\bme:= \bme\qut{\bmx, \bmn + \alpha \bmz}$ is Lipschitz continues w.r.t. $\bmnh:=\bmn + \alpha \bmz$ with Lipschitz constant $K$.
Since $\bmz$ has zero mean conditioned on $\bmx$ and $\bmn$, the conditional expectation
\[
\begin{split}
\bbE \qut{ \qutan{\bme, \bmz/\alpha} \mid \bmx, \bmn }
 = 
& \bbE \qut{ \qutan{\bme_0, \bmz/\alpha} \mid \bmx, \bmn } \\
& + \bbE \qut{ \qutan{\bme - \bme_0, \bmz/\alpha} \mid \bmx, \bmn }\\
\leq & 
0 + 
\bbE \qut{ K \qutan{\bmz, \bmz} \mid \bmx, \bmn },
\end{split}
\]
where $\bme_0 := \bme\qut{\bmx, \bmn + \bm{0}}$. So $\bbE \qut{ \qutan{\bme, \bmz/\alpha}}\leq K \bbE\qut{\qutn{\bmz}^2}$, and because of symmetry we have $\bbE \qut{ \qutan{\bme, \bmz/\alpha}}\geq - K \bbE\qut{\qutn{\bmz}^2}$.  Finally, 
\[
\begin{split}
\left| \bbE \qut{ \qutan{\bme, {\bmn - \bmz/\alpha}} } \right| \leq & \left| \bbE \qut{ \qutan{\bme, \bmn} } \right| + \left| \bbE \qut{ \qutan{\bme, {\bmz/\alpha}} } \right|\\
\leq & \epsilon \sqrt{ \bbE\qut{\qutn{ \bmn }^2} } + K \bbE\qut{\qutn{\bmz}^2}
\end{split}
\]
and the inequality \eqref{eq:boudn2} follows. 
\end{proof}

From the bounds given in Proposition \ref{prop:2}, if the variance of $\bme$ is small, then the error \eqref{eq:newmse} provides a good estimate to the MSE. Note that making a Lipschitz continuity assumption on $\bme$, the error bound in \eqref{eq:bound1} can be further improved to \eqref{eq:boudn2} which is independent of $\alpha$ (in contrast to \eqref{eq:bound1}  where the bound has a $1/\alpha$ factor). In practice, this Lipschitz continuity assumption is not restrictive as we expect that the denoiser $R$ is stable with respect to small perturbations in $\bmyh$ and therefore has a small Lipschitz constant $K$ for $\bme$.
\begin{remark}[Estimated noise variance]\label{remark2}
If Assumption (A4) does not hold, then the bound \eqref{eq:bound1} also depends on the linear operator $L$. Consider for instance ${\rm Cov}\qut{\bmz, \bmz \mid \bmx} \!=\! (1+\beta) {\rm Cov}\qut{\bmn, \bmn \mid \bmx}$ where $\beta\!>\!-1$. Then Equation \eqref{eq:part-lin-mse} becomes 
$\mathcal{J}\qut{R} = \bbE\qut{ \qutn{{R(\bmyh) - \bmx}}^2} - 2 \bbE\qut{\qutan{\bme, {\bmn - \bmz/\alpha}}} + c + 2 \beta \bbE\qut{\qutan{\bmz, L\bmz}}
$, the proof of which is given in Appendix A. The minimization of $J\qut{R}$ therefore favors denoisers with small $\beta \bbE\qut{\qutan{\bmz, L\bmz}}$. Such a property can be used to construct criteria for whether the noise variance is underestimated ($\beta\!<\!0$) or overestimated ($\beta\!>\!0$), e.g., by checking the value of $\bbE\qut{\qutan{\bmz, L\bmz}}$ for the minimizer of $\mathcal{J}$. Examples will be given in Section \ref{sect:denoisingexp}. 
\end{remark}

Next, we take a closer look at the errors of the proposed denoisers. We first introduce the set $\mathcal{R}_\epsilon$ of partially linear denoisers. Second, the distance between the minimizers of $\mathcal{J}$ and $\mathcal{J}_{\rm mse}$ over $\mathcal{R}_\epsilon$ is estimated.
Then we verify the convergence of the proposed denoisers to the best denoisers for corrupted images generated from a single ground truth image. Finally, focusing on a special subset of images that consists of constant patches, we demonstrate the partial linearity of the optimal denoisers with an example.

For $\epsilon>0$, let $\mathcal{R}_\epsilon$ be the set of denoiser satisfying \eqref{eq:part-lin} with variance of $\bme$ less than or equal to $\epsilon^2$, i.e.,
\[
\begin{split}
\mathcal{R}_\epsilon := & \left\{ R \mid \exists g, L, \bme,\ \text{such that}\  R\qut{\bmyh} = g\qut{\bmx} + L \bmnh + \bme, \
\right.\\ 
&\left. L \ \text{is linear, and} \ \bbE\qut{\qutn{\bme}^2} \leq \epsilon^2 \right\}.    
\end{split}
\]

\begin{lemma}[Convexity]\label{lm:convex}
For any $\epsilon \geq 0$, $\mathcal{R}_\epsilon$ is a convex set. 
\end{lemma}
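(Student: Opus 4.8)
The plan is to verify convexity directly from the stated (existential) definition of $\mathcal{R}_\epsilon$: given two denoisers $R_1, R_2 \in \mathcal{R}_\epsilon$ and a weight $\lambda \in [0,1]$, I would show that the convex combination $R := \lambda R_1 + (1-\lambda) R_2$ again belongs to $\mathcal{R}_\epsilon$. Note that $R$, being a pointwise convex combination of two maps from images to images, is itself a well-defined denoiser, so the only thing to check is that it admits a decomposition of the form \eqref{eq:part-lin} with the required variance bound.

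First I would unpack the membership of $R_1$ and $R_2$. By definition there exist functions $g_1, g_2$ of $\bmx$, linear operators $L_1, L_2$, and residual random vectors $\bme_1, \bme_2$ with $R_i\qut{\bmyh} = g_i\qut{\bmx} + L_i \bmnh + \bme_i$ and $\bbE\qut{\qutn{\bme_i}^2} \leq \epsilon^2$ for $i \in \{1,2\}$. Forming the convex combination termwise gives
\[
R\qut{\bmyh} = \qut{\lambda g_1\qut{\bmx} + (1-\lambda) g_2\qut{\bmx}} + \qut{\lambda L_1 + (1-\lambda) L_2}\bmnh + \qut{\lambda \bme_1 + (1-\lambda)\bme_2},
\]
so the natural candidate decomposition for $R$ uses $g := \lambda g_1 + (1-\lambda) g_2$ (still a function of $\bmx$ alone), $L := \lambda L_1 + (1-\lambda) L_2$ (a convex combination of linear operators, hence linear), and $\bme := \lambda \bme_1 + (1-\lambda)\bme_2$.

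The only nontrivial point is the variance constraint on $\bme$. Here I would invoke convexity of the map $v \mapsto \qutn{v}^2$ on Euclidean space, which gives the pointwise inequality $\qutn{\lambda \bme_1 + (1-\lambda)\bme_2}^2 \leq \lambda \qutn{\bme_1}^2 + (1-\lambda)\qutn{\bme_2}^2$; taking expectations and using linearity of $\bbE$ then yields $\bbE\qut{\qutn{\bme}^2} \leq \lambda \epsilon^2 + (1-\lambda)\epsilon^2 = \epsilon^2$, so $R \in \mathcal{R}_\epsilon$. I do not anticipate any real obstacle: the statement is essentially a bookkeeping consequence of the fact that each defining ingredient — being a function of $\bmx$, being linear, and having second moment at most $\epsilon^2$ — is preserved under convex combinations. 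The single point requiring a moment's care is to bound $\bbE\qut{\qutn{\bme}^2}$ via convexity of the squared norm (equivalently, the triangle inequality followed by convexity of $t \mapsto t^2$) rather than by naively distributing the square across the sum.
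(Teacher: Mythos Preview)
Your proposal is correct and matches the paper's own proof essentially line for line: both unpack the decompositions of $R_1,R_2$, form the convex combination termwise, and note that the resulting $g$, $L$, and $\bme$ satisfy the defining conditions. The only difference is cosmetic---you spell out the convexity of $v\mapsto\qutn{v}^2$ to justify $\bbE\qut{\qutn{\bme}^2}\leq\epsilon^2$, whereas the paper simply asserts this bound.
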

\begin{proof}
If $R_1, R_2 \in \mathcal{R}_\epsilon$, then there exist $g_1, g_2$, linear operators $L_1, L_2$, and residual terms $\bme_1, \bme_2$, such that 
\[
R_i\qut{\bmyh} = g_i\qut{\bmx} + L_i \bmnh + \bme_i, \quad 
\bbE\qut{\qutn{\bme_i}^2} \leq \epsilon^2, \
i \in \{1,2\}.
\]
For any $\lambda_1 \in [0,1]$ and $\lambda_2 := 1 - \lambda_1$, 
\[
\begin{split}
\lambda_1 R_1\qut{\bmyh} + \lambda_2 R_2\qut{\bmyh} = & \qut{\sum_{i=1}^2 \lambda_i g_i }\qut{\bmx}
+ \qut{\sum_{i=1}^2 \lambda_i L_i } \qut{\bmnh} \\ & + \sum_{i=1}^2 \lambda_i \bme_i.     
\end{split}
\]
It is easy to see that $\sum_{i=1}^2 \lambda_i L_i $ is also linear and 
$ \bbE\qut{\|\sum_{i=1}^2 \lambda_i \bme_i\|}^2 \leq \epsilon^2$. Therefore $\lambda R_1+\lambda_2 R_2 \in \mathcal{R}_\epsilon$ which implies that $\mathcal{R}_\epsilon$ is convex.
\end{proof}

We are interested in the best denoisers in $\mathcal{R}_\epsilon$ with as small MSE as possible. 
More precisely, for small $\delta\geq0$, we define the MMSE denoiser, denoted by $R^{\epsilon,\delta}$, as one of the denoisers in $\mathcal{R}_\epsilon$ that satisfies
\begin{equation}\label{eq:rmmse}
\mathcal{J}_{\rm mse}\qut{R^{\epsilon,\delta}} \leq \inf_{R \in \mathcal{R}_\epsilon} \mathcal{J}_{\rm mse}\qut{R} + \delta,
\end{equation}
where $\mathcal{J}_{\rm mse}$ are defined in \eqref{eq:mse-new}. Note that $R^{\epsilon,\delta}$ exists for arbitrarily small positive $\delta$. 
Based on the approximation properties stated in Proposition \ref{prop:2}, we propose a denoiser   
 $\hat{R}^{\epsilon,\delta} \in \mathcal{R}_\epsilon$ satisfying
\begin{equation}\label{eq:rhatmmse} \mathcal{J}\qut{\hat{R}^{\epsilon,\delta}} \leq \inf_{R \in \mathcal{R}_\epsilon} \mathcal{J}\qut{R} + \delta
\end{equation}
as an alternative version of $R^{\epsilon,\delta}$. %
In light of the theoretical bound \eqref{eq:bound1},
the distance between $R^{\epsilon,\delta}$ and $\hat{R}^{\epsilon,\delta}$ has an upper bound described in the following proposition. 

\begin{restatable}{proposition}{propositiond}\label{prop:4}
    Let the denoisers $R^{\epsilon,\delta}$ and $\hat{R}^{\epsilon,\delta}$ be given by \eqref{eq:rmmse} and \eqref{eq:rhatmmse} respectively. If $\bmn$ and $\bmz$ satisfy Assumption (A1) - (A4) and $0 < \delta < 1$, it holds that
    \[
    \begin{split}
    & \bbE\qut{\qutn{R^{\epsilon,\delta}\qut{\bmyh} - \hat{R}^{\epsilon,\delta}\qut{\bmyh}}^2} \\
    \leq & \frac{2 \epsilon}{1-\sqrt{\delta}} \sqrt{ \bbE(\qutn{ \bmn - \bmz/\alpha }^2) } + \frac{\sqrt{\delta}}{1-\sqrt{\delta}},
    \end{split}
    \]
    where the expectation is taken over $\bmx$, $\bmn$ and $\bmz$.
\end{restatable}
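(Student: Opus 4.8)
The plan is to reformulate everything inside the Hilbert space $H$ of square-integrable random vectors and reduce the statement to a stability estimate for near-optimal projections onto a convex set. I set $a := \bmx$ and $b := \bmy - \bmz/\alpha$, regarded as elements of $H$, and let $\mathcal{C} := \{\, R(\bmyh) : R \in \mathcal{R}_\epsilon \,\} \subseteq H$. By Lemma \ref{lm:convex} together with $(\lambda R_1 + (1-\lambda)R_2)(\bmyh) = \lambda R_1(\bmyh) + (1-\lambda)R_2(\bmyh)$, the set $\mathcal{C}$ is convex. Since $\mathcal{J}_{\rm mse}(R) = \qutn{R(\bmyh) - a}^2$ and $\mathcal{J}(R) = \qutn{R(\bmyh) - b}^2$ are both squared $H$-distances, the defining inequalities \eqref{eq:rmmse} and \eqref{eq:rhatmmse} say precisely that $u_1 := R^{\epsilon,\delta}(\bmyh)$ and $u_2 := \hat{R}^{\epsilon,\delta}(\bmyh)$ are $\delta$-approximate minimizers over $\mathcal{C}$ of $\qutn{\cdot - a}^2$ and $\qutn{\cdot - b}^2$ respectively. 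Writing $T := \bbE\qutn{u_1 - u_2}^2$, the goal is to bound $T$; note also that $a - b = -(\bmn - \bmz/\alpha)$, so $\qutn{a-b}^2 = \bbE\qutn{\bmn - \bmz/\alpha}^2 =: c$.

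The first step converts $\delta$-approximate optimality into a perturbed variational inequality. For any $v \in \mathcal{C}$ and $t \in (0,1]$, convexity gives $(1-t)u_1 + tv \in \mathcal{C}$, and expanding $\qutn{u_1 - a}^2 \le \qutn{(1-t)u_1 + tv - a}^2 + \delta$ yields $\qutan{a - u_1,\, v - u_1} \le \tfrac{t}{2}\qutn{v - u_1}^2 + \tfrac{\delta}{2t}$. Choosing $t = \sqrt{\delta}$ — legitimate because $0 < \delta < 1$ — gives $\qutan{a - u_1,\, v - u_1} \le \tfrac{\sqrt{\delta}}{2}\bigl(1 + \qutn{v - u_1}^2\bigr)$, and the same reasoning with $b,u_2$ in place of $a,u_1$ gives $\qutan{b - u_2,\, v - u_2} \le \tfrac{\sqrt{\delta}}{2}\bigl(1 + \qutn{v - u_2}^2\bigr)$. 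Taking $v = u_2$ in the first and $v = u_1$ in the second and adding, the left-hand side collapses (using $(a - u_1) - (b - u_2) = (a - b) - (u_1 - u_2)$) to $T + \qutan{a - b,\, u_2 - u_1}$, so I obtain $T + \qutan{a - b,\, u_2 - u_1} \le \sqrt{\delta}\,(1 + T)$.

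The second step controls the cross term $\qutan{a - b,\, u_1 - u_2}$, and this is where Proposition \ref{prop:2} enters. For any $R \in \mathcal{R}_\epsilon$ with output $u = R(\bmyh)$, the elementary identity $\qutn{u - b}^2 - \qutn{u - a}^2 = 2\qutan{u - a,\, a - b} + \qutn{a - b}^2$ combined with \eqref{eq:part-lin-mse} forces $\qutan{u - a,\, a - b} = -\bbE\qutan{\bme,\, \bmn - \bmz/\alpha}$, and then \eqref{eq:bound1} (equivalently the Cauchy--Schwarz step used to prove it) gives $\bigl|\qutan{u - a,\, a - b}\bigr| \le \epsilon\sqrt{\bbE\qutn{\bmn - \bmz/\alpha}^2}$. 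Applying this at $u = u_1$ and $u = u_2$, both of which lie in $\mathcal{C}$, and subtracting yields $\bigl|\qutan{a - b,\, u_1 - u_2}\bigr| \le 2\epsilon\sqrt{\bbE\qutn{\bmn - \bmz/\alpha}^2} =: B$. Substituting $\qutan{a - b,\, u_2 - u_1} \ge -B$ into the inequality of the previous step gives $T - B \le \sqrt{\delta}(1 + T)$, i.e. $(1 - \sqrt{\delta})\,T \le B + \sqrt{\delta}$; dividing by $1 - \sqrt{\delta} > 0$ produces exactly the claimed bound $T \le \tfrac{B + \sqrt{\delta}}{1 - \sqrt{\delta}}$ after unfolding $B$ and $T$.

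I expect the one genuinely delicate point to be the $\delta$-bookkeeping in the first step: the slack in \eqref{eq:rmmse}--\eqref{eq:rhatmmse} has to be funnelled in so that it ultimately contributes precisely the factor $1/(1 - \sqrt{\delta})$ rather than a quadratic expression in $\sqrt{T}$, and the choice $t = \sqrt{\delta}$ in the perturbed variational inequality is exactly what makes the term $\tfrac{\delta}{2t}$ equal $\tfrac{\sqrt{\delta}}{2}$ and linearizes the inequality in $T$. The remaining ingredients — convexity of $\mathcal{C}$ via Lemma \ref{lm:convex}, the Hilbert-space reformulation of the two objectives, and the cross-term estimate from Proposition \ref{prop:2} — are routine, and the argument uses no assumption beyond (A1)--(A4).
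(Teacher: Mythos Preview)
Your proof is correct and follows essentially the same approach as the paper's: both use convexity of $\mathcal{R}_\epsilon$ to obtain a $\delta$-perturbed variational inequality at each of the two near-minimizers, specialize with $t=\sqrt{\delta}$, combine the two inequalities, and then invoke Proposition~\ref{prop:2} to bound the resulting cross term by $2\epsilon\sqrt{\bbE\qutn{\bmn-\bmz/\alpha}^2}$. Your Hilbert-space reformulation (near-projections onto the convex set $\mathcal{C}$) packages the same computation slightly more cleanly, but the ingredients and the key choice $t=\sqrt{\delta}$ are identical to the paper's argument.
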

The proof is given in Appendix A.
The bound in Proposition \ref{prop:4} converges to $2 \epsilon \sqrt{ \bbE(\qutn{ \bmn - \bmz/\alpha }^2) }$ as $\delta$ tends to $0$, and it converges to $0$ as $\epsilon$ and $\delta$ converge to zero.

Building on Proposition \ref{prop:4}, we further analyze the convergence of the proposed solutions for a special case where all realizations of the corrupted images $\bmyh$ are generated from the same clean image. The next corollary shows that, in this setting, the approximation error is at most of the same order as $\epsilon^2$, regardless of the noise distributions or the noise variance. The proof of this corollary is given in Appendix A. 
\begin{restatable}{corollary}{propositiondb}\label{prop:5}
    Assume that the conditions in Proposition \ref{prop:4} hold. If $\bmx$ follows a delta distribution, i.e., the probability $P\qut{\bmx = x} = 1$ and $P\qut{\bmx \neq x} = 0$ for some image $x$, then 
\begin{equation}\label{eq:coro6-1}
    \begin{split}
    \bbE\qut{\qutn{R^{\epsilon,\delta}\qut{\bmyh} - \hat{R}^{\epsilon,\delta}\qut{\bmyh}}^2} 
    \leq \frac{\epsilon^2 + 2\sqrt{\delta}}{1-\sqrt{\delta}}.
    \end{split}    
\end{equation}
\end{restatable}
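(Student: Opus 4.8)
The plan is to specialise to the case $\bmx\equiv x$, where $\bmyh=x+\bmnh$ and hence the output $R\qut{\bmyh}$ of every denoiser is a function of $\bmnh$ alone. I would work in the Hilbert space $H$ of square‑integrable random vectors with the $L^2$ inner product $\qutan{\bm{a},\bm{b}}=\bbE\qut{\bm{a}^{\top}\bm{b}}$, abbreviate $v^{\rm mse}:=R^{\epsilon,\delta}\qut{\bmyh}$ and $\hat v:=\hat{R}^{\epsilon,\delta}\qut{\bmyh}$, and set $\mathcal{C}:=\qutc{R\qut{\bmyh}\,:\,R\in\mathcal{R}_\epsilon}\subseteq H$, which is convex by Lemma~\ref{lm:convex}. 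The first, easy observation: in this setting $\mathcal{J}_{\rm mse}\qut{R}=\bbE\qut{\qutn{R\qut{\bmyh}-x}^2}=\qutn{R\qut{\bmyh}-x}_H^2$, and the trivial denoiser $R\equiv x$ belongs to $\mathcal{R}_0\subseteq\mathcal{R}_\epsilon$ with zero MSE; therefore $\inf_{R\in\mathcal{R}_\epsilon}\mathcal{J}_{\rm mse}\qut{R}=0$, and from \eqref{eq:rmmse}, $\qutn{v^{\rm mse}-x}_H^2=\mathcal{J}_{\rm mse}\qut{R^{\epsilon,\delta}}\le\delta$, i.e. $v^{\rm mse}$ is within $\sqrt{\delta}$ of $x$ in $H$.

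The key step is to control $\hat v$ without any reference to the noise variance $\bbE\qut{\qutn{\bmn-\bmz/\alpha}^2}$. For this I would specialise the identity \eqref{eq:part-lin-mse} to $\bmx\equiv x$: writing $R\qut{\bmyh}=g\qut{x}+L\bmnh+\bme$ with the conventions of Remark~\ref{remark1}, the cross terms vanish because, by (A1)--(A2), $\bmn-\bmz/\alpha$ has zero mean and, by (A3)--(A4), $\bbE\qut{\qutan{L\bmnh,\bmn-\bmz/\alpha}}=0$ for every linear $L$ (cf. \eqref{eq:prop2-eq2-1}--\eqref{eq:prop2-eq2-2}), so that
\[
\mathcal{J}\qut{R}=\qutn{g\qut{x}-x}^2+\bbE\qut{\qutn{L\bmnh}^2}+\bbE\qut{\qutn{\bme-(\bmn-\bmz/\alpha)}^2}.
\]
The first two summands are minimised to zero by $g\qut{x}=x$ and $L=0$ irrespective of $\bme$, and the last by the closest admissible residual $\bme^{\star}$ to $\bmn-\bmz/\alpha$; since the set of admissible residuals lies inside $\qutc{\bme\,:\,\bbE\qut{\qutn{\bme}^2}\le\epsilon^2}$, we have $\bbE\qut{\qutn{\bme^{\star}}^2}\le\epsilon^2$. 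Consequently the minimiser of $\mathcal{J}$ over $\overline{\mathcal{C}}$ is $v^{\star}:=x+\bme^{\star}$, with $\qutn{v^{\star}-x}_H\le\epsilon$ (this is where the smallness of the residual variance enters), and $\inf_{R\in\mathcal{R}_\epsilon}\mathcal{J}\qut{R}=\qutn{v^{\star}-w}_H^2$ with $w:=x+(\bmn-\bmz/\alpha)$. The projection inequality $\qutn{\hat v-w}_H^2\ge\qutn{v^{\star}-w}_H^2+\qutn{\hat v-v^{\star}}_H^2$ together with the near‑optimality \eqref{eq:rhatmmse} then gives $\qutn{\hat v-v^{\star}}_H^2\le\delta$, hence $\qutn{\hat v-x}_H\le\epsilon+\sqrt{\delta}$ — a bound free of the noise variance, which already establishes the ``order $\epsilon^2$'' claim.

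It then remains to assemble the estimate. The two facts $\qutn{v^{\rm mse}-x}_H\le\sqrt{\delta}$ and $\qutn{\hat v-v^{\star}}_H\le\sqrt{\delta}$, $\qutn{v^{\star}-x}_H\le\epsilon$ give, by the triangle inequality, $\bbE\qut{\qutn{v^{\rm mse}-\hat v}^2}=\qutn{v^{\rm mse}-\hat v}_H^2\le(\epsilon+2\sqrt{\delta})^2$; to sharpen this to the stated $\frac{\epsilon^2+2\sqrt{\delta}}{1-\sqrt{\delta}}$ I would rerun the weighted Cauchy--Schwarz / parameter‑optimisation argument from the proof of Proposition~\ref{prop:4} (the free parameter is what produces the factor $1-\sqrt{\delta}$ in the denominator), feeding in the $\sqrt{\delta}$‑closeness of $v^{\rm mse}$ to $x$ and of $\hat v$ to $v^{\star}$ and using that $v^{\star}-x=\bme^{\star}$ is a residual term, hence orthogonal to the constant and linear parts of the decompositions, so that the relevant cross terms drop out. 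The hard part is the key step above: showing that the \emph{entire} noise‑dependent part of the best $\mathcal{J}$‑denoiser over $\mathcal{R}_\epsilon$ is carried by the small residual component, with no constant or linear contribution — this is exactly where all four assumptions (A1)--(A4) are used, and it is why the bound is of the order $\epsilon^2$, uniformly over noise distributions and variances. A secondary difficulty is the bookkeeping needed to land precisely on $\frac{\epsilon^2+2\sqrt{\delta}}{1-\sqrt{\delta}}$ rather than on the cruder $(\epsilon+2\sqrt{\delta})^2$.
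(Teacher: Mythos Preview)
Your approach is genuinely different from the paper's, and the ``key step'' is correct: in the delta case you exploit the orthogonal splitting $\mathcal{J}(R)=\|g(x)-x\|^2+\bbE\qut{\|L\bmnh\|^2}+\bbE\qut{\|\bme-(\bmn-\bmz/\alpha)\|^2}$ to show that the projection $v^{\star}$ of $w=\bmy-\bmz/\alpha$ onto $\overline{\mathcal{C}}$ has $g(x)=x$, $L=0$, hence $\|v^{\star}-x\|_H\le\epsilon$. Combined with the convex-projection inequality and the trivial bound $\|v^{\rm mse}-x\|_H\le\sqrt{\delta}$ (since $R\equiv x\in\mathcal{R}_\epsilon$), this yields $\|v^{\rm mse}-\hat v\|_H\le\epsilon+2\sqrt{\delta}$. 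That already gives a noise-variance-free bound of the right order, and is in fact \emph{sharper} than the stated one when $\epsilon$ is small relative to $\sqrt{\delta}$ (at $\epsilon=0$ you get $4\delta$ versus $2\sqrt{\delta}/(1-\sqrt{\delta})$).

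What you have not delivered is the precise form $\frac{\epsilon^2+2\sqrt{\delta}}{1-\sqrt{\delta}}$, and your sharpening sketch does not get there: feeding $\|\hat v-x\|_H\le\epsilon+\sqrt{\delta}$ into \eqref{eq:ineq1} gives only $\frac{(\epsilon+\sqrt{\delta})^2+\sqrt{\delta}}{1-\sqrt{\delta}}$, and the orthogonality of $\bme^{\star}$ you invoke does not couple to the decompositions of $\bmp,\bmph$ in a way that removes the extra $2\epsilon\sqrt{\delta}$. The paper reaches the exact constant by a different device: it builds the auxiliary denoiser $\bm{q}(\bmyh)=g_1(x)+L_1\bmnh+\bme_2$, pairing the constant--linear part of $R^{\epsilon,\delta}$ with the residual of $\hat R^{\epsilon,\delta}$. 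Because $\bm{q}$ and $\hat R^{\epsilon,\delta}$ share the \emph{same} residual, the cross term $-2\bbE\qut{\qutan{\bme,\bmn-\bmz/\alpha}}$ in \eqref{eq:part-lin-mse} cancels identically in $\mathcal{J}(\bm{q})-\mathcal{J}(\hat R^{\epsilon,\delta})$, and adding the resulting inequality to \eqref{eq:ineq1} gives $(1-s)\bbE\qut{\|\bmph-\bmp\|^2}\le\bbE\qut{\|\bm{q}-x\|^2}-\bbE\qut{\|\bmp-x\|^2}+2\delta/s$; orthogonality of residuals to constant+linear parts then bounds the right-hand difference by $\epsilon^2$. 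So both routes use the same orthogonality, but for different purposes: you use it to locate $v^{\star}$, the paper uses it after the residual-matching trick to control $\|\bm{q}-x\|^2-\|\bmp-x\|^2$. Your bound and the paper's are incomparable in general; to land exactly on the statement as written you would need the $\bm{q}$ construction or an equivalent residual-matching step.
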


Proposition \ref{prop:4} and Corollary \ref{prop:5} illustrate that the minimizers of $\mathcal{J}$ are good estimates of the best partially denoiser when $\epsilon$ is small. In principle, $\bme$ should be much smaller than the noise itself, as this is a necessary condition for the denoiser being robust to different realizations of $\bmnh$. We can provide some more insight into the size of the residual term for high quality denoisers when applied to a special subset of images.
Smooth regions are one of the most important components in natural images, and they often account for a large fraction of the pixels. Here we study the basic case with constant images and pixelwise independent noise. We give an example to demonstrate the scale of the residual term of the minimum MSE denoiser for constant image patches.

\begin{example}[Constant patches]
Assume that $\bmx$ models constant patches of size $21\!\times\!21$ pixels. The pixel values are uniformly distributed in $[0,\lambda]$ where $\lambda>0$ is a constant. The noise is pixelwise independent conditioned on $\bmx$. For pixel $i$, $\bmyh_i \sim  {\rm Pois}\qut{\bmx_i}$ where ${\rm Pois}$ is the Poisson distribution.
The optimal denoised images $R_0\qut{\bmyh}$, where $R_0$ minimizes the cost $\mathcal{J}_{\rm mse}\qut{\cdot}$, are therefore constant images. In this example, given $R_0$ and $\bmx$, the quantities $g\qut{\bmx}$ and $L$ (defined in \eqref{eq:part-lin}) are computed using Remark \ref{remark1}. Consequently, $g\qut{\bmx}$, $L\bmnh$ and $\bme$ are constant images. The values of $R_0\qut{\bmyh}$ are plotted versus $g\qut{\bmx}+L\bmnh$ (for different ground truth values and for different $\lambda$) on the last two rows of Fig. \ref{fig:poisson_linearity}, where the results are computed based on a deep CNN approximation to $R_0$. Each yellow dot in the plot represents a realization of the noise, and the blue line represents $\bme=0$. The plots suggest that $\bme$ has small variances and hence the set $\mathcal{R}_\epsilon$ maintains good approximations to $R_0$ for small $\epsilon$. 
\end{example}
\begin{figure}[ht!]
    \centering
    \setlength{\tabcolsep}{2pt}
    \begin{tabular}{c|c|c}
    \includegraphics[width=0.3\linewidth, trim=0 0 20 0, clip]{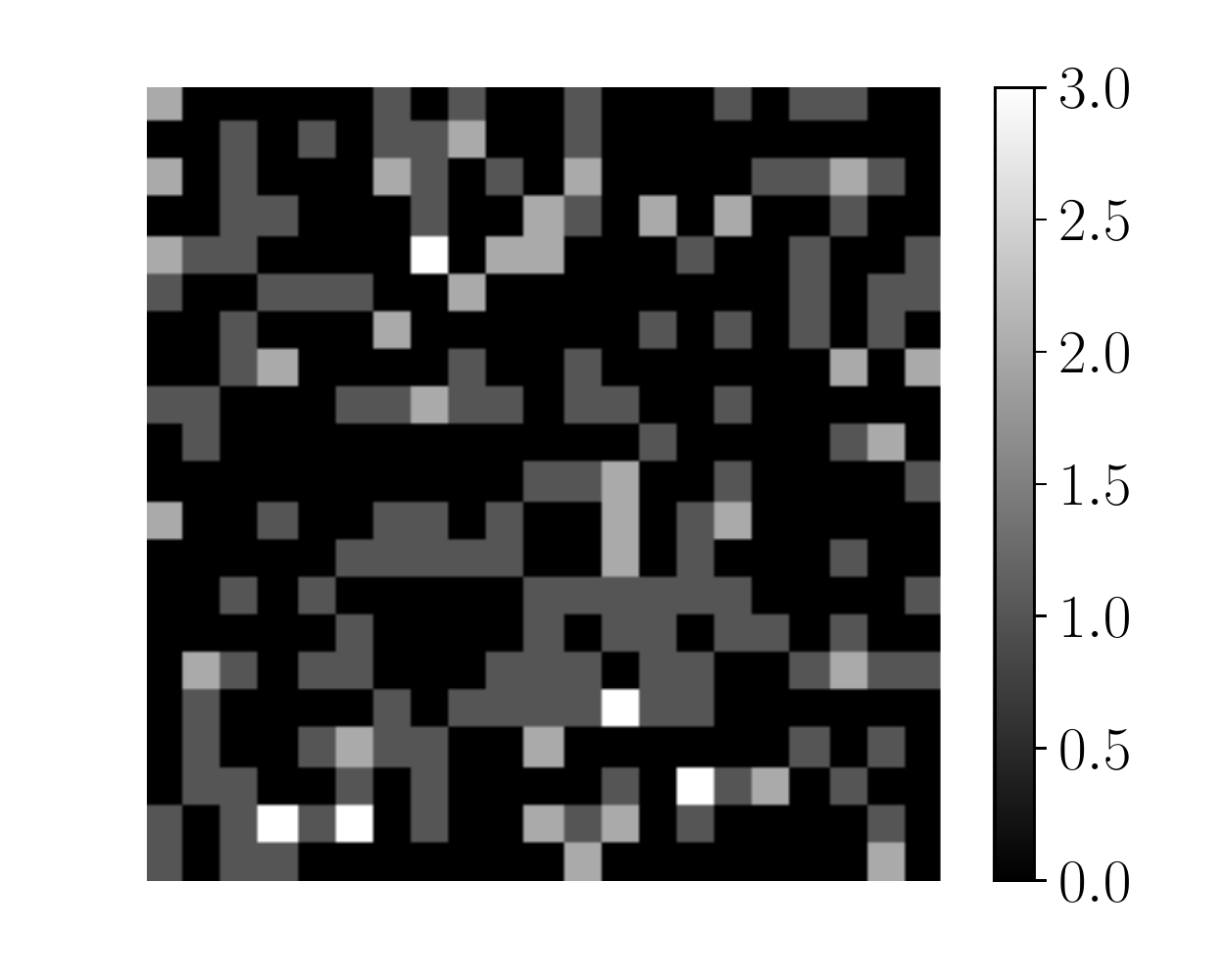}
    &
    \includegraphics[width=0.3\linewidth, trim=0 0 20 0, clip]{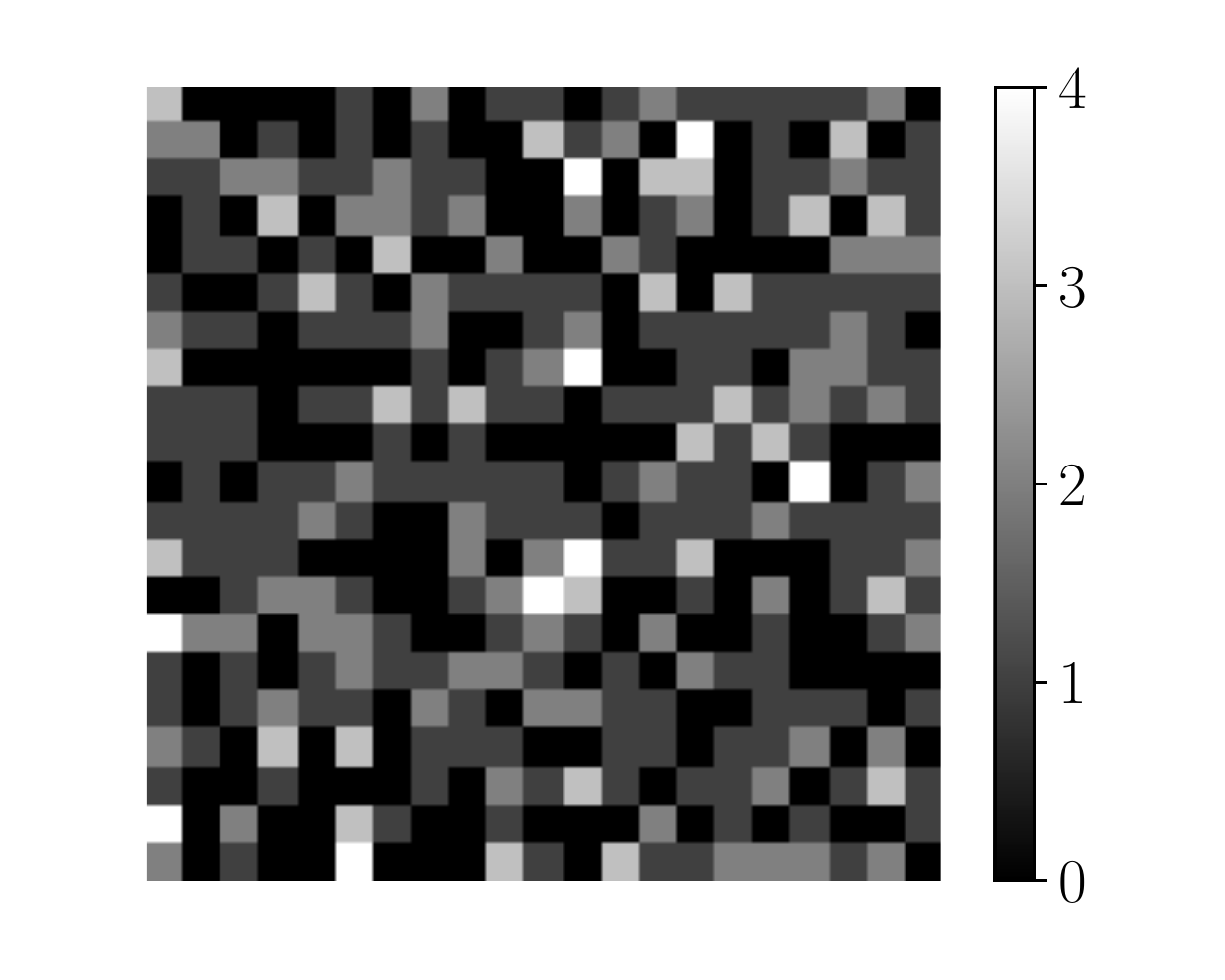}
    &
    \includegraphics[width=0.3\linewidth, trim=0 0 20 0, clip]{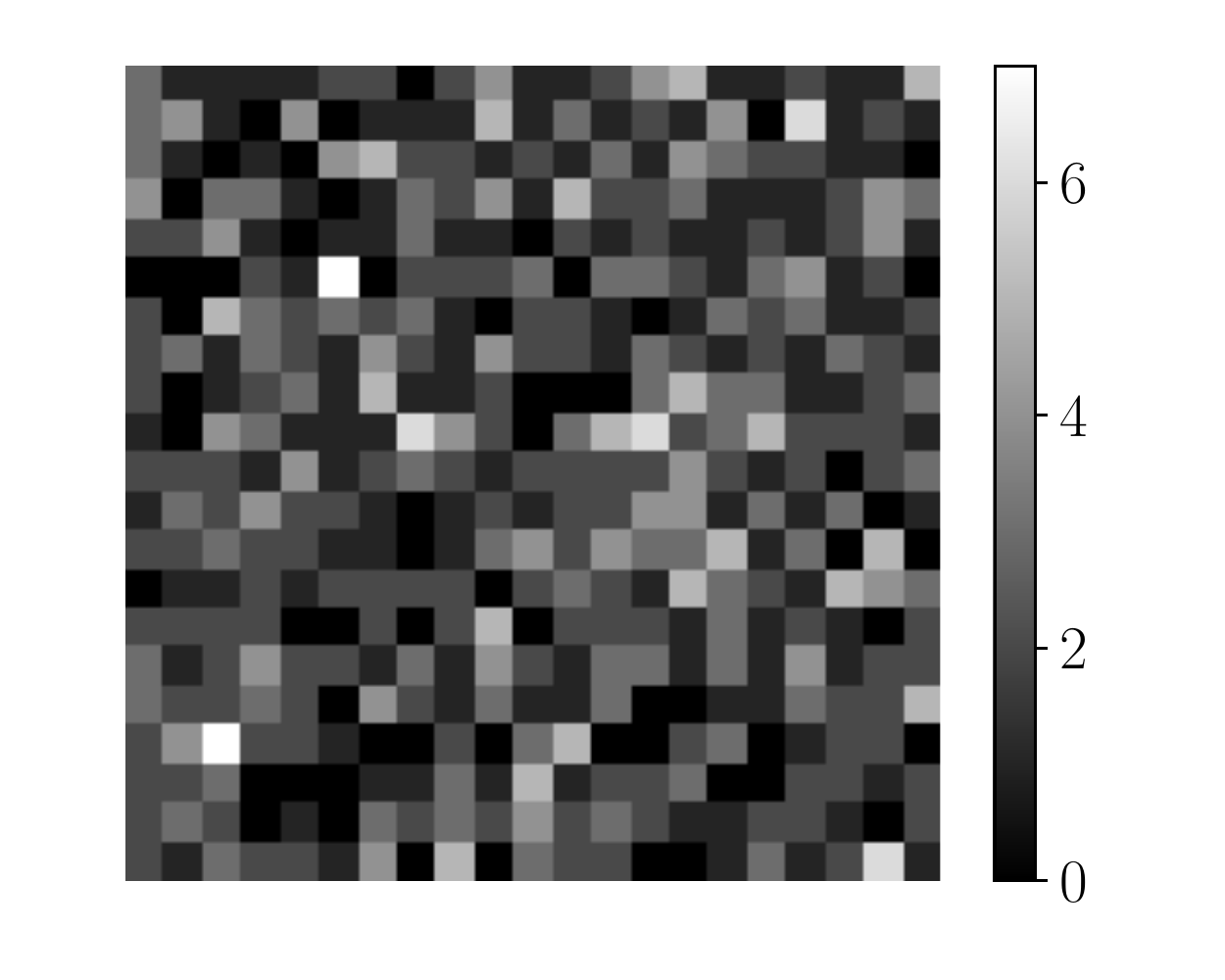}
         \\
    \includegraphics[width=0.3\linewidth, trim=10 0 -20 0, clip]{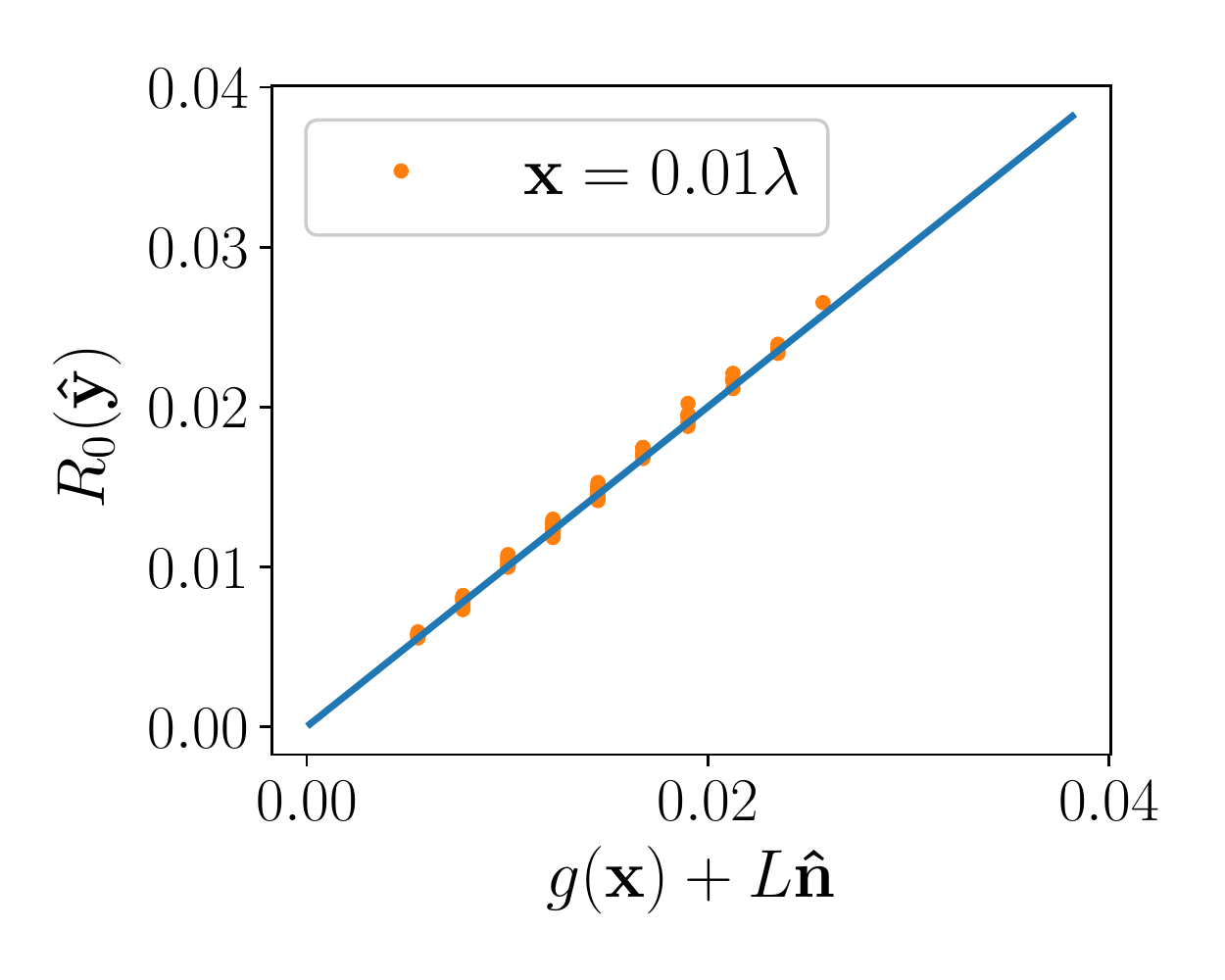} & 
    \includegraphics[width=0.3\linewidth, trim=10 0 -20 0, clip]{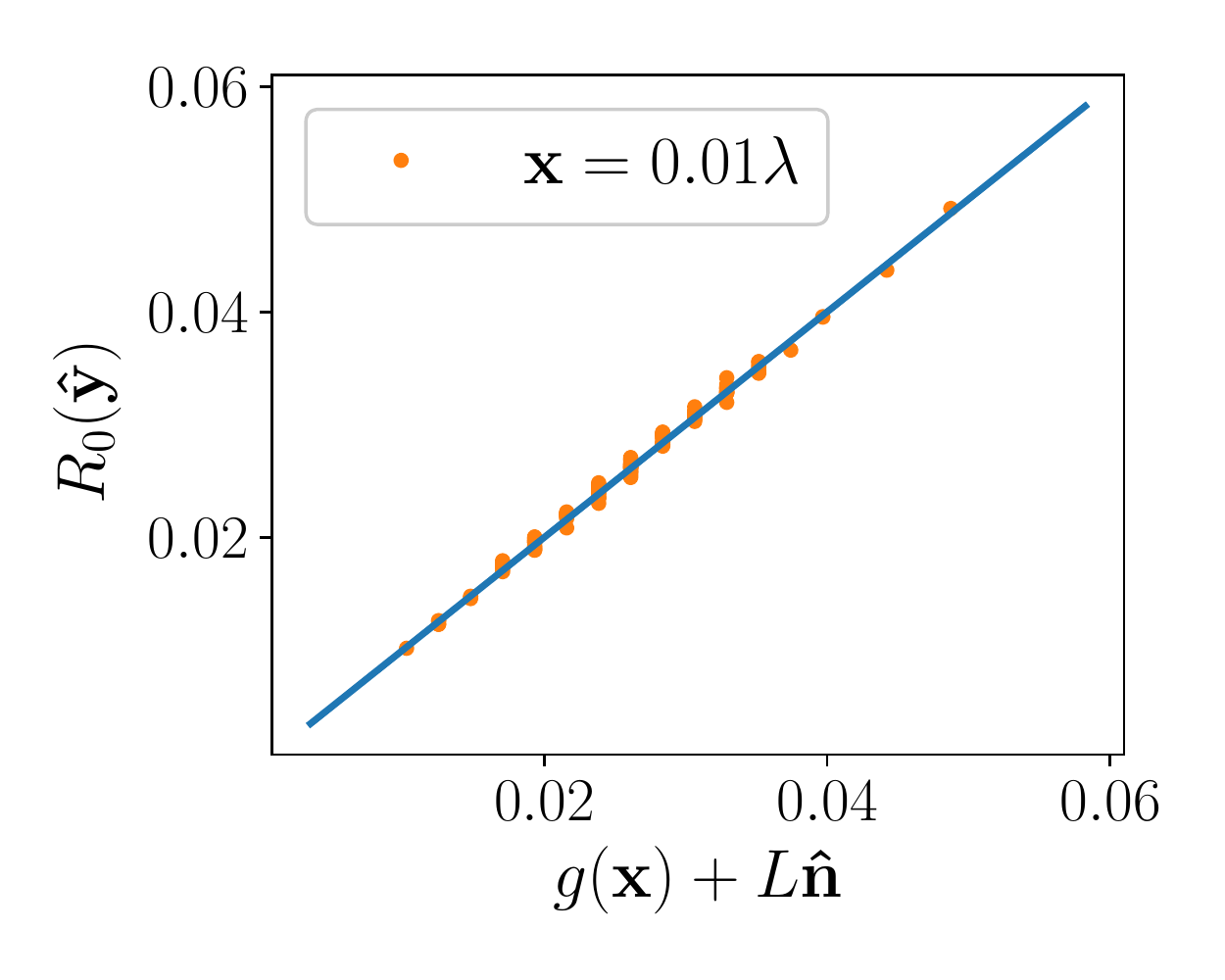} & 
    \includegraphics[width=0.3\linewidth, trim=10 0 -20 0, clip]{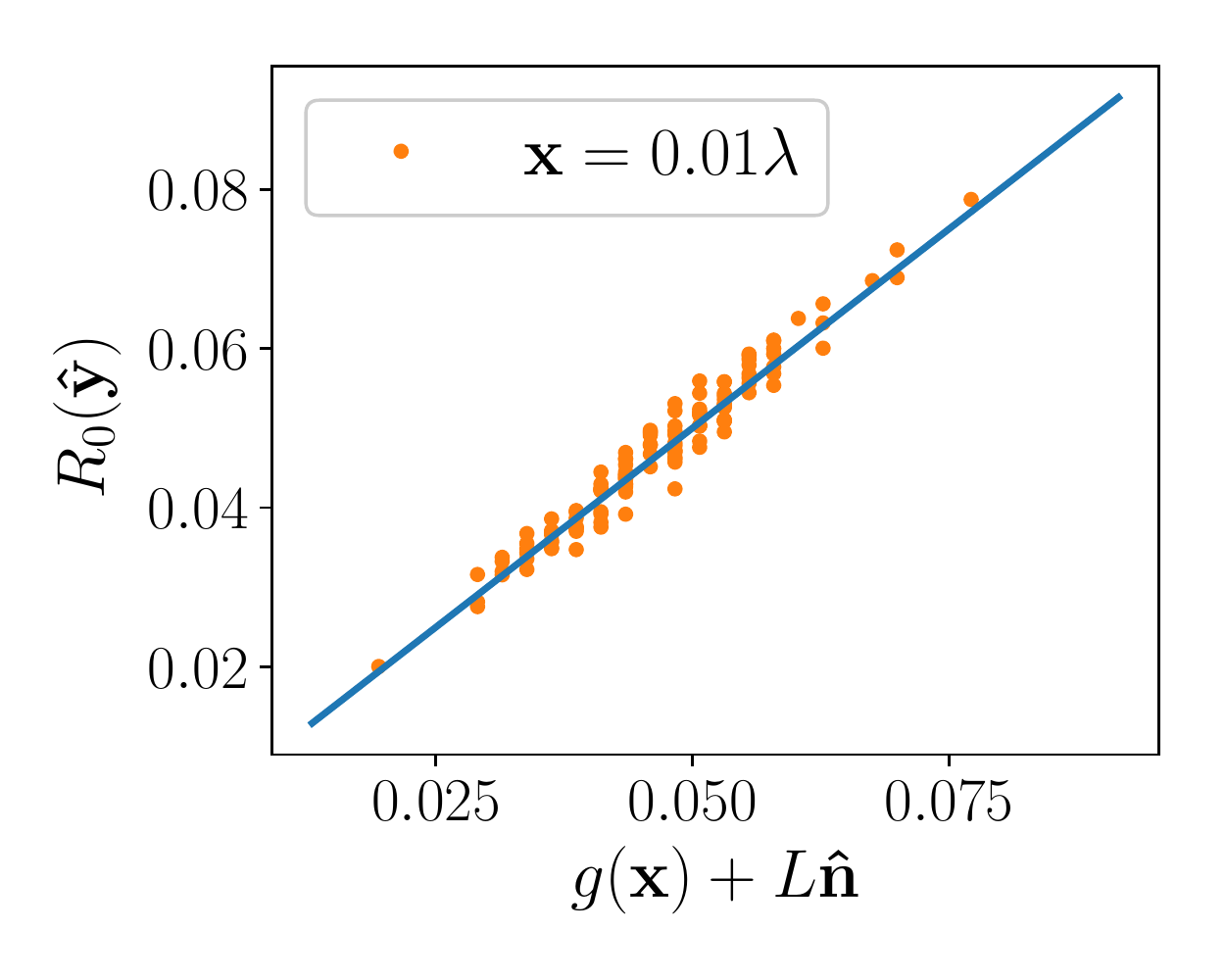} \\
    \includegraphics[width=0.3\linewidth, trim=10 0 -20 0, clip]{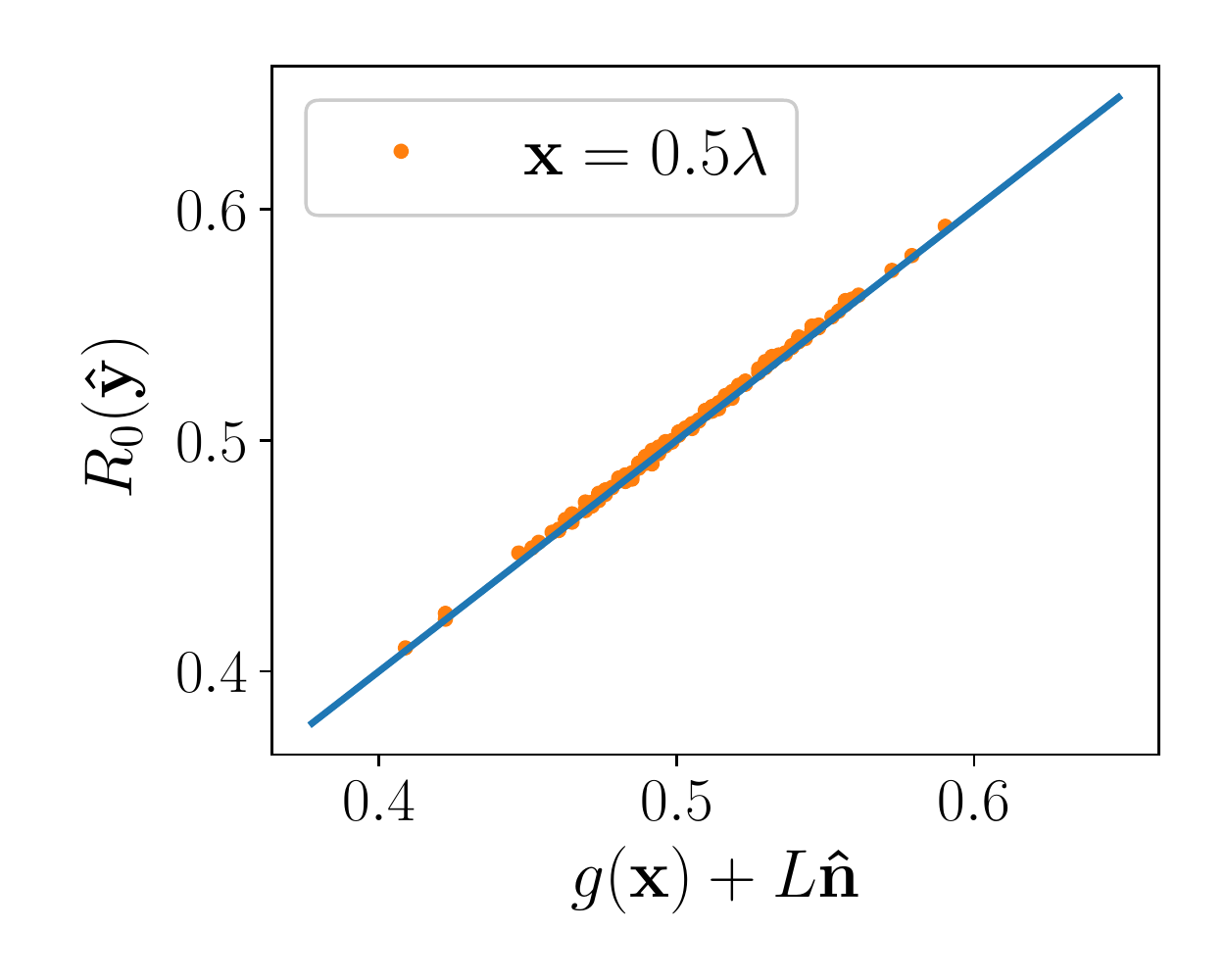} & 
    \includegraphics[width=0.3\linewidth, trim=10 0 -20 0, clip]{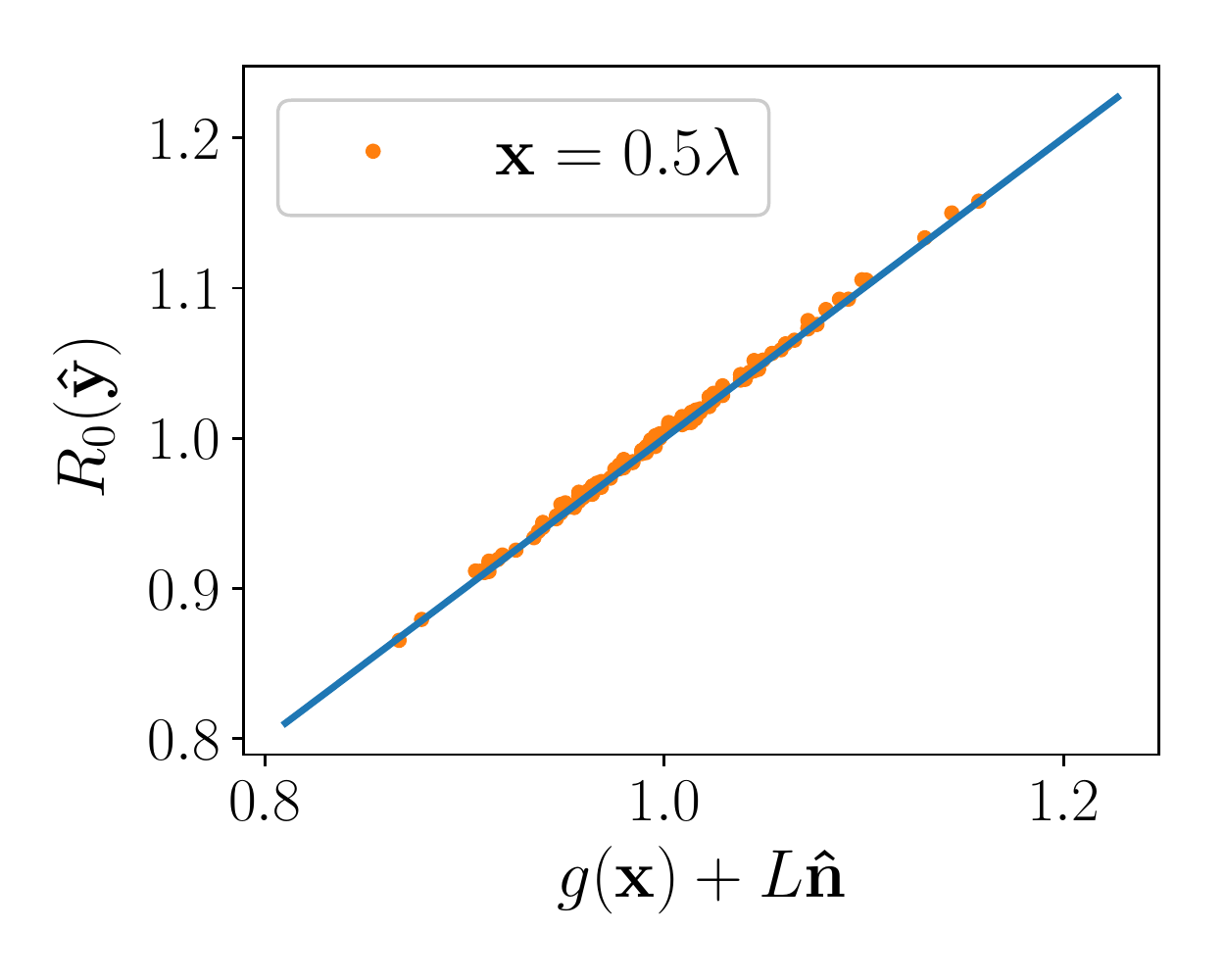} & 
    \includegraphics[width=0.3\linewidth, trim=10 0 -20 0, clip]{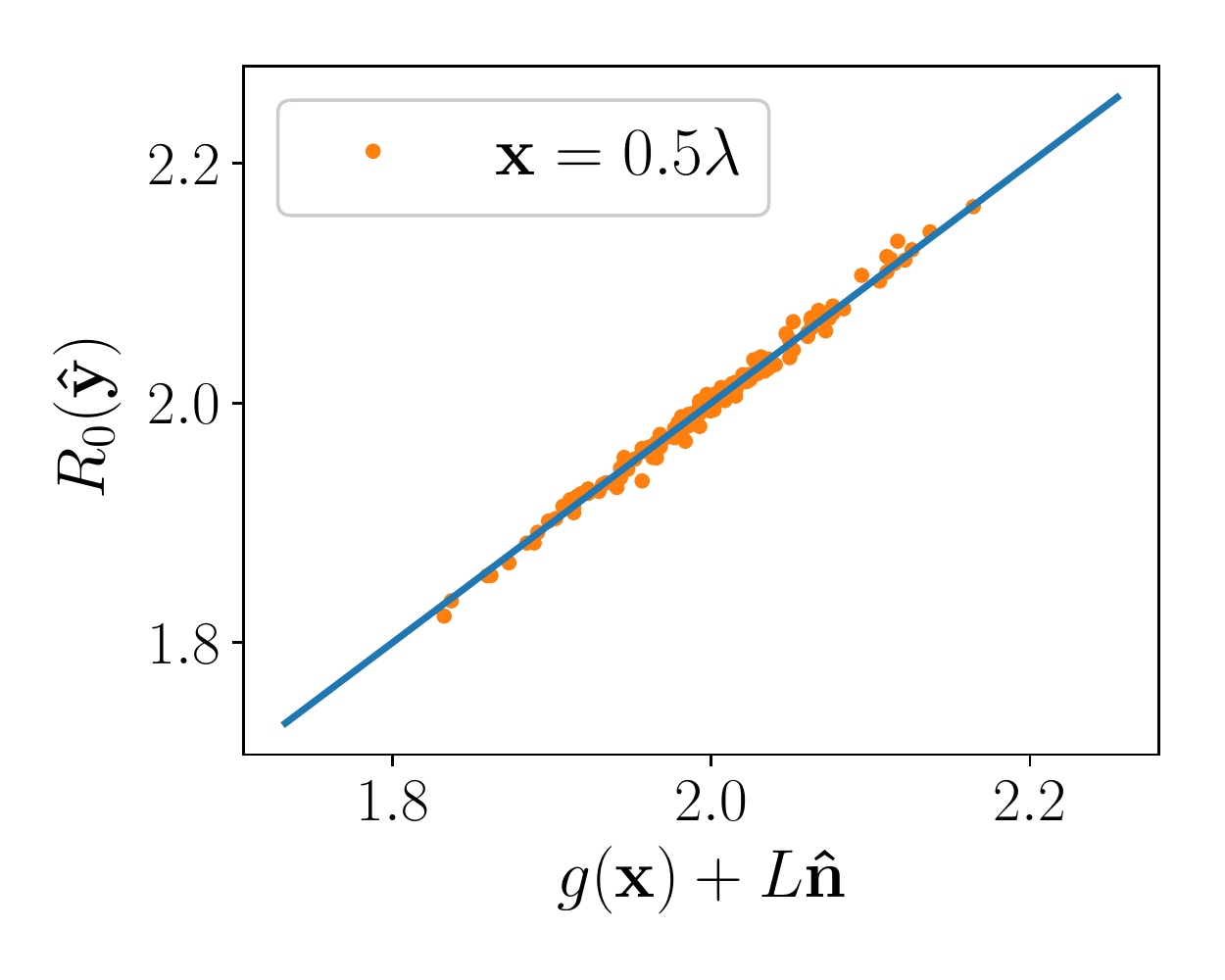} \\
    \footnotesize{(a). $\lambda=1$} & \footnotesize{(b). $\lambda=2$} & \footnotesize{(c). $\lambda=4$}
    \end{tabular}
    \caption{Reconstructions for constant image patches corrupted by Poisson noise with parameter $\lambda=1$, $2$ and $4$ (i.e., the column (a)-(c) respectively).
    Top row: \textit{samples of corrupted patches} (the ground truth is $\bmx = 0.5\lambda$). 
    Second and third rows: \textit{the optimal reconstructions} $R_0(\bmyh)$ plotted against $g(\bmx) + L\bmnh$ (represented by the yellow dots, each of which is generated with a realization of noise $\bmnh$). The blue line represents $\bme\!=\!0$, i.e., $(g(\bmx) + L\bmnh, g(\bmx) + L\bmnh)$.
    }
    \label{fig:poisson_linearity}
\end{figure}

\subsection{Learning a denoiser from noisy samples.}

For learning the denoiser we consider parametrized denoising models $R_\theta$, e.g., deep CNNs, that are parameterized by $\theta$. Suppose that a set $\left\{ y_i \right\}$ of realizations of $\bmy$, e.g., the set of noisy images that one wants to denoise,  is given. %
The noise in these noisy images satisfies Assumption (A1). 
Associated with each $y_i$, we define $z_i$ as a realization of the auxiliary random vector $\bmz$ that satisfies (A2) - (A4). In (A2) and (A4) only the first two moments of $\bmz$, rather than its distribution, are specified. Therefore without loss of generality, one can randomly generate $z_i$ from normal distributions. 
With the auxiliary vector $z_i$, the samples of $\bmyh$, denoted by $\hat{y}_i$, are computed according to Equation \eqref{eq:nhat_yhat}.
Based on the objective function \eqref{eq:newmse}, we minimize the empirical loss function 
\begin{equation}\label{eq:empiricalloss}
    \mathcal{L}\qut{\theta}:=
\sum_i \qutn{R_\theta \qut{\hat{y}_i} - \qut{y_i - z_i/\alpha}}^2
\end{equation}
under the condition that $R_\theta \in \mathcal{R}_\epsilon$. 
This condition can be implemented with the partial linearity constraint described below.

\vspace{0.2cm}
\textit{Partial linearity constraint.} To preserve the partial linearity during training, restriction on the variance of the residual term $\bme$ in \eqref{eq:part-lin} is required. Unfortunately, a direct evaluation of $\bme$ is not feasible due to 1) the fact that the formulation of $\bme$ depends on $g\qut{\bmx}$ and $L$ which are unknown, and 2) the condition that only one noisy observation for each image is given. However, since $g\qut{\bmx}$ and $L$ depend only on the image $\bmx$ and $L$ is linear, we can remove these two terms by perturbing $\bmyh$. One simple way of %
implementing
this is to find $\bm{q}_1$ and $\bm{q}_2$, two perturbed versions of $\bmyh$, satisfying 
$\bmyh = \tau \bm{q}_1 + (1-\tau) \bm{q}_2$ where $\tau \in [0,1]$. Let the residual terms associated with $R\qut{\bm{q}_1}$ and $R\qut{\bm{q}_2}$ be denoted by $\bme_1$ and $\bme_2$ respectively. Then, from Equation \eqref{eq:part-lin} we have 
\begin{equation}\label{eq:linearC}
\begin{split}
    & R\qut{\bmyh} - \tau R\qut{\bm{q}_1} - (1-\tau) R\qut{\bm{q}_2} \\ 
    = &\bme - \tau \bme_1 - (1-\tau) \bme_2.
\end{split}
\end{equation}
By the assumption that $\bme, \bme_1$ and $\bme_2$ are small, based on the representation of the residual terms in \eqref{eq:linearC}, one can therefore penalize $\psi\qut{R\qut{\bmyh} - \tau R\qut{\bm{q}_1} - (1-\tau) R\qut{\bm{q}_2}}$ for some metric $\psi$. %

\subsection{Learning image deblurring without ground truth images}\label{subsect:deblurring_1}
The proposed partially linear denoisers can be generalized to learning image reconstructors for other linear inverse problems such as image deblurring, in particular when the ground truth images are missing. We repeat the 
observation model \eqref{eq:generalproblem} here
\begin{equation}\label{eq:deblur_problem}
    \bmy = A \bmx + \bmn
\end{equation}
where $A$ is a linear forward operator (such as the Radon transform for CT reconstruction, a convolution operator for image deblurring, etc), and $\bmn$ models the zero-mean noise. 
For many inverse imaging problems, the naive reconstruction $A^\dagger \bmy$, is based on a direct inversion of the linear operator $A$ given by the pseudo-inverse operator $A^\dagger$. Due to ill-conditioning of $A$ such inversion is typically unstable and therefore not accurate in the presence of noise. If $A$, however, has full column rank, then the resulting artifacts in the reconstruction could be remedied by denoising $A^\dagger \bmy$. In fact, it is easy to see that $\bmn^\dagger:=A^\dagger \bmy - \bmx = A^\dagger \bmn$ has zero mean. The covariance of $\bmn^\dagger$ can be estimated as long as the covariance of $\bmn$ is known. Therefore a straightforward application of the partially linear denoisers to the noisy reconstruction $A^\dagger \bmy$ leads to an estimate of $\bmx$. 
Various approaches have been proposed for denoising $A^\dagger \bmy$ in a data-driven post-processing setup (see e.g., \cite{jin2017deep, han2018framing}).
These learned post-processing techniques require noisy-clean image pairs as training data.
Another class of methods, in contrast, leverage denoisers to construct regularization for \eqref{eq:deblur_problem}. For instance, the Regularization by Denoising (RED) methods \cite{romano2017little, hong2019acceleration} minimize a variational objective function with a regularization term derived from the denoisers. The Plug-and-Play Prior framework \cite{venkatakrishnan2013plug, sreehari2016plug} is based on variable splitting algorithms for the Maximum a Posteriori (MAP) optimization problem with the proximal mapping associated with the regularization term being replaced by the denoisers. These denoiser-based regularization approaches require predefined denoisers. 
However, our framework does not require any noisy-clean pair or any predefined denoiser. In particular, our proposed partially linear denoiser allows training an estimator from noisy data alone.

Image deblurring is a special case of \eqref{eq:deblur_problem} where $A$ represents a convolutional kernel. In practice, $A$ is governed by different 
imaging factors including motion and camera focus. In the context of the proposed partial linear denoiser, one way to solve the deblurring problem is to train a single model $R_\theta$ that maps the measurement $\bmy$ directly to $\bmx$. Here, knowledge of $A$ is indirectly encoded in $R_\theta$. 
Assuming that the training data contains noisy samples of $\bmyh$ and the blurring kernel $A$ but no ground truth images,
then similar to \eqref{eq:newmse} one can minimize the loss function 
\begin{equation}\label{eq:deblurring_loss}
    \bbE\qut{ \qutn{ A{R_\theta}\qut{\bmyh} - \qut{\bmy - \bmz/\alpha}}^2 },
\end{equation}
and in this case, $A{R_\theta}$ acts as a partially linear denoiser. At test time, the deblurred image can be directly computed as ${R_\theta}\qut{\bmyh}$ without knowing the operator $A$. 
These considerations also draw connections of the partial linear denoiser to the deep image prior approach \cite{ulyanov2018deep}. There, an implicit regularizer is introduced, based on a convolutional neural network $R_\theta$ and a sole data-fitting loss function is minimized in the training. Early stopping is applied to prevent the estimator from over-fitting to the noise. 
With the partial linearity structure, however, we establish a connection between the cost \eqref{eq:deblurring_loss} and the MSE of the noise free measurement $A\bmx$ which aims to get $A{R_\theta}\qut{\bmyh}$ close to $A\bmx$ rather than its noisy versions.

\section{Experiments}\label{sect:exp}
In this section, we report experimental results to demonstrate the efficiency of the proposed approach for different denoising tasks and for deblurring\footnote{The code will be made available at  \textit{\scriptsize{https://github.com/RK621/Unsupervised-Restoration-PLD}}.}. Firstly, we start by comparing the partial linearity of classical denoisers including total variation (TV) denoising \cite{rudin1992nonlinear}, BM3D \cite{dabov2007image}, as well as CNN based denoisers (see Subsection \ref{subsec:pl} for details). Secondly, we evaluate our method on denoising problems with synthetic noise (Subsection \ref{subsect:exp_gaussian}, \ref{subsect:exp_poission}) and analyse its robustness towards errors in the estimate of the noise variance (Subsection \ref{subsect:env}). Thirdly, a numerical study on the stability of our approach with respect to varying noise levels is given and the importance of the partially linear constraint is investigated (Subsection \ref{subsect:exp_stability}). Fourthly, the proposed approach is used to denoise real microscopy images (Subsection \ref{subsect:microscopydenoising}).
Finally, we apply our approach to learn an image deblurring model, using a set of single noisy and blurry observations of the images as training data. The details of the learning methods and the results for the deblurring experiment are presented in Subsection \ref{subsect:blind_deblurring}.

\subsection{Partial linearity of denoisers}\label{subsec:pl}
In this test, we investigate the partial linearity of some existing standard denoising approaches, including the TV approach \cite{rudin1992nonlinear}, BM3D \cite{dabov2007image} and DnCNN \cite{zhang2017beyond}. 
For the convenience of the readers, the formula of the partially linear denoiser %
\eqref{eq:newmse}
is repeated here 
\begin{equation}\label{eq:exp_decomposition}
R\qut{\bmyh} = g\qut{\bmx} + L \bmnh + \bme,
\end{equation}
where $R$ is the underlying denoiser. In particular, the DnCNN is trained by minimizing the standard empirical MSE loss \eqref{eq:mse-new}
with a training set of $400$ images with ground truth \cite{zhang2017beyond}.
For a given image $\bmx$, we compute the decomposition \eqref{eq:exp_decomposition} using Remark \ref{remark1}. Specifically, $g(\bmx)\!=\! \bbE\qut{R\qut{\bmyh}\mid\bmx}$ and $L\!=\!\min_{L_0 \in \mathcal{L}} \bbE\qut{\qutn{R\qut{\hat{\bmy}}-g\qut{\bmx}-L_0\hat{\bmn}}^2}$ where the expectation is taken over $\hat{\bmn}$, and $\mathcal{L}$ denotes the set of linear operators. In this example, we use a standard test image called parrot (cf. top middle of Fig. \ref{fig:diagram}) as the ground truth image, i.e., a realization of $\bmx$. The noise $\bmnh$ is i.i.d. Gaussian noise with zero mean and standard deviation $25$ (corresponding to 256 gray levels). The expectations are evaluated using $20000$ random realizations of the pair $\qut{\bmyh, R\qut{\bmyh}}$.

\begin{table}[ht]
    \centering
    \setlength{\tabcolsep}{4pt}
    \caption{The PSNR for $R$ (first row), the variance of $\bme$ averaged over all pixels (second row), and the PSNR for modified denoisers $\hat{R}:=g(\bmx) + L \bmnh$ (third row) on the image parrot.}
    \begin{tabular}{|c|ccc|}
        \toprule
        Denoiser & TV & BM3D & DnCNN \\
        \hline
        \hline
        PSNR (dB) & 27.62 & 28.87 & 29.47 \\
        \hline
        $\epsilon^2/m$ &  $8.751 \times 10^{-5}$  &$4.875\times 10^{-5}$ &  $4.281 \times 10^{-5}$ \\ 
        \hline
        Modified PSNR (dB)  & 27.82 & 29.00 & 29.60 \\
        \bottomrule
    \end{tabular}
    \label{tab:epsilon}
\end{table}
For the three denoisers, we report the variance of $\bme$ averaged over the image pixels of the parrot image (i.e., $\epsilon^2/m$ where $m$ denotes the number of pixels in $\bme$) in Table \ref{tab:epsilon}.
A comparison of the accuracy of the methods, measured in PSNR, is also given in the table.
All figures reported in the table are averaged over $8000$ independent runs with different realizations of $\bmyh$. 
Based on the table, the DnCNN achieves the best denoising quality, and it outperforms BM3D by around $0.5$ dB and the TV approach by around $1.8$ dB. All three methods have $\epsilon^2/m$ less than $10^{-4}$, and the value for the CNN denoiser is about half of that of the TV method. 

\begin{figure}[ht!]
    \centering
    \setlength{\tabcolsep}{2pt}
    \begin{tabular}{ccc}
    \includegraphics[width=0.3\linewidth, trim=20 20 20 20, clip]{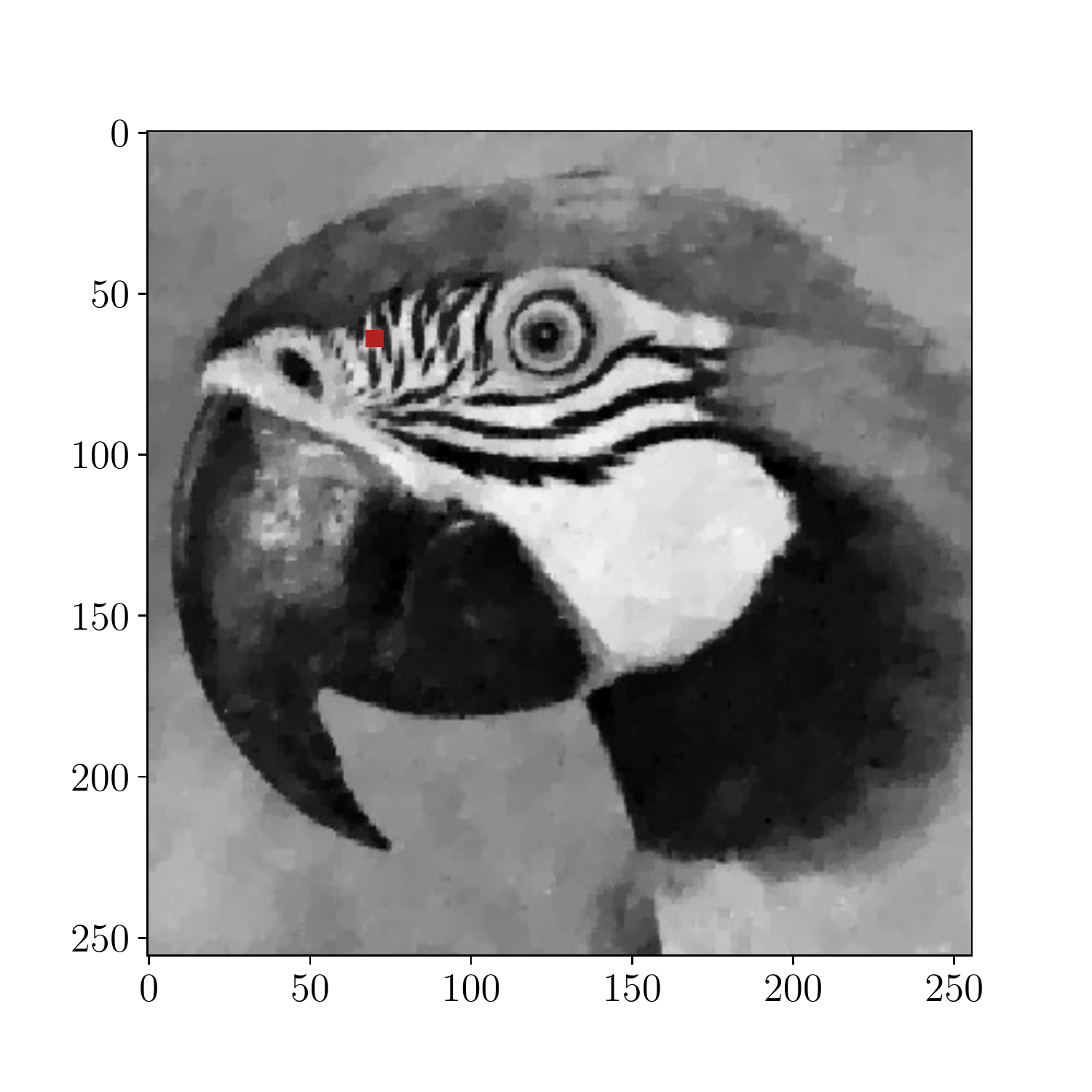}
         &
    \includegraphics[width=0.3\linewidth, trim=20 20 20 20, clip]{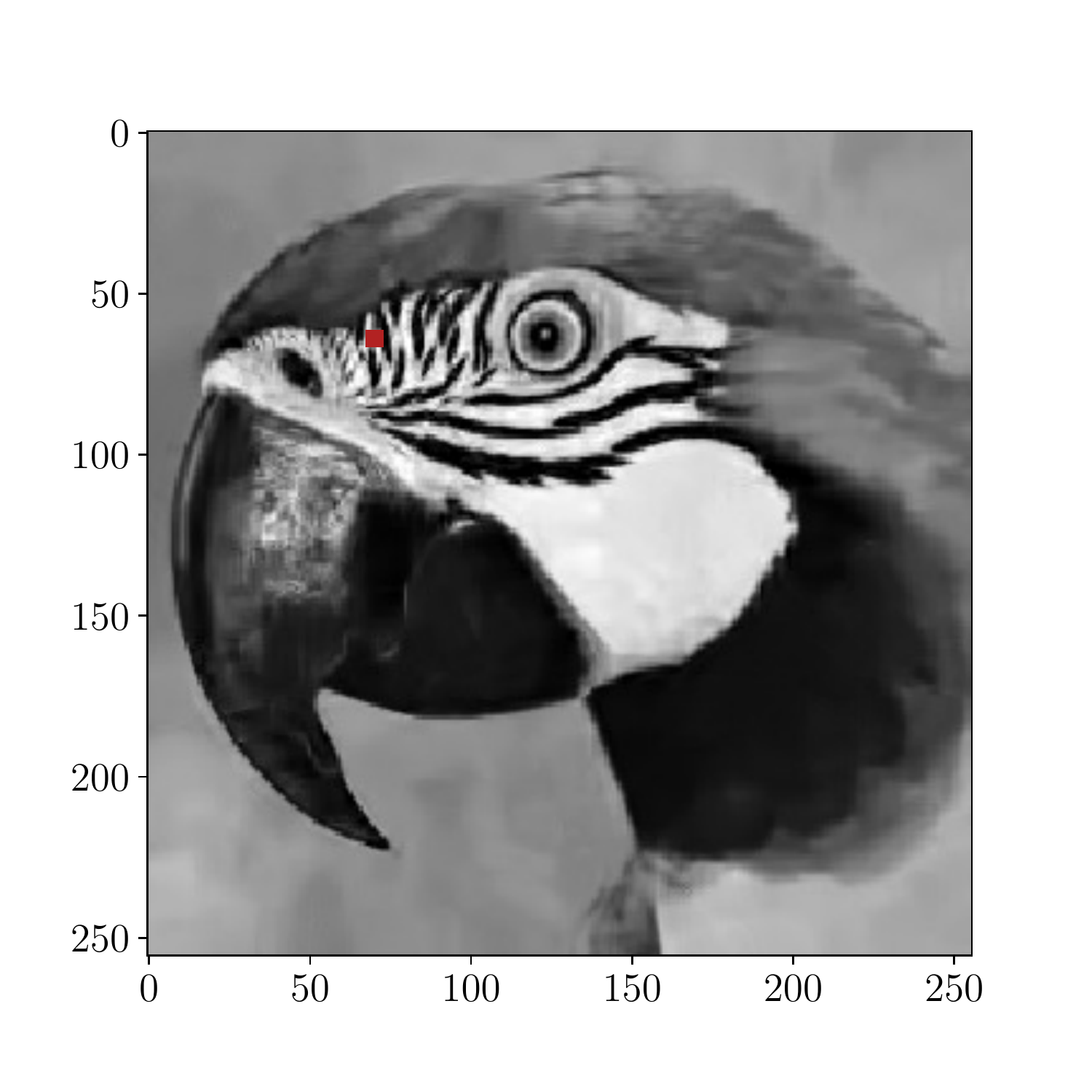}
    &
    \includegraphics[width=0.3\linewidth, trim=20 20 20 20, clip]{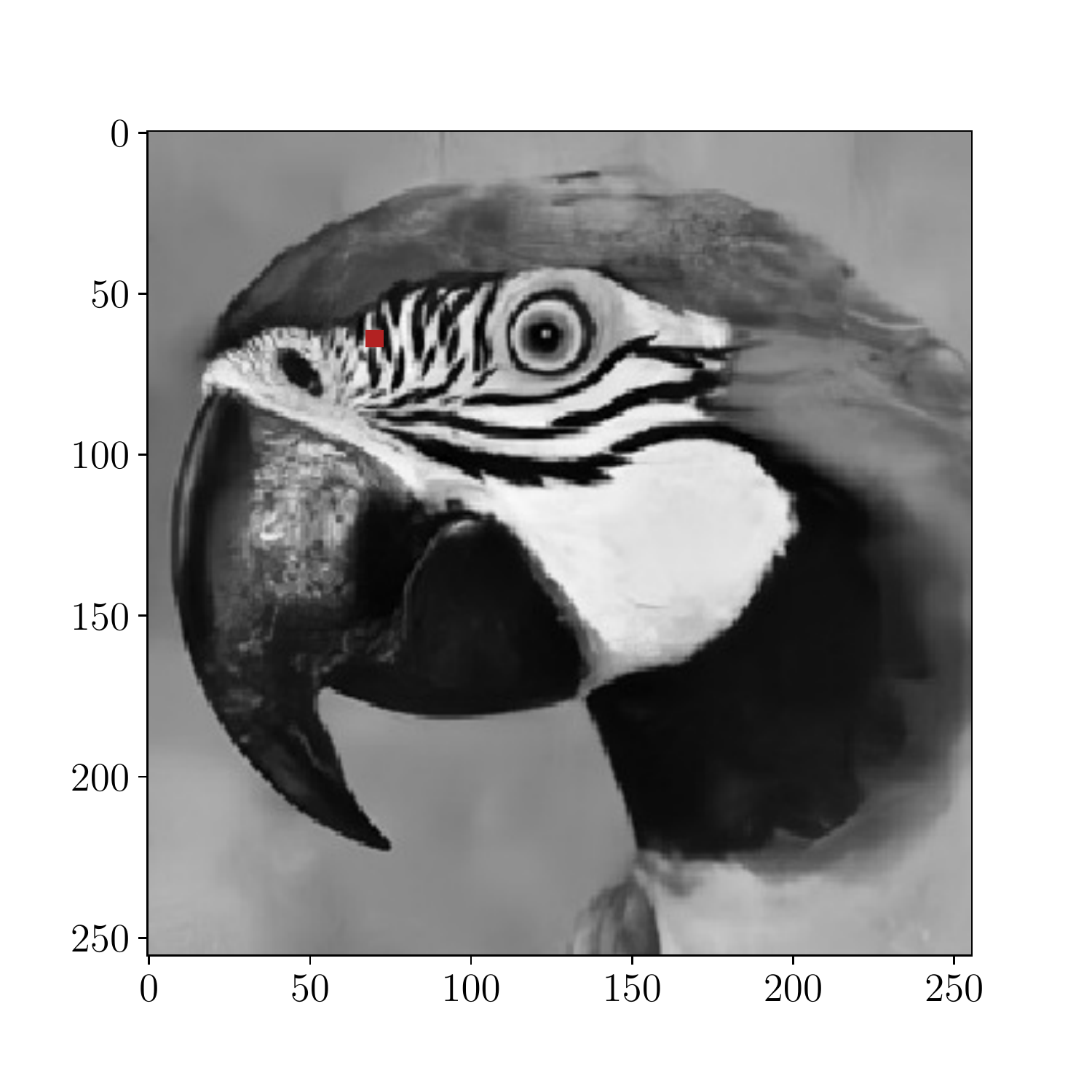}
         \\
    \includegraphics[width=0.3\linewidth, trim=20 20 20 20, clip]{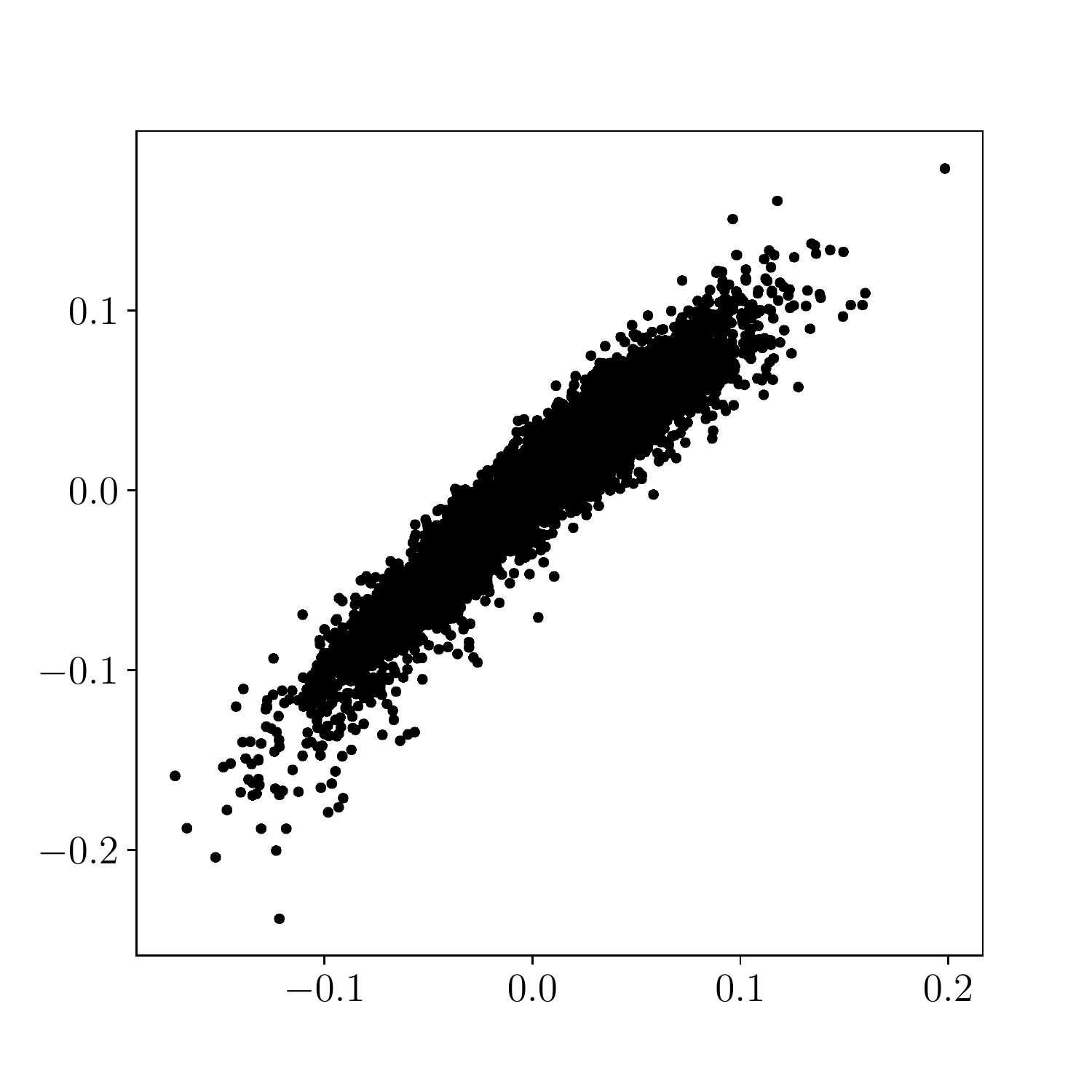} & 
    \includegraphics[width=0.3\linewidth, trim=20 20 20 20, clip]{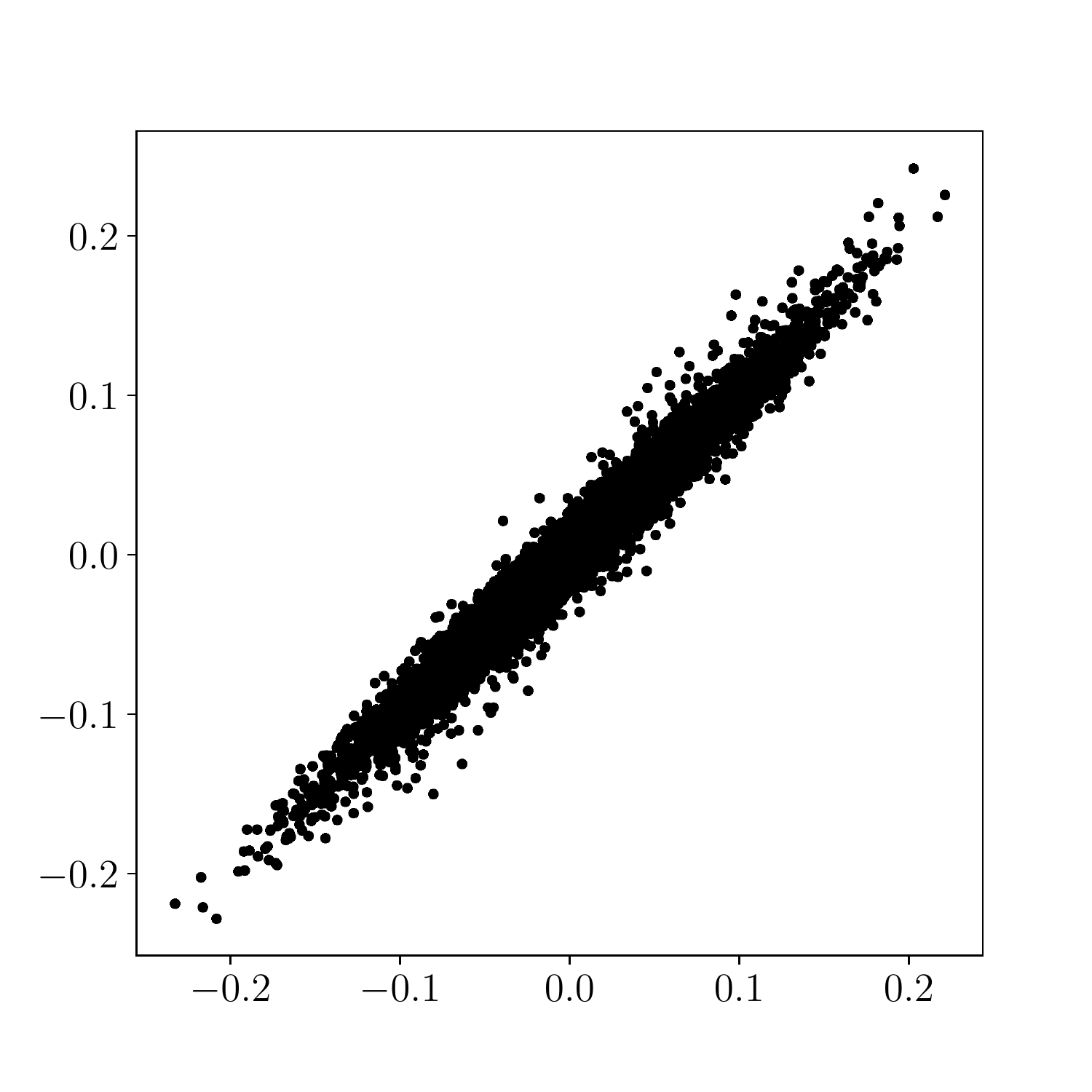} & 
    \includegraphics[width=0.3\linewidth, trim=20 20 20 20, clip]{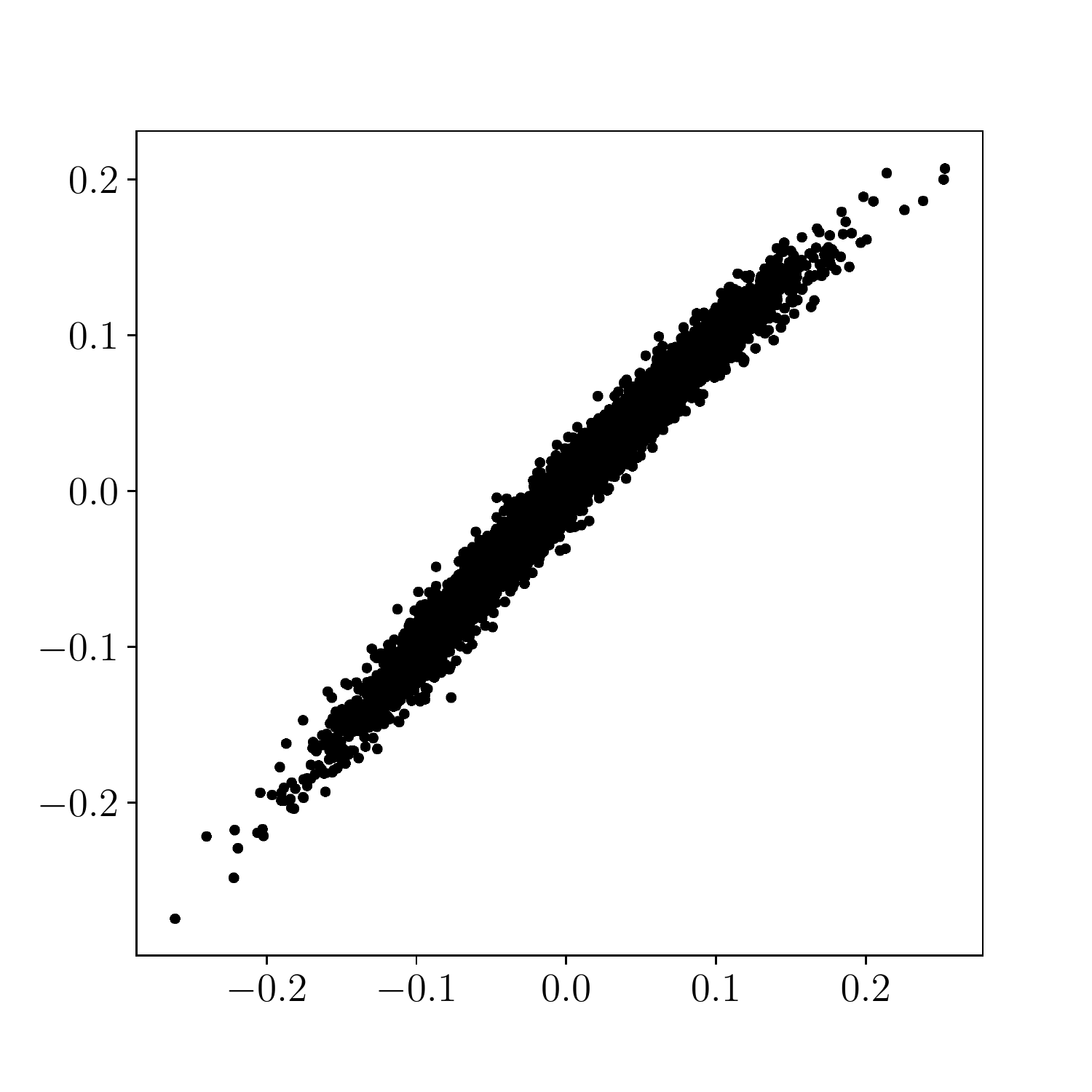} \\
    \footnotesize{(a). TV} & \footnotesize{(b). BM3D} & \footnotesize{(c). DnCNN} 
    \end{tabular}
    \caption{The partial linearity of three denoisers. Top row: denoised images by TV method, BM3D, and DnCNN respectively. Bottom row: the values of $[R\qut{\bmy}-g\qut{\bmx}]_i$ plotted against $ [L \bmnh]_i$  where $i$ is the pixel indicated by the red dot on the top row. Each plot contains $8000$ dots associated with different realizations of $\bmnh$.  
    }
    \label{fig:partial_linearity}
\end{figure}

Given a denoiser $R$, in order to study the partial linearity we consider the pair
\begin{equation}\label{eq:linearity_pairs}
    \qut{ [L \bmnh]_i,   [R\qut{\bmyh} -g\qut{\bmx}]_i }
\end{equation}
where $i$ denotes a fixed pixel (localized by the red dot on the first row of Fig. \ref{fig:partial_linearity}). 
To visualize the linearity for the three denoisers, the pair \eqref{eq:linearity_pairs} is plotted on the second row of Fig. \ref{fig:partial_linearity} under $8000$ realizations of the noise $\bmnh$. Each point in the plot corresponds to one realization. As shown in the figure, all the denoisers have certain degrees of partial linearity at the pixel $i$. For the TV denoiser, the residual $\bme$ has a relatively larger variance (also illustrated by Table \ref{tab:epsilon}) and there are some outliers for big $|[L \bmnh]_i|$. 

\subsection{Denoising experiments}\label{sect:denoisingexp}
We consider two types of synthetic noises (Gaussian noise and Poisson noise), and the performance of our method is compared with recent denoising methods that are not trained on ground truth images, such as SURE \cite{soltanayev2018training}, Noise2Self \cite{batson2019noise2self} and Noisier2Noise \cite{moran2020noisier2noise}, along with the classic BM3D approach \cite{dabov2007image}.
Throughout the denoising experiments, we use the same network architecture DnCNN \cite{zhang2017beyond} for the fully supervised baseline (we will call it DnCNN in the subsequent), SURE, Noise2Self, Noisier2Noise and our approaches. 
The experimental setups for the five methods are the same, except that the clean images are used to train the DnCNN  while they  are unseen by the latter four methods in the learning phase. All models are trained on a benchmark denoising dataset \cite{zhang2017beyond} consisting of $400$ training images of size $180\!\times\!180$. In the training phase, we feed the CNNs with image patches of size $40\!\times\!40$ and set a batch size of $128$. Augmentations such as random flipping and random cropping are applied to the patches. 
In the inference phase, the inputs to the networks are the whole noisy images. In particular, in our approach we do not include the auxiliary vector $\bmz$ at this stage, and the denoised images are the outputs of the network without any post-processing. 
We evaluate the denoising quality on two different test image sets, namely the BSD68 (containing $68$ images) \cite{martin2001database} and the $12$ wildly used images (Set12) \cite{dabov2007image}.

\textit{Training.} To train our denoising model, we minimize the loss function \eqref{eq:empiricalloss} using the Adam optimizer \cite{kingma2014adam}. The minimization process consists of two stages.
In the first stage, we fix $\alpha\!=\!1$ and minimize the loss function \eqref{eq:empiricalloss} without constraints. In the second stage, we randomly choose $\alpha \!\in\! [0.1,0.5]$ for each of the noisy samples, and additionally, in order to control the variance of $\bme$ defined in \eqref{eq:part-lin}, we impose a partially linear constraint \eqref{eq:linearC}. The implementation details of the latter will be given in the next paragraph. Both stages contain $2\!\times\!10^5$ optimization steps with an initial learning rate $0.001$, which then drops to $0.0001$ and $0.00005$ at the $6\!\times\! 10^4$th step and $1.2 \!\times\! 10^5$th step, respectively. 
Throughout the experiments, $z_i$, i.e., the samples of the auxiliary random vector $\bmz$ in \eqref{eq:empiricalloss}, are generated from Gaussian distributions.

\textit{Partially linear constraint.} We implement the partially linear constraint by perturbing the noisy images $\hat{y}_i$ (i.e., samples of the noisy image $\bmyh$) and by following the formula \eqref{eq:linearC}. 
Specifically, for each  $\hat{y}_i$ let $\beta_i^\qut{1}$ and $\beta_i^\qut{2}$ be random numbers uniformly distributed in $[1,1.5]$, and let $q_i$ be a perturbation vector randomly generated from the same distribution as $z_i$. With $q_i$ we construct two perturbed versions of $\hat{y}_i$ as $q^{(1)}_i \!:=\! \hat{y}_i - \beta_i^\qut{1} q_i$ and $q^{(2)}_i \!:=\! \hat{y}_i + \beta_i^\qut{2} q_i$ respectively. Then we have the linear relationship
\[
\hat{y}_i = \tau_1 q^{(1)}_i + \tau_2 q^{(2)}_i
\]
where $\tau_1 := \beta_i^\qut{2} / \qut{\beta_i^\qut{1} + \beta_i^\qut{2}}$ and $\tau_2 := 1 - \tau_1$. 
In practice, we let $q_i$ be independent of $z_i$. 
Since the noise levels of $q^{(1)}_i$ and $q^{(2)}_i$ depend on the pixel values of $q_i$, we modify $q_i$ to be sparse to avoid raising the noise level too much. To do this, we randomly select $1/25$ of the pixels, the pairwise distances of which are at least $4$ pixels. The remaining pixels of $q_i$ are set to zero, i.e., no perturbations are applied to these pixels. To avoid having outliers at some individual pixels caused by the perturbations, we also clip $q_i$ such that the pixel values $q_i^\qut{1}$ and $q_i^\qut{2}$ fit in the range $\left[1.2 a - 0.2 b, 1.2b - 0.2a\right]$ where $a\!:=\!\min_j [\hat{y}_i]_j$ and $b\!:=\!\max_j [\hat{y}_i]_j$. Having $q_i^\qut{1}$ and $q_i^\qut{2}$, based on Property \eqref{eq:linearC} we add the following penalty term to the loss function
\begin{equation}\label{eq:loss_Lc}
\begin{split}
\mathcal{L}_c\qut{\theta} := & \sum_{i} \left\| M \left( \vphantom{\frac{ \beta_{3-k} }{\beta_1+\beta_2}} R_\theta\qut{\hat{y}_i}  - \sum_{k=1}^2
\tau_k R_\theta\qut{q_{i}^{(k)}}\right)\right\|^2, 
\end{split}
\end{equation}
where $\theta$ are the network parameters, $M$ is a diagonal matrix.
The diagonal entries of $M$ are set as 
\[
M_{jj} := 
1/\left[\left| q^{(1)}_i - q^{(2)}_i  \right|
+ 0.1 \sigma\right]_j,
\]
if $[q_i]_j \!\neq\! 0$, otherwise $M_{jj} \!:=\! 0$, where $\sigma \!>\! 0$ is the square root of the largest pixel-wise variance of the noise $\bmn$. Note that having $M$ in the loss \eqref{eq:loss_Lc} means that we penalize the nonlinearity at the perturbed pixels only. 
The term $0.1 \sigma$ prevents division by very small numbers.  In summary, the loss function is $\mathcal{L} + \gamma \mathcal{L}_c$ where $\mathcal{L}$ is defined in \eqref{eq:empiricalloss}, and $\gamma$ is a hyperparameter which can be tuned based on the quality of the denoised images. 

\subsubsection{Gaussian noise}\label{subsect:exp_gaussian}
In this experiment, we test the denoisers for restoring images corrupted by Gaussian white noise. The training sets for our unsupervised approach are the noisy images.
We consider two different levels of noise with standard deviation $\sigma \!=\! 25$ and $\sigma \!=\! 50$ 
(corresponding to 256 gray levels),
respectively. 
Associated with the two noise levels, two denoisers are trained using the proposed method, and the ground truth images are unseen during the training phases. In both cases, the parameter $\gamma$ for the partially linear constraint term \eqref{eq:loss_Lc} is set to $4$. 

The test results for BSD68 \cite{martin2001database} are reported in Table \ref{tab:bsd68_gau}, where we call our method DPLD (\textit{deep partially linear denoiser}). 
The denoising quality is measured by the peak signal-to-noise ratio (PSNR) and the structural similarity (SSIM) index. We compare our denoiser with BM3D \cite{dabov2007image}, the self-supervised method Noise2Self \cite{batson2019noise2self}, SURE \cite{soltanayev2018training}, Noisier2Noise \cite{moran2020noisier2noise} as well as the fully-supervised DnCNN \cite{zhang2017beyond}.  It is worth mentioning that the last denoiser DnCNN, in contrast to the other five, requires the noisy-clean image pairs for training. 

\begin{table}[ht]
    \centering
    \caption{Denoising quality, measured by PSNR (dB) and SSIM, for BSD68 \cite{martin2001database} corrupted by Gaussian Noise.}
        \begin{tabular}{|c|cc||cc|}
        \hline
        Noise Level & \multicolumn{2}{c||}{$\sigma=25$} & \multicolumn{2}{c|}{$\sigma=50$} \\
        \hline
        Measure & PSNR & SSIM & PSNR & SSIM \\
        \hline
        
        BM3D\cite{dabov2007image} & $28.58$ & $0.8861$ & $25.66$ & $0.8041$ \\
        Noise2Self\cite{batson2019noise2self} & $27.48$ & $0.8588$ & $25.15$ & $0.7818$ \\
       SURE\cite{soltanayev2018training} &  $28.99$ & $0.8961$ & $25.88$ & $0.8118$ %
       \\
       Noisier2Noise\cite{moran2020noisier2noise} & $28.96$ & $0.8951$ & $25.96$ & $0.8125$
        \\
        DPLD & $\bm{29.08}$ & $\bm{0.8961}$ & $\bm{26.13}$ & $\bm{0.8196}$ \\
        \hline
        DnCNN\cite{zhang2017beyond} & $29.22$ & $0.9017$ & $26.24$ & $0.8265$ \\
        \hline
        \end{tabular}
    \label{tab:bsd68_gau}
\end{table}

As shown in Table \ref{tab:bsd68_gau}, the fully supervised denoiser DnCNN achieves the best accuracy. This is not surprising as it learns from the ground truth images which are not provided for the other ones. Our method is the best among the denoisers trained without the ground truth. It outperforms the Noise2Self, SURE and Noisier2Noise by $1.6$ dB, $0.09$ dB and $0.12$ dB respectively for noise level $\sigma \!=\! 25$, and outperforms them by $0.98$ dB, $0.25$ dB and $0.17$ dB respectively in the $\sigma \!=\! 50$ case. Compared to the DnCNN, the PSNR values of DPLD are lower by $0.14$ dB and $0.11$ dB for noise levels $\sigma \!=\! 25$ and $\sigma \!=\! 50$ respectively.

\begin{table*}[ht]
\centering
\caption{Denoising quality (in dB) for the $12$ wildly used image \cite{dabov2007image} and Gaussian noise}\label{tab:set12_gau}
\makebox[\textwidth][c]{
\begin{tabular}{|c|c|c|cccccccccccc|}
\hline
& {Denoiser} & 
{Average} &{C. man} &  {House} &  {Peppers} &  {Starfish} &  {Monarch}  &  {Airplane}  &  {Parrot}  &  {Lena} &  {Barbara} &  {Boat} &  {Man} &  {Couple}  \\
\hline
\hline
\multirow{4}{*}[-1ex]{\rotatebox{90}{$\sigma  = 25$}} 
& BM3D & $29.97$ & $29.39$ & $32.98$ & $30.18$ & $28.61$ & $29.3$ & $28.43$ & $28.83$ & $32.05$ & $\bm{30.61}$ & $29.86$ & $29.64$ & $29.72$ \\
& Noise2Self & $28.81$ & $27.95$ & $32.22$ & $29.51$ & $28.12$ & $28.64$ & $26.7$ & $27.58$ & $31.52$ & $26.41$ & $29.06$ & $29.04$ & $28.94$  \\ 
& SURE & $30.12$ & $29.75$ & $32.65$ & $30.51$ & $29.15$ & $30.08$ & $\bm{28.97}$ & $\bm{29.32}$ & $32.08$ & $29.28$ & $29.93$ & $29.94$ & $29.82$ %
\\
& Noisier2Noise & $30.12$ & $29.69$ & $32.74$ & $30.52$ & $29.13$ & $30.10$ & $28.95$ & $\bm{29.32}$ & $32.10$ & $29.22$ & $29.92$ & $29.94$ & $29.79$ 
\\
& DPLD & $\bm{30.28}$ & $\bm{29.84}$ & $\bm{33.04}$ & $\bm{30.69}$ & $2\bm{9.26}$ & $\bm{30.21}$ & $\bm{28.97}$ & ${29.30}$ & $\bm{32.33}$ & $29.66$ & $\bm{30.10}$ & $\bm{30.02}$ & $\bm{29.97}$  \\ 
\cline{2-15}
& DnCNN & $30.44$ & $30.08$ & $33.13$ & $30.8$ & $29.44$ & $30.39$ & $29.12$ & $29.48$ & $32.43$ & $29.96$ & $30.21$ & $30.12$ & $30.12$  \\ 
\hline
\hline
\multirow{4}{*}[-1ex]{\rotatebox{90}{$\sigma  =50$}} 
& BM3D & $26.71$ & $26.36$ & $29.75$ & $26.69$ & $24.99$ & $25.9$ & $25.22$ & $25.74$ & $28.84$ & \bm{$26.98$} & $26.76$ & $26.84$ & $26.49$ \\
& Noise2Self & $26.14$ & $25.60$ & $29.20$ & $26.36$ & $24.67$ & $25.68$ & $24.53$ & $25.29$ & $28.55$ & $24.56$ & $26.45$ & $26.64$ & $26.10$  \\ 
& SURE & $26.62$ & $26.45$ & $29.25$ & $26.74$ & $25.20 $ & $26.23$ & $25.50$ & $26.10$ & $28.70$ & $25.01$ & $26.80$ & $26.99$ & $26.43$ %
\\
& Noisier2Noise & $26.79$ & $26.61$ & $29.65$ & $26.98$ & $25.25$ & $26.49$ & $25.62$ & $26.21$ & $28.94$ & $25.13$ & $26.92$ & $27.08$ & $26.56$ 
\\
& DPLD & $\bm{27.05}$ & \bm{$26.89$} & \bm{$30.01$} & \bm{$27.22$} & \bm{$25.45$} & \bm{$26.72$} & \bm{$25.76$} & \bm{$26.36$} & \bm{$29.24$} & $25.70$ & \bm{$27.13$} & \bm{$27.23$} & \bm{$26.83$}  \\ 
\cline{2-15}
& DnCNN & $27.19$ & $27.03$ & $30.10$ & $27.36$ & $25.55$ & $26.87$ & $25.89$ & $26.45$ & $29.29$ & $26.26$ & $27.22$ & $27.27$ & $26.94$  \\ 
\hline
\hline
\end{tabular}}
\end{table*}

The comparison of the denoisers on the $12$ wildly used images \cite{dabov2007image} is given in Table \ref{tab:set12_gau}. For noise level $\sigma \!=\! 50$, the DPLD reaches the best PSNR for all images among the five denoisers that do not consume ground truth data, except for the images \textit{Parrot} and \textit{Barbara}. It is interesting to note that BM3D performs better than the fully supervised method DnCNN on the image \textit{Barbara}. 
On average, for $\sigma \!=\! 50$ it outperforms Noise2Self by $0.91$ dB  and SURE by $0.43$ dB respectively, and it falls behind the DnCNN by $0.14$ dB.

\def\widthfive{0.15\linewidth}
\def\getdemoa#1{
\begin{tikzpicture}[spy using outlines]
   \node{\includegraphics[width=\widthfive]{#1}};
    \spy [blue, magnification=7, width=\widthfive,height=0.8*\widthfive, line width=20] on (-0.25*\widthfive, -0.06*\widthfive) in node at (0,-0.94*\widthfive);
    \spy [red, magnification=7, width=\widthfive,height=0.8*\widthfive, line width=2] on (0.26*\widthfive,0.07*\widthfive) in node at (0,-1.78*\widthfive);
\end{tikzpicture}
}
\begin{figure*}[h!]
    \centering
    \setlength{\tabcolsep}{0.5pt}
    \begin{tabular}{cccccc}
        \getdemoa{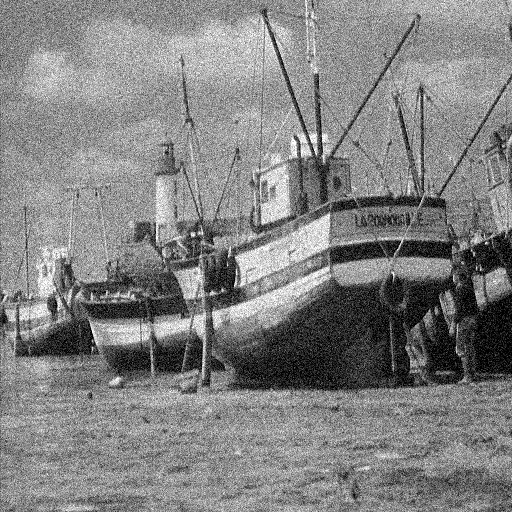} 
        &
        \getdemoa{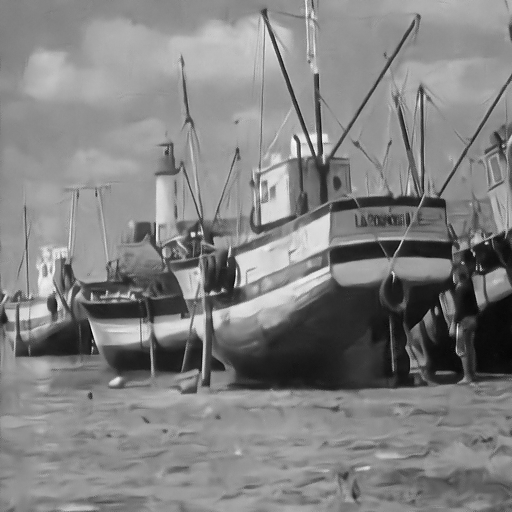}
        &
        \getdemoa{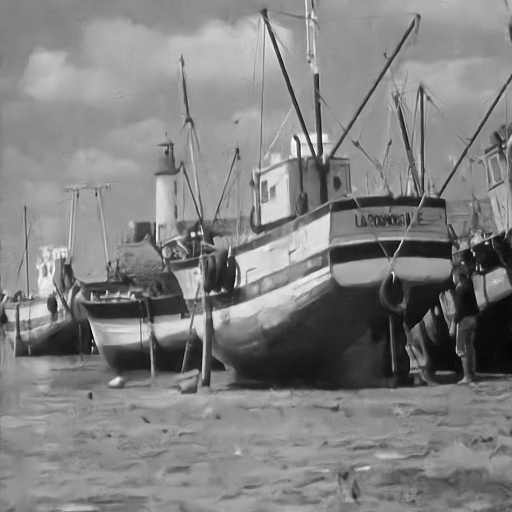} %
        &
        \getdemoa{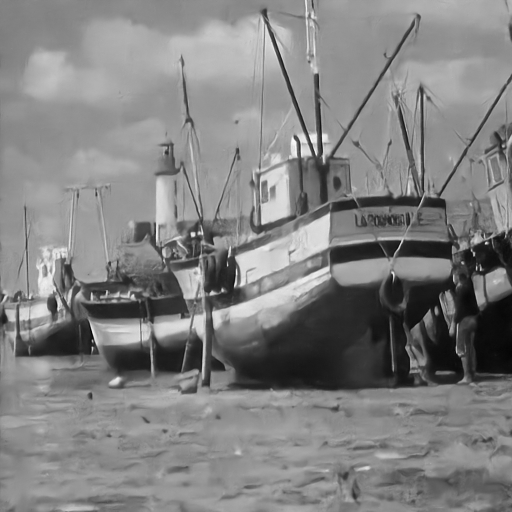}
        &
        \getdemoa{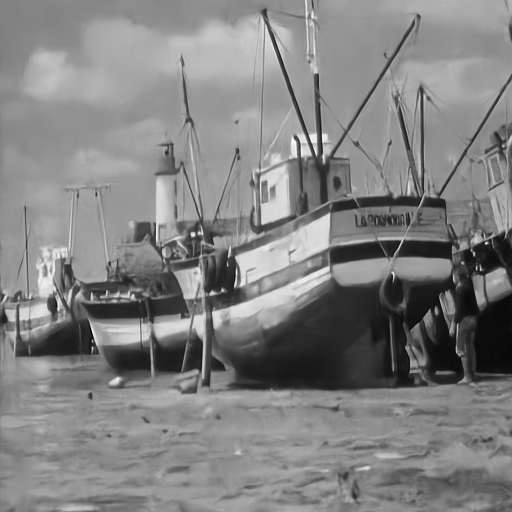}
        &
        \getdemoa{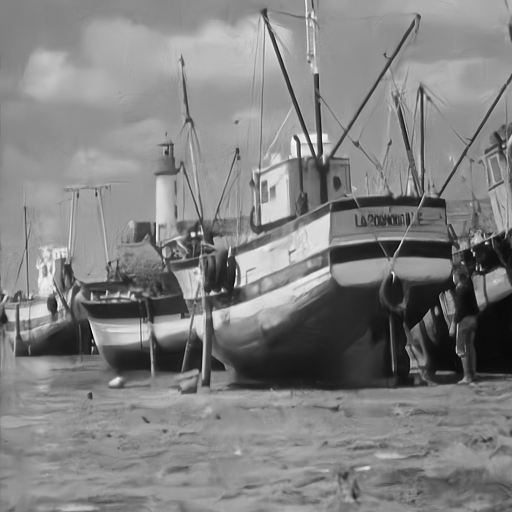}
        \\
        \footnotesize{Noisy} & \footnotesize{Noise2Self} & \footnotesize{SURE} & \footnotesize{Noisier2Noise} & \footnotesize{DPLD} & \footnotesize{DnCNN} \\
    \end{tabular}
    \caption{Denoised results for the image Boat (with Gaussian noise {\small $\sigma \!=\! 25$}). The last two rows are enlarged views of the indicated regions.}
    \label{fig:denoised_gau}
\end{figure*}

Fig. \ref{fig:denoised_gau} displays the denoising results for the image "Boat". It can be seen that, though the Noise2Self, SURE, Noisier2Noise, and DPLD do not see the clean images or have any explicit smoothness constraints during the training stages, they yield denoised images with smooth regions (Cf. the 2nd to 5th columns of Fig. \ref{fig:denoised_gau}, respectively). 
The resulted image from DPLD looks close to that of DnCNN visually.
The output of Noise2Self has some relatively blurry edges compared to DPLD and DnCNN, e.g., at the letters displayed in the last row of Fig. \ref{fig:denoised_gau}.

\begin{figure}[h!]
    \centering
    \setlength{\tabcolsep}{1pt}
    \begin{tabular}{cc}
        \includegraphics[width=0.5\linewidth, trim=0 7 0 10, clip]{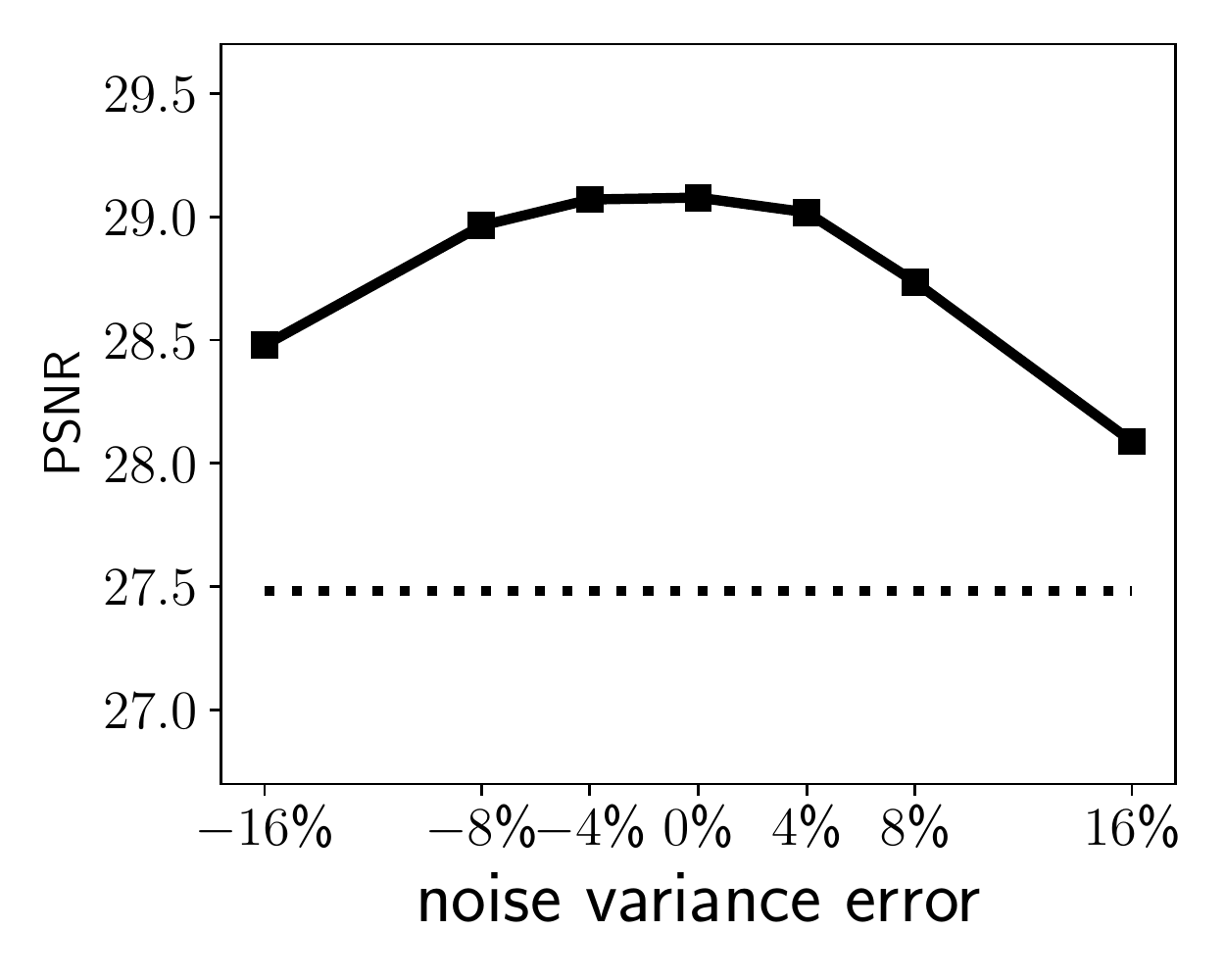}
        &
        \includegraphics[width=0.5\linewidth,trim=0 7 0 10, clip]{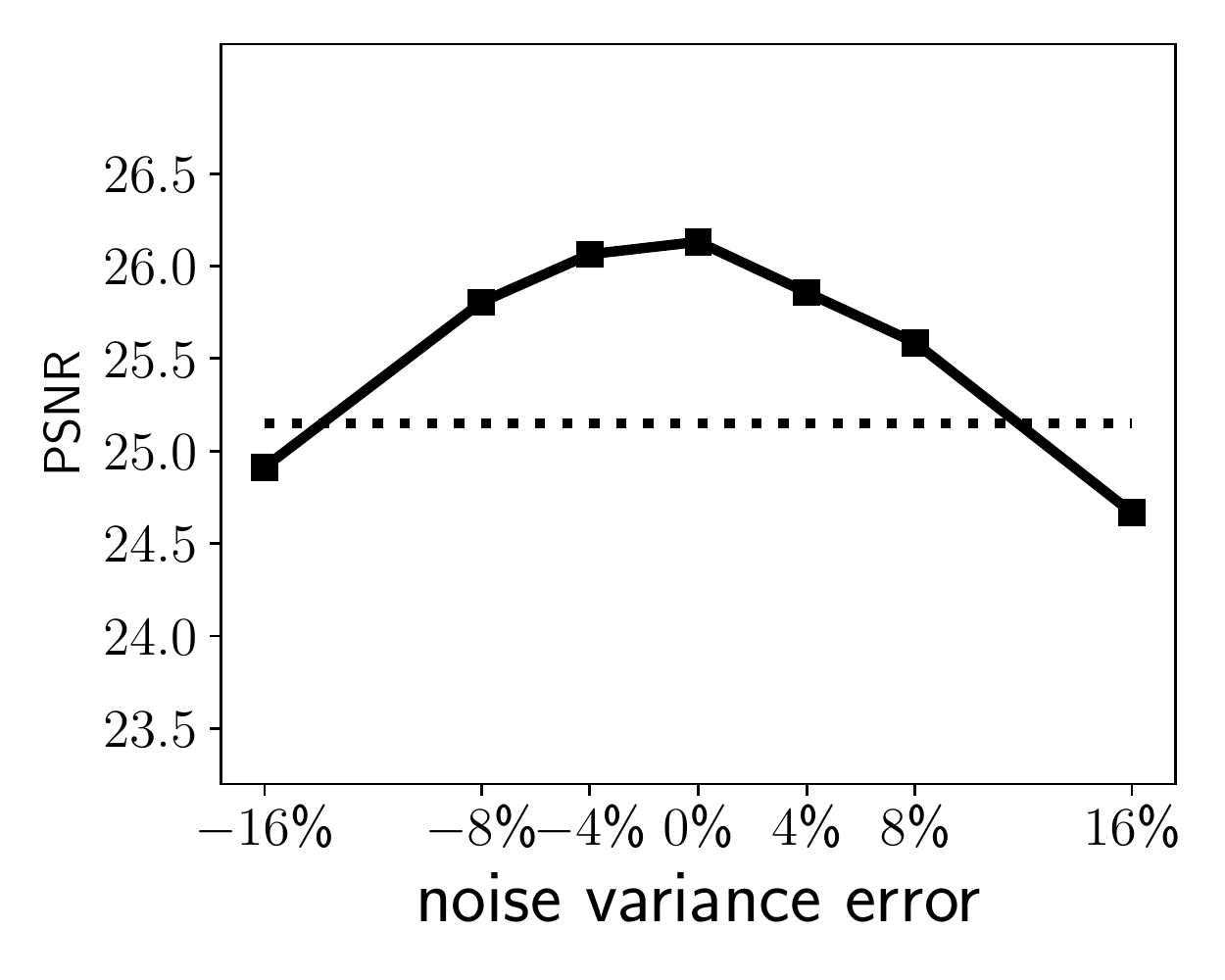}\\
        \includegraphics[width=0.5\linewidth,trim=0 7 0 10, clip]{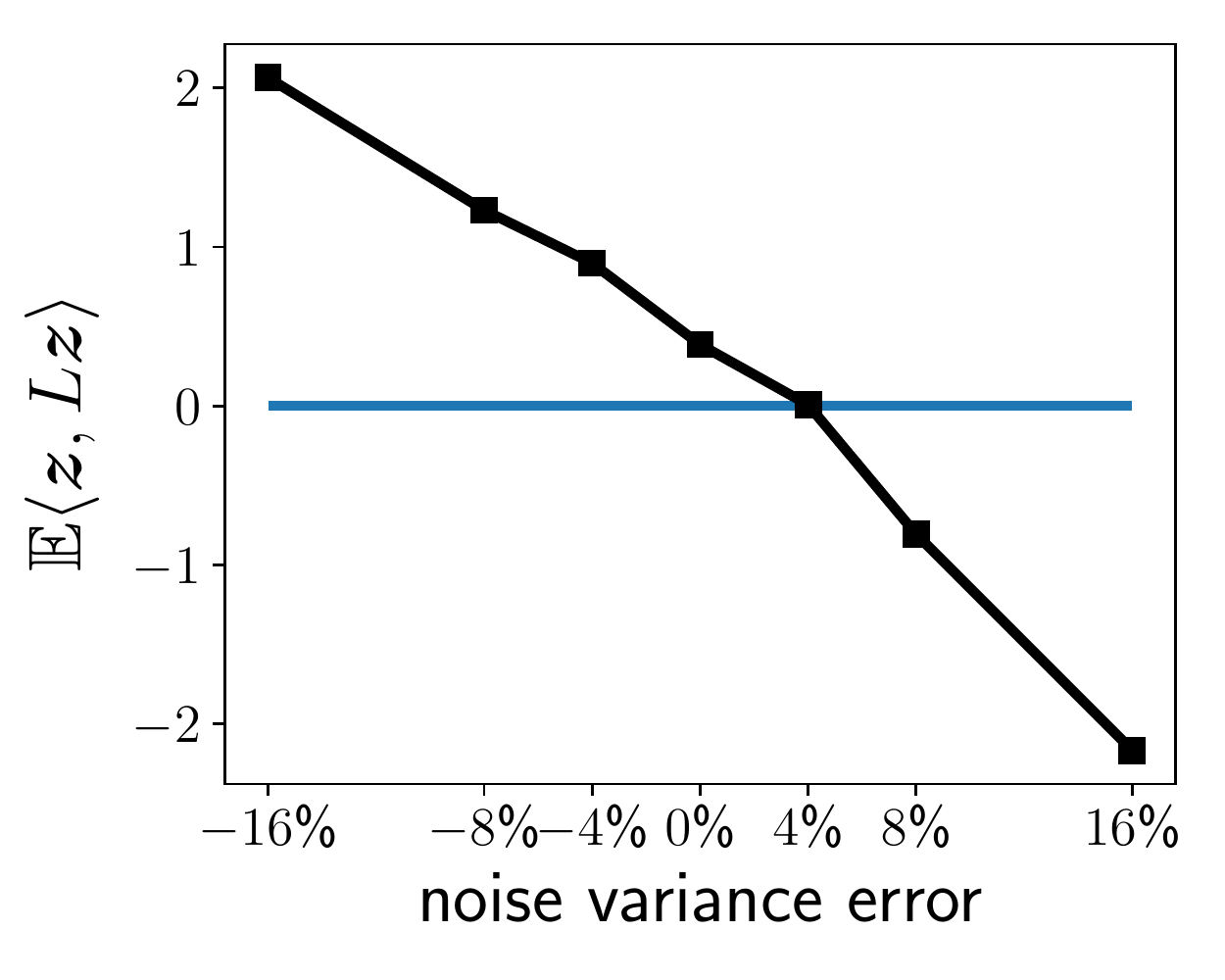}& 
        \includegraphics[width=0.5\linewidth,trim=0 7 0 10, clip]{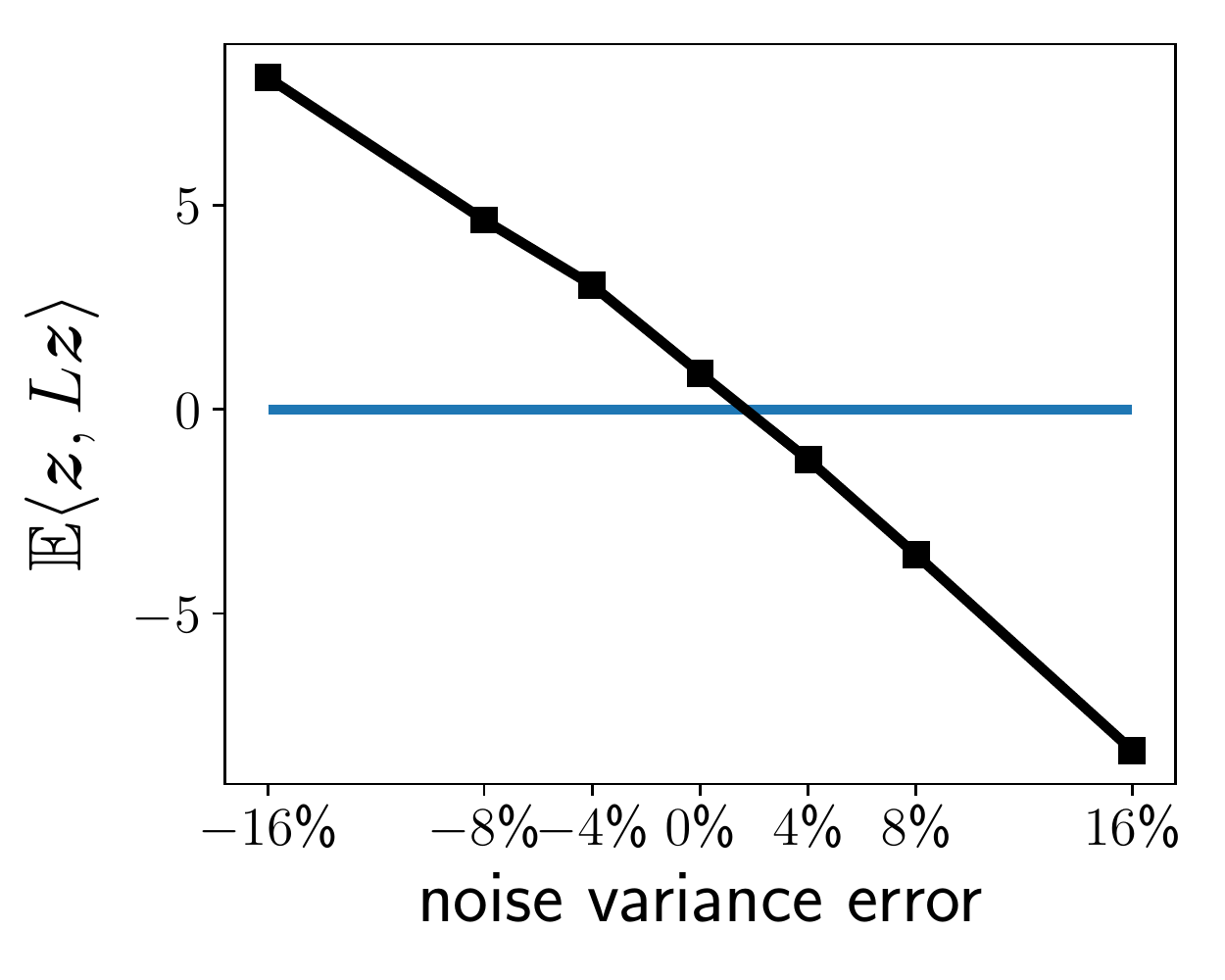} \\
        \footnotesize{ (a). $\sigma=25$} & \footnotesize{ (b). $\sigma=50$} 
    \end{tabular}
    \caption{The influence of inaccurate noise variance during training. Top row: PSNR on the test set BSD68 \cite{martin2001database}. The dotted horizontal lines indicate results of Noise2Self \cite{batson2019noise2self}. Bottom row: the mean of $\qutan{\bmz, L\bmz}$ of the learned denoisers.}\label{fig:noisevarianceerror}
\end{figure}

\subsubsection{Training with estimated noise variance}\label{subsect:env}
We report additional results for the scenarios where only an inaccurate estimated noise variance (ENV) is available (for generating $\bmz$) during training. The results are obtained under the same settings as Subsection \ref{subsect:exp_gaussian} except that an ENV is used. 
\begin{enumerate}
    \item[-] On the top row of Fig. \ref{fig:noisevarianceerror}, the PSNR values of the test results on BSD68 \cite{martin2001database}, for both noise levels $\sigma\!=\!25$ and $\sigma\!=\!50$, are plotted against the relative errors in the ENV $(1+\beta)\sigma^2$ (where $\beta\!=\!-16\%, -8\%, \cdots, 16\%$). 
For both noise levels, the errors in the ENV lead to a decrease in PSNR. The reduction is less significant when the noise variance is underestimated than when it is overestimated. The drop in PSNR is small when the error in the ENV is small ($<\!0.1$ dB for an ENV $4$\% less than its true value), and our method maintains an accuracy significantly better than Noise2Self \cite{batson2019noise2self} for small errors in the ENV.
\item[-]  We observe a correlation between the errors in the ENV and the term $\bbE\qut{\qutan{\bmz, L\bmz}}$ (see Fig. \ref{fig:noisevarianceerror}, bottom row). Here $L$ is computed based on Remark \ref{remark1}, with a constant ground truth image $\bmx$ (pixel values: $0.5$, size: $40\!\times\!40$) and generated $\bmn$ following the same distribution as $\bmz$ (variance: $(1+\beta)\sigma^2$). In the plots,  $\bbE\qut{\qutan{\bmz, L\bmz}}$ is a small positive number when $\beta\!=\!0$, and it increases (resp. decreases) as the ENV decreases (resp. increases). The is due to the extra term $2 \beta \bbE\qut{\qutan{\bmz, L\bmz}}$ in the objective function $\mathcal{J}\qut{\cdot}$ as shown in Remark \ref{remark2}.
\end{enumerate}
Overall, the performance is stable to small errors in ENV, and importantly, by analyzing the structure of the learned denoiser one could investigate the error in the noise variance. The quantity $\bbE\qut{\qutan{\bmz, L\bmz}}$ is helpful for choosing a better denoising model when the noise variance is unknown.

\subsubsection{Poisson noise}\label{subsect:exp_poission}
We evaluate our method on three different levels of Poisson noise, with parameter $\lambda\!=\!60$, $\lambda\!=\!30$ and $\lambda\!=\!15$ respectively. The training settings for the denoisers are the same as the ones for Gaussian noise, except that the variances of the auxiliary vector samples $z_i$ are computed differently. As the noise $\bmn$ is image dependent and not identically distributed for all image pixels, its variances are not known without having the ground truth image. Note that however our method does not require a precise model on the noise distribution, but only an estimate of the noise variance. 
At pixel $k$, the noisy value satisfies $\lambda [\bmy]_k \sim  {\rm Pois}\qut{\lambda [\bmx]_k}$ where $\lambda$ is a known constant. So the noise variance ${\rm var}\qut{[\bmn_k]\!\mid\![\bmx]_k}\!=\! {\rm var}\qut{[\bmy]_k\!\mid \! [\bmx]_k}\!=\! \frac{1}{\lambda} [\bmx]_k \!=\! \frac{1}{\lambda} \bbE\qut{[\bmy]_k\!\mid\! [\bmx]_k}$. This implies that the sample value of $\frac{1}{\lambda}[\bmy]_k$ provides an unbiased estimate of the noise variance.
The entries of the auxiliary vector are then generated as $[\bmz]_k \!=\! {\bm r}_k\sqrt{[\bmy]_k/\lambda}$ where ${\bm r}_k$ is the standard Gaussian random variable, hence ${\rm var}\qut{ [\bmz]_k \! \mid \! [\bmx]_k} \!=\! \frac{1}{\lambda} \bbE\qut{[\bmy]_k\!\mid\! [\bmx]_k} \!=\! {\rm var}\qut{ [\bmn]_k \! \mid \! [\bmx]_k}$.
The partially linear constraint parameter $\gamma$ for \eqref{eq:loss_Lc} is tuned manually for the three noise levels.
In this subsection, unless specified otherwise, the results are obtained by setting $\gamma = 4, 16, 64$ for $\lambda = 60, 30, 15$, respectively. 

\begin{table}[ht]
    \centering
    \caption{Denoising quality, measured by PSNR (dB) and SSIM, for BSD68 \cite{martin2001database} corrupted by Poisson Noise.}
    \label{tab:bsd68_poi}
\begin{tabular}{|c|cc|cc||cc|}
\hline
\multirow{2}{*}{\makecell{Noise \\ Level}} & \multicolumn{2}{c|}{Noise2Self\cite{batson2019noise2self}} & \multicolumn{2}{c||}{DPLD} & \multicolumn{2}{c|}{DnCNN\cite{zhang2017beyond}} \\
\cline{2-7}
& PSNR & SSIM & PSNR & SSIM & PSNR & SSIM \\
\hline
$\lambda = 60$ & $27.78$ & $0.8660$ & $\bm{29.28}$ & $\bm{0.9018}$ & $29.43$ & $0.9081$ \\ 
$\lambda = 30$ & $26.56$ & $0.8285$ & $\bm{27.65}$ & $\bm{0.8625}$ & $27.86$ & $0.8743$ \\ 
$\lambda = 15$ & $25.42$ & $0.7902$ & $\bm{26.16}$ & $\bm{0.8210}$ & $26.39$ & $0.8348$ \\
\hline
\end{tabular}
\end{table}

For comparison, we train denoisers with Noise2Self \cite{batson2019noise2self} and DnCNN \cite{zhang2017beyond} on the same training set (whereas the ground truth images are available only for DnCNN). 
The denoising results on the test set BSD68 are reported in Table \ref{tab:bsd68_poi}. 
Trained on the ground truth images, the DnCNN has the highest average PSNR for all three noise levels. The proposed method DPLD outperforms Noise2Self by $1.5$ dB, $1.09$ dB and $0.74$ dB for the cases with $\lambda = 60, 30, 15$ respectively. On the other hand, it losses $0.15$ dB, $0.21$ dB and $0.23$ dB when compared to DnCNN. It should be noted that when the noise level decreases, the PSNR gap between DPLD and DnCNN gets smaller. In contrast, the gap between Noise2Self and DnCNN becomes larger as the noise becomes smaller. This may be due to the fact that the Noise2Self approach can not learn identity mapping. For a given pixel, the denoiser can not see its observed value and has to infer its value from the information of its neighboring pixels. 
If knowledge about the noise distribution is available, then the results can be improved by reusing the noisy images in the inference phase \cite{krull2019probabilistic}.  
Our denoiser uses all information of the noisy image, and, as shown in Proposition \ref{prop:2}, the gap between the DPLD and the best denoiser in $\mathcal{R}_\epsilon$ tends to zero as the noise goes to zero. 

\begin{table*}[ht]
\centering
\caption{Denoising quality (in dB) for the $12$ wildly used images \cite{dabov2007image} and Poisson noise}\label{tab:set12_poi}
\makebox[\textwidth][c]{\begin{tabular}{|c|c|c|cccccccccccc|}
\hline
& {Denoiser} & {Average} & {C. man} &  {House} &  {Peppers} &  {Starfish} &  {Monarch}  &  {Airplane}  &  {Parrot}  &  {Lena} &  {Barbara} &  {Boat} &  {Man} &  {Couple}  \\
\hline
\hline
\multirow{3}{*}[-0.4ex]{\rotatebox{90}{$\lambda = 60$}} 
& Noise2Self & $29.15$  & $28.32$ & $32.13$ & $29.79$ & $27.94$ & $28.99$ & $26.63$ & $27.98$ & $31.74$ & $28.60$ & $29.34$ & $29.20$ & $29.18$  \\ 
& DPLD & $\bm{30.34}$ & $30.05$ & $32.89$ & $30.82$ & $28.93$ & $30.54$ & $28.64$ & $29.68$ & $32.56$ & $29.79$ & $30.09$ & $30.02$ & $30.07$  \\ 
\cline{2-15}
& DnCNN & $30.49$ & $30.29$ & $33.02$ & $30.95$ & $29.07$ & $30.69$ & $28.79$ & $29.82$ & $32.67$ & $30.03$ & $30.22$ & $30.12$ & $30.21$  \\ 
\hline
\hline
\multirow{3}{*}[-0.4ex]{\rotatebox{90}{$\lambda = 30$}} 
& Noise2Self & $27.82$ & $27.36$ & $30.74$ & $28.27$ & $26.31$ & $27.72$ & $25.61$ & $26.95$ & $30.36$ & $26.79$ & $27.91$ & $28.02$ & $27.82$  \\ 
& DPLD & $\bm{28.68}$ & $28.43$ & $31.44$ & $29.06$ & $27.08$ & $28.74$ & $26.96$ & $28.14$ & $31.00$ & $27.74$ & $28.48$ & $28.59$ & $28.47$  \\ 
\cline{2-15}
& DnCNN & $28.87$ & $28.79$ & $31.67$ & $29.23$ & $27.16$ & $28.89$ & $27.11$ & $28.30$ & $31.14$ & $28.20$ & $28.60$ & $28.71$ & $28.67$  \\ 
\hline
\hline
\multirow{3}{*}[-0.4ex]{\rotatebox{90}{$\lambda = 15$}} 
& Noise2Self & $26.26$ & $26.01$ & $28.81$ & $26.60$ & $24.62$ & $25.99$ & $24.23$ & $25.58$ & $28.69$ & $24.94$ & $26.48$ & $26.78$ & $26.36$  \\ 
& DPLD & $\bm{26.99}$ & $27.00$ & $29.76$ & $27.27$ & $25.22$ & $26.88$ & $25.26$ & $26.63$ & $29.39$ & $25.31$ & $27.02$ & $27.25$ & $26.87$ \\ 
\cline{2-15}
& DnCNN & $27.28$ & $27.32$ & $30.13$ & $27.51$ & $25.42$ & $27.10$ & $25.47$ & $26.85$ & $29.62$ & $26.30$ & $27.20$ & $27.37$ & $27.11$  \\ 
\hline
\hline
\end{tabular}}
\end{table*}

Table \ref{tab:set12_poi} lists the PSNR of denoising outputs for the $12$ wildly used images \cite{dabov2007image}. Similar to the Gaussian noise cases, the proposed DPLD has higher PSNR values than Noise2Self on all images. The DPLD outperforms Noise2Self by more than $0.7$ dB in average PSNR, and falls behind DnCNN by less than $0.3$ dB.

\def\widthfive{0.18\linewidth}
\def\getdemo#1{
\begin{tikzpicture}[spy using outlines]
   \node{\includegraphics[width=\widthfive]{#1}};
    \spy [blue, magnification=5, width=\widthfive,height=0.8*\widthfive, line width=20] on (-0.3*\widthfive, -0.27*\widthfive) in node at (0,-.94*\widthfive);
    \spy [red, magnification=5, width=\widthfive,height=0.8*\widthfive, line width=2] on (0.2*\widthfive,0.02*\widthfive) in node at (0,-1.78*\widthfive);
\end{tikzpicture}
}
\begin{figure*}[h!]
    \centering
    \setlength{\tabcolsep}{1pt}
    \begin{tabular}{ccccc}
        \getdemo{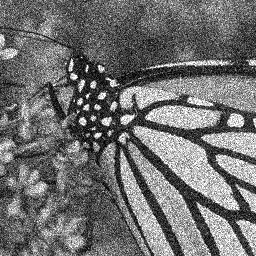}
        &
        \getdemo{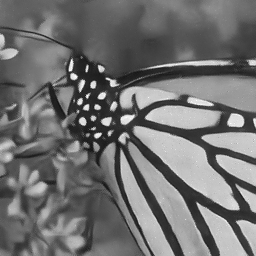}
        &
        \getdemo{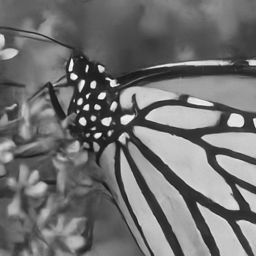}
        &
        \getdemo{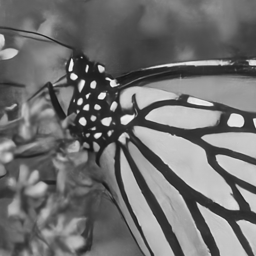}
        & 
        \getdemo{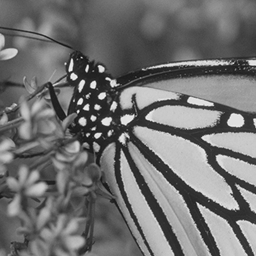} 
        \\
        \footnotesize{Noisy} & \footnotesize{Noise2Self} & \footnotesize{DPLD} & \footnotesize{DnCNN} & \footnotesize{Ground truth} \\
    \end{tabular}
    \caption{Quality comparison for different methods for Poisson noise ($\lambda\!=\!30$). The last two rows are enlarged views of the indicated regions.}
    \label{fig:poi_result}
\end{figure*}

Fig. \ref{fig:poi_result} displays an example of the denoised images. This example shows that Noise2Self, DPLD and DnCNN are capable of recovering the details of the image, though the first two are not exposed to the detailed structures of the images during training. Compared to DPLD and DnCNN, the denoised image of Noise2Self is less smooth. Also, Noise2Self tends to remove the sharp points of a jagged edge (Cf. the third row of Fig. \ref{fig:poi_result}), since it relies on the data of surrounding pixels when restoring the pixels at the sharp point and therefore 
may encourage more regularized shapes of objects.
\begin{figure*}[h!]
  \centering
  \setlength{\tabcolsep}{3pt}
  \begin{tabular}{c|ccc}
       \includegraphics[trim=35 30 30 30, clip, width=0.2\linewidth]{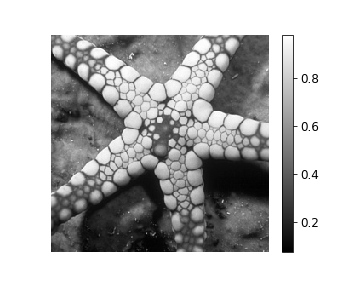}
       &
       \includegraphics[trim=35 30 30 30, clip, width=0.2\linewidth]{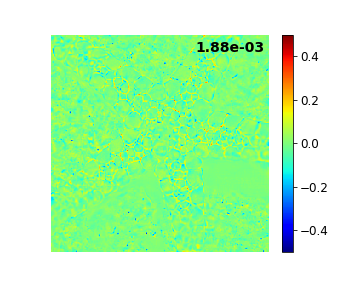}
       & 
       \includegraphics[trim=35 30 30 30, clip, width=0.2\linewidth]{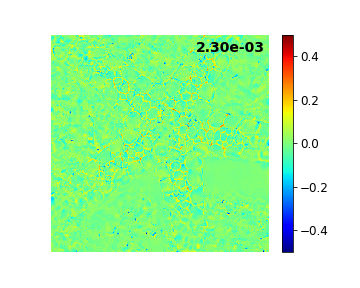}
       &
       \includegraphics[trim=35 30 30 30, clip, width=0.2\linewidth]{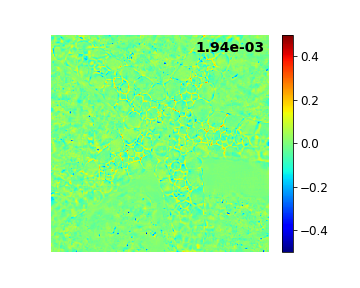}
       \\
       \addlinespace[-1.0ex]
       \footnotesize{Ground truth $\bmx$} & \footnotesize{DnCNN error} & \footnotesize{Noise2Self error} & \footnotesize{Ours error} 
       \\
       \addlinespace[0.5ex]
       \includegraphics[trim=35 30 30 30, clip, width=0.2\linewidth]{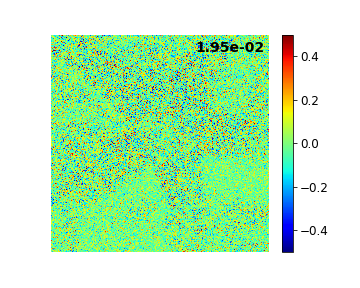}
       &
       \includegraphics[trim=35 30 30 30, clip, width=0.2\linewidth]{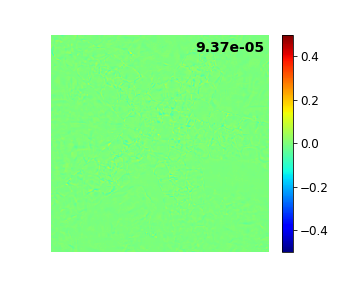}
       & 
       \includegraphics[trim=35 30 30 30, clip, width=0.2\linewidth]{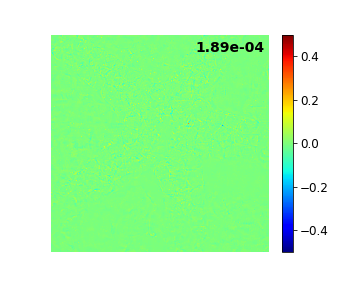}
       &
       \includegraphics[trim=35 30 30 30, clip, width=0.2\linewidth]{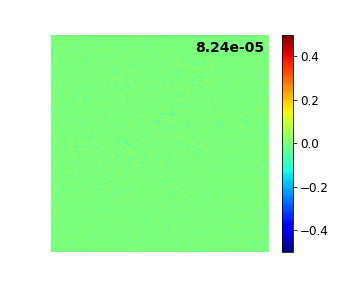}
       \\
       \addlinespace[-1.0ex]
       \footnotesize{Noise $\bmn$} & \footnotesize{DnCNN residual $\bme$} & \footnotesize{Noise2Self residual $\bme$} & \footnotesize{Ours residual $\bme$}
  \end{tabular}
  \caption{Residual terms of DnCNN, Noise2Self and our method.
  The error images displayed on the first row are computed by subtracting the ground truth from the denoised images. The number on the top right corner of the images is the mean square of pixel values.}
  \label{fig:poisson_residual}
\end{figure*}

Finally, the residual term $\bme$ (defined in \eqref{eq:part-lin-pure}) of three different methods, on the test image \textit{Starfish}, are presented in 
Fig. \ref{fig:poisson_residual}. 
For each method, in order to compute $\bme$ we first compute $g\qut{\bmx}$ and $L$ with Remark \ref{remark1} and using its denoised outputs for 1,600,000 realizations of noisy image $\bmy$. This experiment is carried out under the setting of Poisson noise with $\lambda\!=\!30$. As can be seen from Fig. \ref{fig:poisson_residual}, the error (i.e., the difference between the ground truth and the denoised image) is an order of magnitude smaller than the noise, while the residual term is an order of magnitude smaller than the error. The variance of $\bme$ of the proposed denoiser is slightly smaller than the fully supervised denoiser DnCNN, and Noise2Self has a variance of $\bme$ about two times larger than the other two. 

\subsection{Stability with respect to the partial linear constraint and noise levels}\label{subsect:exp_stability}
In this experiment we study the stability of the proposed denoiser against two parameters, $\gamma$ and the noise levels. Here we consider Poisson noise with fixed $\lambda\!=\!30$ for training, under the same experimental setting as Subsection \ref{subsect:exp_poission}. 

First we study the stability concerning $\gamma$.
Fig. \ref{fig:stability_gamma} shows the PSNR and SSIM, evaluated on the test set BSD68, for different choices of $\gamma$ used during training. The horizontal lines in Fig. \ref{fig:stability_gamma} show the PSNR and SSIM of the other two methods DnCNN and Noise2Self (N2S), which do not depend on $\gamma$. 
According to the figure, the learned denoiser has very low PSNR values, when $\gamma$ is very small (e.g., $10^{-4}$, which means that less important is put on the constraint). This implies that the partial linearity is crucial for finding a high quality denoiser when minimizing \eqref{eq:empiricalloss}. As $\gamma$ becomes larger, such constraint starts to take effect and substantially improve the denoising performance. The peak of PSNR is achieved at around $2^4$. For $\gamma$ in $[2^2,2^8]$, we observe a variation of PSNR less than $0.2$ dB, which is relatively small compared to the gap between N2S and DnCNN, before the denoising quality degrades at very large $\gamma$ values (e.g., $2^{14}$).

Fig. \ref{fig:stability_lambda} compares the robustness of the learned denoisers in terms of different noise levels. All denoisers are trained in the Poisson noise setting (again with $\lambda\!=\!30$), and in our approach the optimal $\gamma$ (i.e. $2^4$) is applied. It can be seen that all methods suffer from a degradation in the denoising quality when $\lambda$ in testing is below $30$. This is reasonable because in general smaller $\lambda$ implies higher noise and more difficulty. Besides, the DPLD and Noise2Self, which use only noisy images in training, are more stable than DnCNN when $\lambda$ is around $20$. 
The PSNR of DnCNN decreases quickly as $\lambda$ decreases from $30$, and it is worse than DPLD when $\lambda \leq 24$. However, it is interesting to note that the DPLD outperforms the (fully supervised) DnCNN in the lower noise cases (corresponding to big $\lambda$), even though neither the ground truth nor less noisy samples are provided for the training. This suggests that auxiliary random vector approach together with the partially linear constraint brings extra robustness compared to its counterpart that learns from a set noisy-clean image pairs with fixed noise levels. 

\begin{figure}[h!]
  \centering
  \includegraphics[width=0.8\linewidth, trim=0 15 0 12, clip]{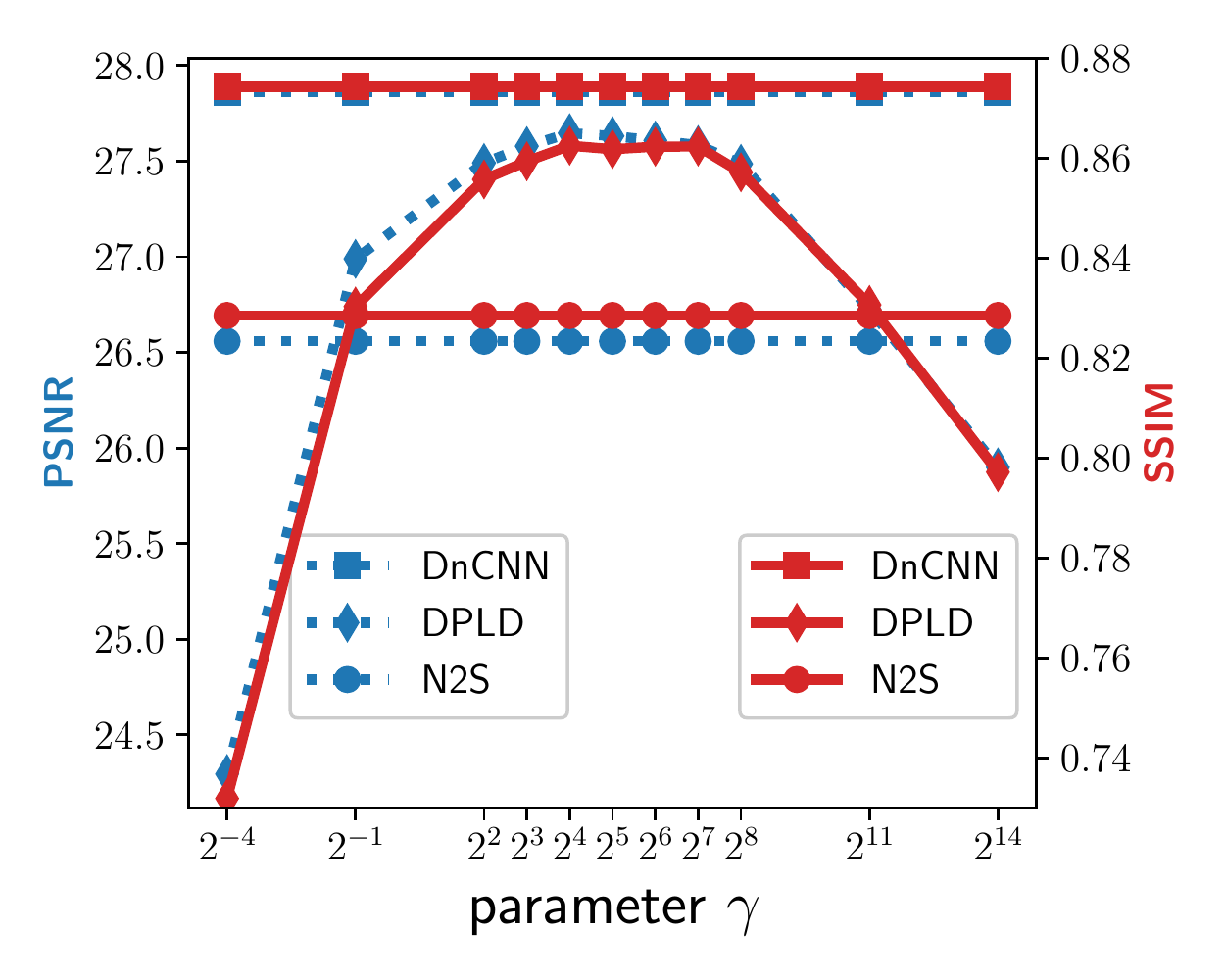}
  \caption{PSNR (in blue) and SSIM (in red) plotted against the partially linear constraint parameter $\gamma$ (better viewed in color). The  horizontal lines represent results of the DnCNN and Noise2Self for comparison.}
  \label{fig:stability_gamma}
\end{figure}
\begin{figure} 
  \centering
  \includegraphics[width=0.7\linewidth, trim=0 15 0 12, clip]{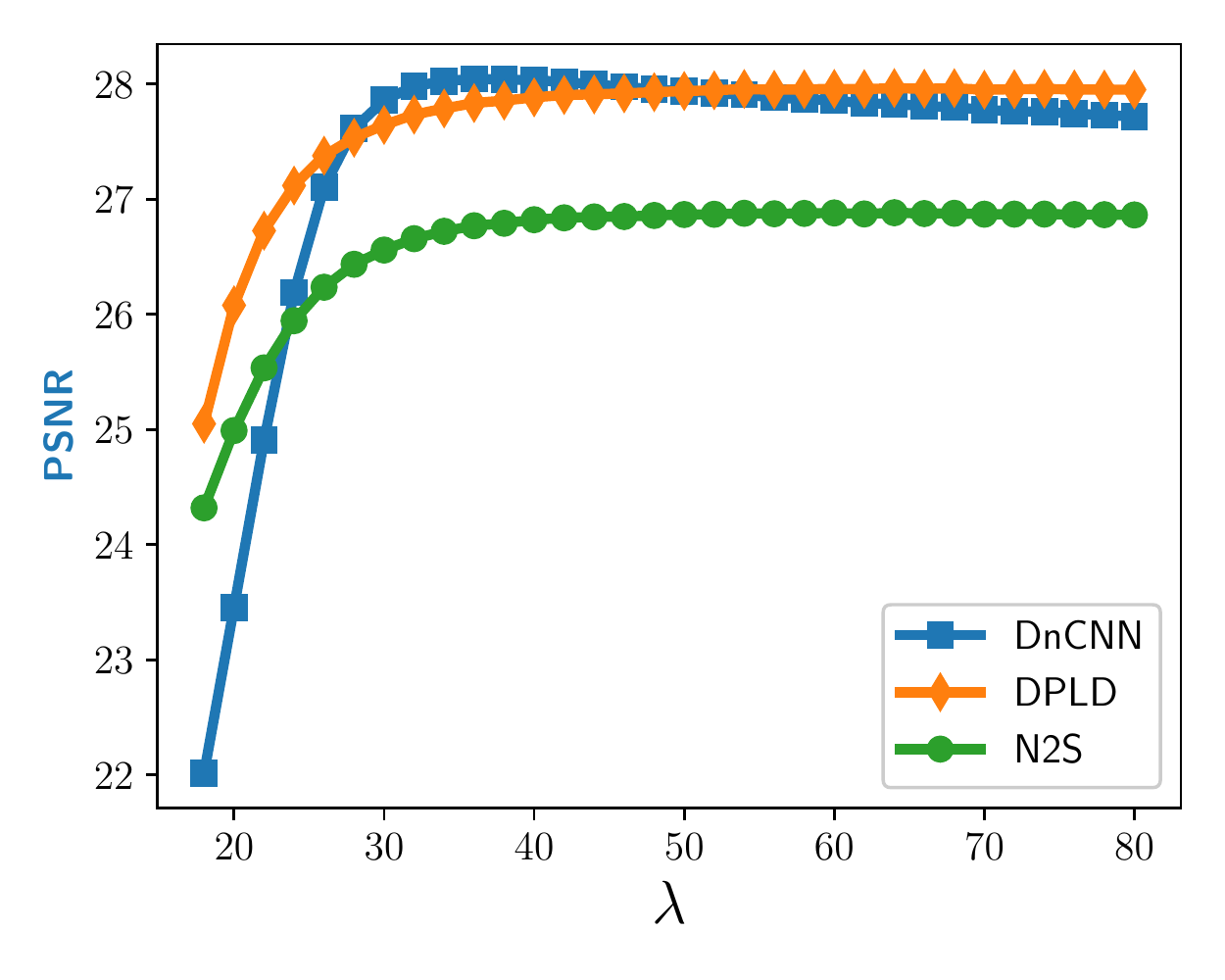}
  \caption{The Robustness of the denoisers with respect to the noise levels.}
  \label{fig:stability_lambda}
\end{figure}

\subsection{Denoising real microscopy images}\label{subsect:microscopydenoising}
We apply the proposed method to denoising real microscopy images. Neither the ground truth images nor their noise variance is known. Specifically, we consider three datasets, namely N2DH-GOWT1, C2DL-MSC, and TOMO110, which are used in \cite{krull2019noise2void}. The first two are fluorescence microscopy images of GFP transfected GOWT1 mouse stem cells and rat mesenchymal stem cells, respectively. The training sets contain $92$ images of size $1024\!\times\!1024$ (N2DH-GOWT1) and $48$ images of size $992\!\times\!832$ (C2DL-MSC). The third one TOMO110 is acquired by a Cryo-transmission electron microscope, and for this one we use one image (size $7676\!\times\! 7420$) for training.

For each of the datasets, we first estimate the noise variance.\\ 1). For N2DH-GOWT1 and C2DL-MSC, we first obtain a rough estimate via computing the average squared difference of neighboring pixels at smooth regions. We then refine the estimated noise variance using the correlation between the linear component of the denoiser and the errors in the noise variance. This is motivated by the observation in Remark \ref{remark2} and the numerical study in Subsection \ref{subsect:env}.\\  2). For TOMO110, since multiple frames/copies of the image are available because of the dose-fractionated acquisition, we estimate the variance by computing the variance of the frames for which the noise is independent. Note that though several frames of the image are available, our approach learns to denoise the sum of all frames. In contrast, the Noise2Noise method learns to denoise only one half of the frames, the noise level for which is therefore higher. 

The details of the noise estimation and training of the networks are given in Appendix B.
The denoising results are displayed in Fig. \ref{realmicro} (with enlarged views for the regions indicated by the blue/red boxes). We compare our method with BM3D \cite{dabov2007image} and Noise2Void \cite{krull2019noise2void}. As shown in the figure, the BM3D method has denoised images that are smooth, but also contains artifacts (e.g., in the cells on the second row of Fig. \ref{realmicro}). The Noise2Void method yields denoising outputs with certain graininess effect at non-constant regions (Cf. the third row and first column of the figure). The denoised images from our approach look smooth, yet it shows slightly more details of the object structures (e.g. the regions highlighted by the red boxes).

\def\widthtr{0.31\linewidth}
\def\getdemodmo#1#2#3#4#5{
\begin{tikzpicture}[spy using outlines]
  \node{\includegraphics[width=\widthtr]{#1}};
    \spy [blue, magnification=3, width=0.46*\widthtr, height=0.46*\widthtr, line width=2] on (#2*\widthtr,#3*\widthtr) in node at (-0.265*\widthtr,-0.75*\widthtr);
    \spy [red, magnification=3, width=0.46*\widthtr, height=0.45*\widthtr, line width=2] on (#4*\widthtr,#5*\widthtr) in node at (0.26*\widthtr,-0.75*\widthtr);
\end{tikzpicture}
}
\def\getdemodmoL#1#2#3#4#5{
\begin{tikzpicture}[spy using outlines]
  \node{\includegraphics[width=\widthtr]{#1}};
    \spy [blue, magnification=5, width=0.46*\widthtr, height=0.46*\widthtr, line width=2] on (#2*\widthtr,#3*\widthtr) in node at (-0.265*\widthtr,-0.75*\widthtr);
    \spy [red, magnification=5, width=0.46*\widthtr, height=0.45*\widthtr, line width=2] on (#4*\widthtr,#5*\widthtr) in node at (0.265*\widthtr,-0.75*\widthtr);
\end{tikzpicture}
}
\begin{figure}
    \centering
    \setlength{\tabcolsep}{0pt}
    \begin{tabular}{cccc}
    \raisebox{.7\height}{\rotatebox{90}{Noisy image}} &
    \getdemodmo{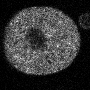}{-0.28}{-0.15}{-0.08}{0.23} &  
    \getdemodmo{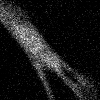}{-0.18}{0.05}{0.22}{-0.21} &
    \getdemodmoL{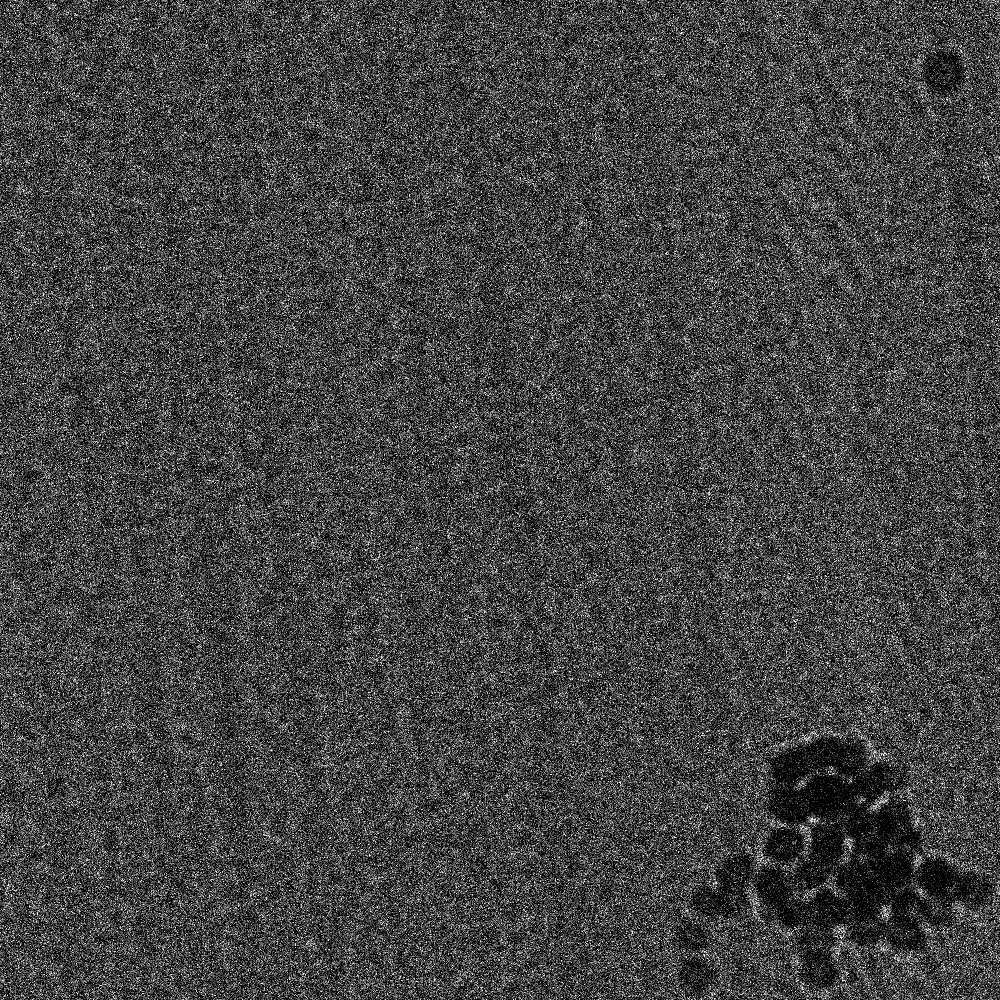}{-0.32}{0.32}{0.33}{-0.25}
    \\
    \raisebox{1.7\height}{\rotatebox{90}{BM3D}} &
    \getdemodmo{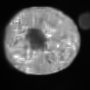}{-0.28}{-0.15}{-0.08}{0.23}
    & \getdemodmo{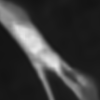}{-0.18}{0.05}{0.22}{-0.21}
    &
    \getdemodmoL{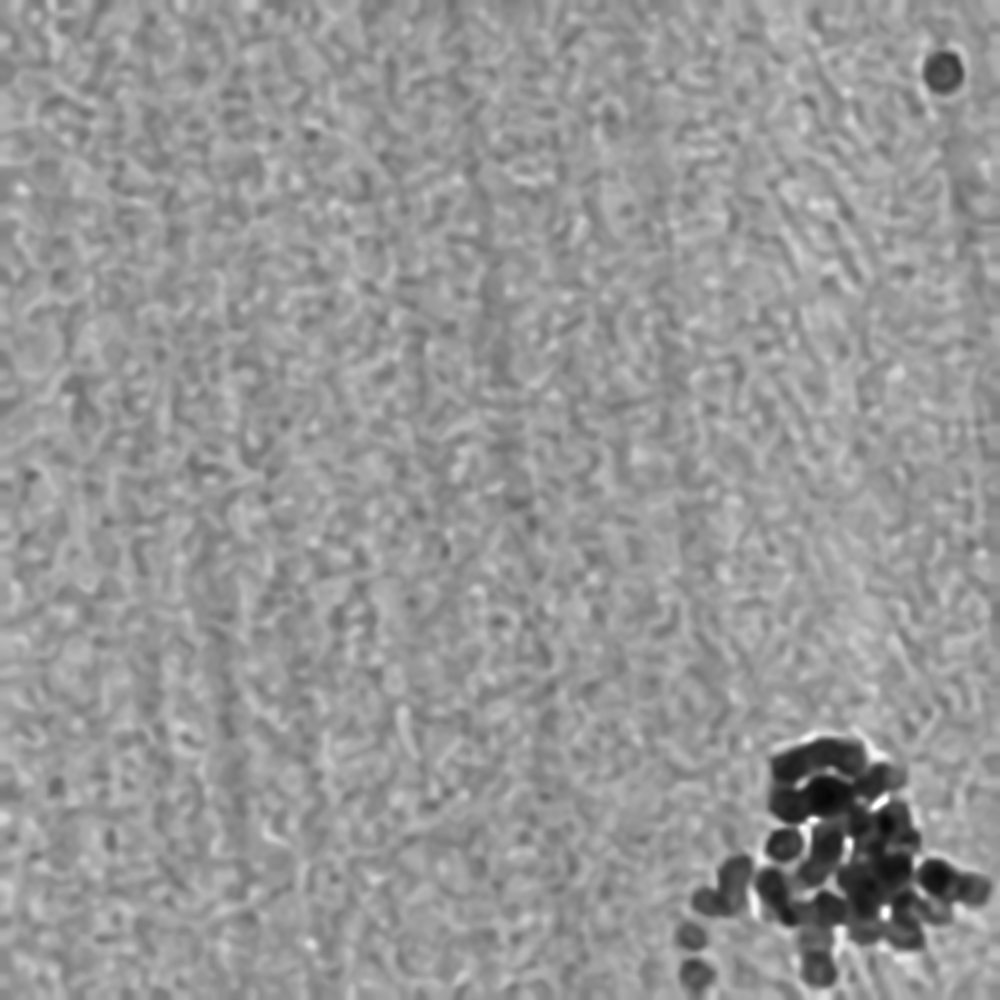}{-0.32}{0.32}{0.33}{-0.25}
    \\
    \raisebox{0.7\height}{\rotatebox{90}{Noise2Void}} &
    \getdemodmo{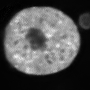}{-0.28}{-0.15}{-0.08}{0.23}&
    \getdemodmo{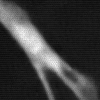}{-0.18}{0.05}{0.22}{-0.21} &
    \getdemodmoL{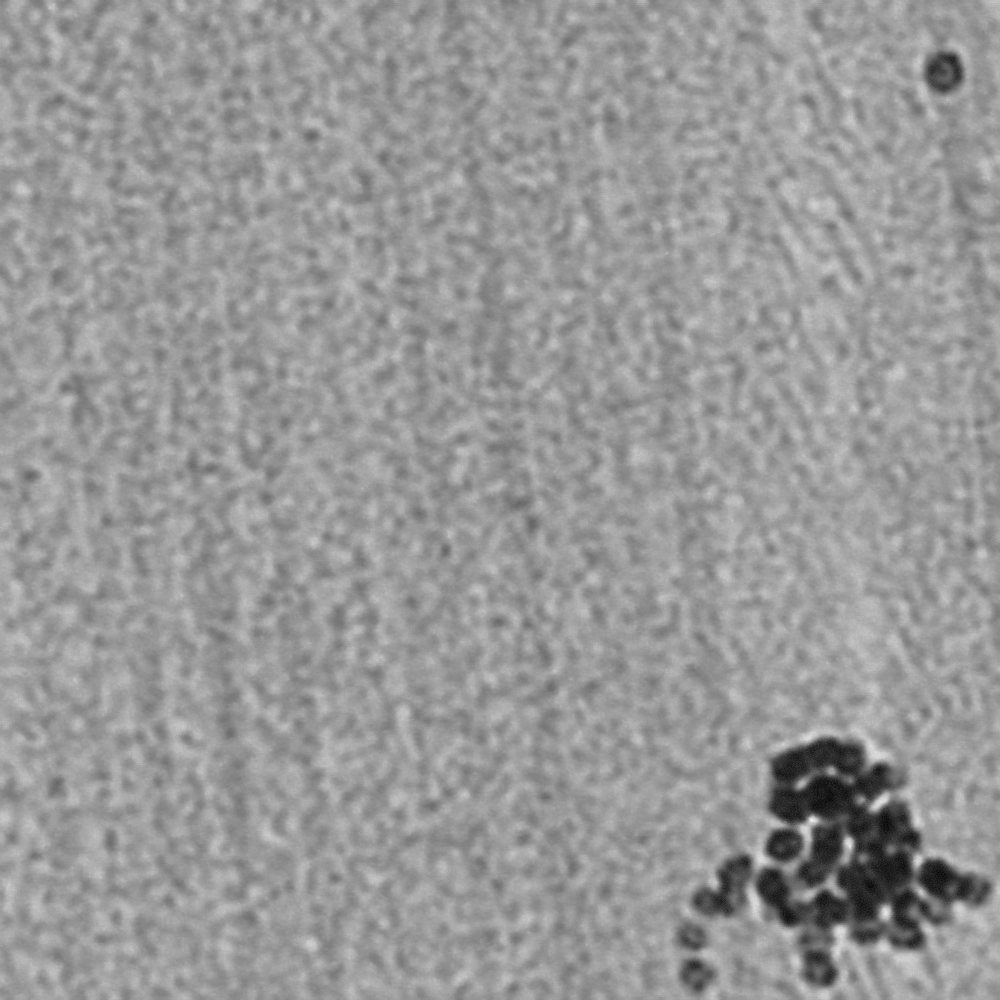}{-0.32}{0.32}{0.33}{-0.25}
    \\
    \raisebox{2.8\height}{\rotatebox{90}{Ours}} &
    \getdemodmo{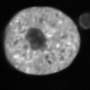}{-0.28}{-0.15}{-0.08}{0.23} &
    \getdemodmo{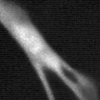}{-0.18}{0.05}{0.22}{-0.21} &
    \getdemodmoL{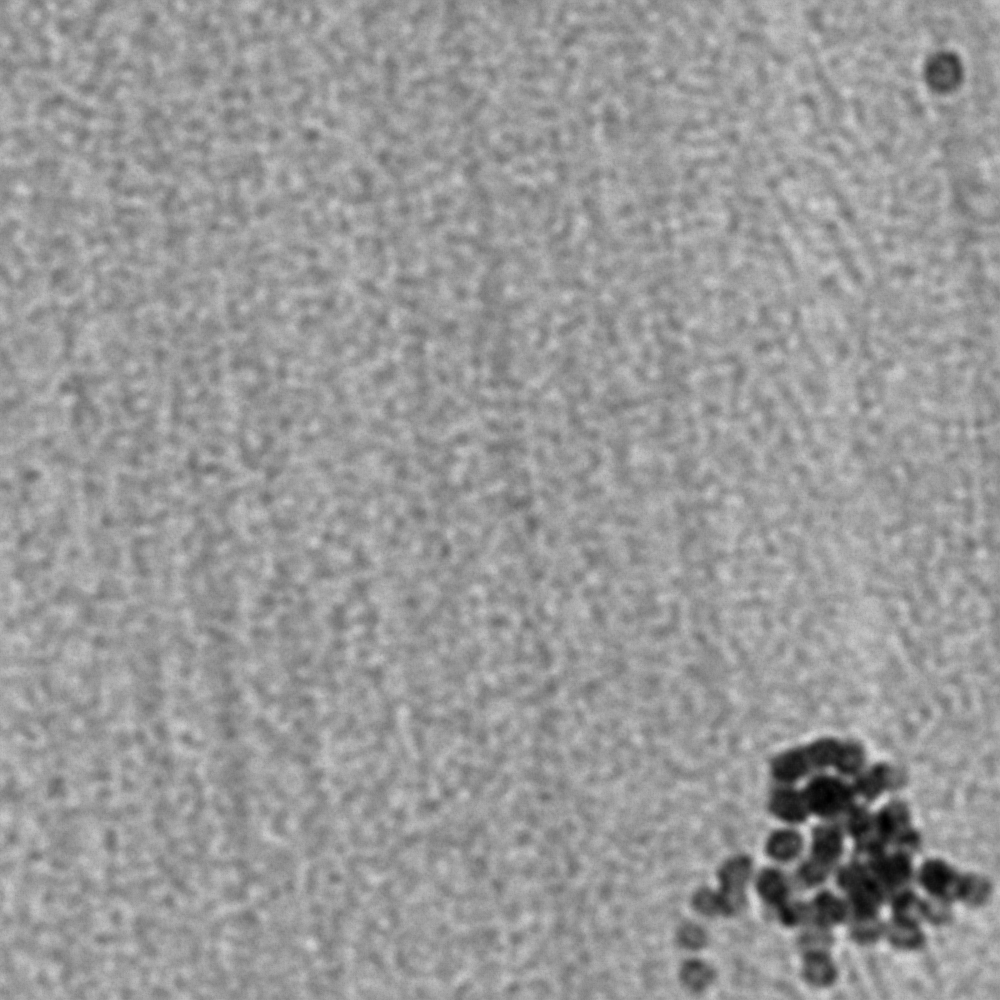}{-0.32}{0.32}{0.33}{-0.25}
    \\
    & {\small N2DH-GOWT1} & {\small C2DL-MSC} & {\small TOMO110}\\
    \end{tabular}
    \caption{Denoising results of three real microscopy datasets (on 3 columns).}\label{realmicro}
\end{figure}

\subsection{Image deblurring}\label{subsect:blind_deblurring}
Following \cite{shen2018deep} and \cite{xia2019training}, we consider the image deblurring problem \eqref{eq:deblur_problem} where the operator $A$ is a convolution operator with a blur kernel arising from random motions. 
The datasets for the image deblurring experiments are face images from the Helen dataset \cite{le2012interactive} and CelebA dataset \cite{liu2015deep}, and we use the same training/validation split as in \cite{xia2019training}. 
This results in a total of $161,800$ images for training, $22,000$ images for validation, and $16,000$ images for testing. 
Throughout this experiment, the noise $\bmn$ is Gaussian white noise with standard deviation $\sigma \!=\! 2$. We compare our approach with the fully supervised deblurring method, as well as an unsupervised method \cite{xia2019training} that uses a pair of noisy and blurry observations per image. To facilitate the comparison, the same U-Net architecture in \cite{xia2019training} for the deblurring networks is used in all three methods. 

We train the deblurring networks $R_\theta$ using the empirical loss of  \eqref{eq:deblurring_loss}, given the random noisy samples $\{y_i\}$ and the operator $A$,
\[
\mathcal{L}_{\rm b} := \sum_i \qutn{ A{R_\theta}\qut{\hat{y}_i} - \qut{y_i - z_i/\alpha}}^2
\]
where $\{z_i\}$ are the auxiliary vectors. As suggested in  \cite{xia2019training}, we also integrate a proxy loss defined as 
\[
\mathcal{L}_{\rm prox} :=  \sum_i \qutn{ R_\theta\qut{y^{\rm prox}_{i}} - x^{\rm prox}_{i} }^2
\]
in which 
$x^{\rm prox}_{i} := R_\theta\qut{y_{i}}$,  $y^{\rm prox}_{i}:= A^{\rm prox} x^{\rm prox}_{i} + n^{\rm prox}_i$,
and $A^{\rm prox}$, $n^{\rm prox}_i$ are randomly generated blur kernels and noise respectively. In our setting $n^{\rm prox}_i$ is generated from the same distribution as $\bmz$. So this loss function is the MSE loss defined on the synthetic training pair $(x^{\rm prox}_{i}, y^{\rm prox}_{i})$ where $x^{\rm prox}_{i}$, a deblurred version of $y_i$, is treated as the ground truth. The proxy loss has been found useful in improving the deblurring accuracy. 
Finally, we apply the partially linear constraint in \eqref{eq:loss_Lc} with the denoiser being $A R_\theta\qut{y_{i}}$ in this context. In summary, the training loss function is $\mathcal{L}_{\rm b} + \gamma_{\rm prox}\mathcal{L}_{\rm prox} + \gamma \mathcal{L}_{c}$ where $\gamma_{\rm prox}$ is a hyperparameter.

The deblurring model is trained using the Adam optimizer \cite{kingma2014adam} with a batch size of $128$ and $360,000$ optimization steps, starting from an initial learning rate of $10^{-3}$. The learning rate is then decreased to $3\times 10^{-4}$, $10^{-4}$, $5\times 10^{-5}$ at the $310,000^{\rm th}$,
$340,000^{\rm th}$, $350,000^{\rm th}$ step, respectively. 
In the first $100,000$ optimization steps, we let $\gamma_{\rm prox} = 0$ and $\gamma = 0$, and after that, $\gamma_{\rm prox} = 1/16$, $\gamma = 1/16$. In this experiment, we let $\alpha$ be randomly selected from $[0.1,0.2]$ for each training sample. 

\begin{table}[ht]
    \centering
    \setlength{\tabcolsep}{4pt}
    \caption{Deblurring results. For the data used in the training step, O and C stand for observations and clean images, respectively. The mark $\dagger$ indicates training without the proxy loss.}
    \label{tab:deblur}
    \begin{tabular}{|c|c|cc|cc|}
    \hline
    \multirow{2}{*}{Method} & \multirow{2}{*}{Data} & \multicolumn{2}{c|}{Helen \cite{le2012interactive}} & \multicolumn{2}{c|}{CelebA \cite{liu2015deep}} \\
    \cline{3-6}
    & & PSNR & SSIM & PSNR & SSIM \\
    \hline
    Xu et al. \cite{xu2013unnatural} & -- & 
    20.11 & 0.711 & 18.93 & 0.685 \\ 
    Zuo et al. \cite{zuo2016learning} & O/C Pairs & 22.24 & 0.763 & 20.53 & 0.750\\
    Tao et al. \cite{tao2018scale} & O/C Pairs & 22.86 & 0.762 & 24.11 & 0.862 \\
    Shen et al. \cite{shen2018deep} & O/C Pairs & 25.99 & 0.871 & 25.05 & 0.879 \\ 
    Kupyn et al. \cite{kupyn2018deblurgan} &  O/C Pairs & 23.63 & 0.781 & 22.45 & 0.729 \\
    \hline
    Supervised baseline
    & O/C Pairs & 26.13 & 0.886 & 25.20 & 0.892 \\
    \hline
    Xia et al. \cite{xia2019training}$\dagger$ & O/O Pairs & 25.47 & 0.867 & 24.64 & 0.873 \\
    Ours$\dagger$ & Unpaired O & 25.65 & 0.867 & 24.78 & 0.873\\
    Xia et al. \cite{xia2019training} & O/O Pairs & 25.95 & 0.878 & \textbf{25.09} & 0.885 \\
     Ours & \begin{tabular}{c}
    Unpaired O
     \end{tabular}
      & \textbf{26.00} & \textbf{0.879} & \textbf{25.09} & \textbf{0.886} \\
     \hline
    \end{tabular}
\end{table}
Table \ref{tab:deblur} shows the PSNR and SSIM of the deblurring results for test sets of Helen \cite{le2012interactive} and CelebA \cite{liu2015deep} respectively. In the table, the supervised baseline is based on the same U-net architecture as in our approach and trained on the (noisy) observation/clean image pairs. It achieves the highest PSNR and SSIM scores among the methods being compared. 
Our method reaches competitive PSNR values of 25.65 dB (Helen) and 24.78 dB (CelebA) \textit{without} the proxy loss, and adding the proxy loss results in a further improvement of $0.35$ dB and $0.31$ dB respectively.
As an extension of the Noise2Noise \cite{lehtinen2018noise2noise}, the approach proposed by Xia et al. \cite{xia2019training}, in contrast, relies on paired noisy observations for each image but without the ground truth. For the results reported in the table, the operator $A$ is known during training for both our approach and the method of Xia et al. \cite{xia2019training}. However, our method requires only unpaired observations, and it leads to a deblurring quality comparable to Xia et al. \cite{xia2019training} and not far away from the fully supervised baseline (with a PSNR gap smaller than $0.15$ dB). Furthermore, our method outperforms the existing supervised methods Zuo et al. \cite{zuo2016learning},
Tao et al. \cite{tao2018scale},
Shen et al. \cite{shen2018deep},
Kupyn et al. \cite{kupyn2018deblurgan}, as well as the unsupervised approach Xu et al. \cite{xu2013unnatural}.

A comparison of the deblurred images from different methods is shown in Fig. \ref{fig:deblur}, where the ground truth and blurry images are taken from the test sets of Helen. 
We underline that no blur kernels are provided to these methods during test time. 
Given no ground truth images for training, the deblurred results of Xia et al. \cite{xia2019training} (Cf. the $4^{\rm th}$ column in the figure) and our approach (Cf. the $5^{\rm th}$ column) are able to capture most of the details recovered by the fully supervised baseline (Cf. the $3^{\rm rd}$ column), though the latter gives slightly sharper images. This demonstrates the capacity of the proposed partially linear denoisers for solving the deblurring problem based on only one corrupted observation per image. Though in this experiment the unsupervised methods require knowledge about the blur kernels during the training phase, they can be generalized to the cases where the blur kernels are unknown for training. This can be done, e.g., by jointly learning the clean images and blur kernels \cite{xia2019training}.

\def\widthfive{0.18\linewidth}
\def\getdemod#1#2#3{
\begin{tikzpicture}[spy using outlines]
  \node{\includegraphics[width=\widthfive]{#1}};
    \spy [red, magnification=4, width=\widthfive, height=.618*\widthfive, line width=2] on (#2*\widthfive,#3*\widthfive) in node at (0,-0.83*\widthfive);
\end{tikzpicture}
}
\begin{figure*}[h!]
    \centering
    \setlength{\tabcolsep}{1pt}
    \begin{tabular}{ccccc}
        \getdemod{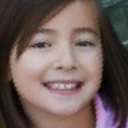}{-0.3}{0.1}
        &
        \getdemod{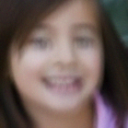}{-0.3}{0.1}
        & 
        \getdemod{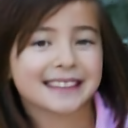}{-0.3}{0.1}
        & 
        \getdemod{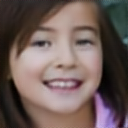}{-0.3}{0.1}
        &
        \getdemod{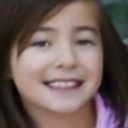}{-0.3}{0.1}
        \\
        Ground truth & Blurry image & Fully supervised & Xia et al. \cite{xia2019training} & Ours \\
    \end{tabular}
    \caption{Face deblurring results. The images are taken from the test sets of Helen \cite{le2012interactive}.}
    \label{fig:deblur}
\end{figure*}

\section{Conclusion}
In this paper, we propose a class of structured nonlinear denoisers. We show that such denoisers, equipped with a partial linearity property, can be trained when we do not have access to any ground truth images, nor to the exact model of the noise. In practice, one only needs to know the noise variance conditioned on the images. 
Based on the partial linearity structure, we proposed an auxiliary random vector approach, which establishes a direct connection between our loss function and the MSE, and allows end-to-end training of the denoising models. 
The approach outperforms other ground-truth-free denoising approaches such as the SURE based learning method for denoising, having a denoising quality close to that of the fully supervised baseline. The approach also offers new opportunities for learning to solve other image restoration tasks from single corrupted observations of the images. The experimental results show that, when generalized to an image deblurring problem, our approach achieves the state of the art for unsupervised deblurring. 

One disadvantage of our method is that it does not work for non-zero mean noise, such as impulse noises for which the corrupted pixels are replaced with random numbers. In such cases, the auxiliary random vector approach no longer provides good estimates to the MSE. Our method requires estimates of the noise variance. Various noise variance estimation methods from noisy images exist (see e.g., \cite{liu2006noise,foi2008practical, acito2011signal, liu2014practical}). Alternatively, in our future work, we are interested in learning-based or automated noise variance extraction methods with the potential of being integrated in the proposed denoiser within an end-to-end framework.

\bibliographystyle{plain}
\bibliography{references}

\begin{thebibliography}{10}

\bibitem{acito2011signal}
Nicola Acito, Marco Diani, and Giovanni Corsini.
\newblock Signal-dependent noise modeling and model parameter estimation in
  hyperspectral images.
\newblock {\em IEEE Transactions on Geoscience and Remote Sensing},
  49(8):2957--2971, 2011.

\bibitem{batson2019noise2self}
Joshua Batson and Loic Royer.
\newblock Noise2self: Blind denoising by self-supervision.
\newblock In {\em International Conference on Machine Learning}, pages
  524--533, 2019.

\bibitem{buades2005non}
Antoni Buades, Bartomeu Coll, and Jean-Michel Morel.
\newblock A non-local algorithm for image denoising.
\newblock In {\em 2005 IEEE Computer Society Conference on Computer Vision and
  Pattern Recognition}, volume~2, pages 60--65. IEEE, 2005.

\bibitem{chan2005salt}
Raymond~H Chan, Chung-Wa Ho, and Mila Nikolova.
\newblock Salt-and-pepper noise removal by median-type noise detectors and
  detail-preserving regularization.
\newblock {\em IEEE Transactions on image processing}, 14(10):1479--1485, 2005.

\bibitem{chen2016trainable}
Yunjin Chen and Thomas Pock.
\newblock Trainable nonlinear reaction diffusion: A flexible framework for fast
  and effective image restoration.
\newblock {\em IEEE transactions on pattern analysis and machine intelligence},
  39(6):1256--1272, 2016.

\bibitem{dabov2007image}
Kostadin Dabov, Alessandro Foi, Vladimir Katkovnik, and Karen Egiazarian.
\newblock Image denoising by sparse 3-d transform-domain collaborative
  filtering.
\newblock {\em IEEE Transactions on image processing}, 16(8):2080--2095, 2007.

\bibitem{donoho1995noising}
David~L Donoho.
\newblock De-noising by soft-thresholding.
\newblock {\em IEEE transactions on information theory}, 41(3):613--627, 1995.

\bibitem{ehret2019model}
Thibaud Ehret, Axel Davy, Jean-Michel Morel, Gabriele Facciolo, and Pablo
  Arias.
\newblock Model-blind video denoising via frame-to-frame training.
\newblock In {\em Proceedings of the IEEE Conference on Computer Vision and
  Pattern Recognition}, pages 11369--11378, 2019.

\bibitem{foi2008practical}
Alessandro Foi, Mejdi Trimeche, Vladimir Katkovnik, and Karen Egiazarian.
\newblock Practical poissonian-gaussian noise modeling and fitting for
  single-image raw-data.
\newblock {\em IEEE Transactions on Image Processing}, 17(10):1737--1754, 2008.

\bibitem{fu2017removing}
Xueyang Fu, Jiabin Huang, Delu Zeng, Yue Huang, Xinghao Ding, and John Paisley.
\newblock Removing rain from single images via a deep detail network.
\newblock In {\em Proceedings of the IEEE Conference on Computer Vision and
  Pattern Recognition}, pages 3855--3863, 2017.

\bibitem{han2018framing}
Yoseob Han and Jong~Chul Ye.
\newblock Framing u-net via deep convolutional framelets: Application to
  sparse-view ct.
\newblock {\em IEEE transactions on medical imaging}, 37(6):1418--1429, 2018.

\bibitem{hong2019acceleration}
Tao Hong, Yaniv Romano, and Michael Elad.
\newblock Acceleration of red via vector extrapolation.
\newblock {\em Journal of Visual Communication and Image Representation},
  63:102575, 2019.

\bibitem{jin2017deep}
Kyong~Hwan Jin, Michael~T McCann, Emmanuel Froustey, and Michael Unser.
\newblock Deep convolutional neural network for inverse problems in imaging.
\newblock {\em IEEE Transactions on Image Processing}, 26(9):4509--4522, 2017.

\bibitem{kingma2014adam}
Diederik~P Kingma and Jimmy Ba.
\newblock Adam: A method for stochastic optimization.
\newblock {\em arXiv preprint arXiv:1412.6980}, 2014.

\bibitem{krull2019noise2void}
Alexander Krull, Tim-Oliver Buchholz, and Florian Jug.
\newblock Noise2void-learning denoising from single noisy images.
\newblock In {\em Proceedings of the IEEE Conference on Computer Vision and
  Pattern Recognition}, pages 2129--2137, 2019.

\bibitem{krull2019probabilistic}
Alexander Krull, Tomas Vicar, and Florian Jug.
\newblock Probabilistic noise2void: Unsupervised content-aware denoising.
\newblock {\em arXiv preprint arXiv:1906.00651}, 2019.

\bibitem{kupyn2018deblurgan}
Orest Kupyn, Volodymyr Budzan, Mykola Mykhailych, Dmytro Mishkin, and
  Ji{\v{r}}{\'\i} Matas.
\newblock Deblurgan: Blind motion deblurring using conditional adversarial
  networks.
\newblock In {\em Proceedings of the IEEE conference on computer vision and
  pattern recognition}, pages 8183--8192, 2018.

\bibitem{le2012interactive}
Vuong Le, Jonathan Brandt, Zhe Lin, Lubomir Bourdev, and Thomas~S Huang.
\newblock Interactive facial feature localization.
\newblock In {\em European conference on computer vision}, pages 679--692.
  Springer, 2012.

\bibitem{lefkimmiatis2018universal}
Stamatios Lefkimmiatis.
\newblock Universal denoising networks: a novel cnn architecture for image
  denoising.
\newblock In {\em Proceedings of the IEEE conference on computer vision and
  pattern recognition}, pages 3204--3213, 2018.

\bibitem{lehtinen2018noise2noise}
Jaakko Lehtinen, Jacob Munkberg, Jon Hasselgren, Samuli Laine, Tero Karras,
  Miika Aittala, and Timo Aila.
\newblock Noise2noise: Learning image restoration without clean data.
\newblock In {\em International Conference on Machine Learning}, pages
  2965--2974, 2018.

\bibitem{liu2006noise}
Ce~Liu, William~T Freeman, Richard Szeliski, and Sing~Bing Kang.
\newblock Noise estimation from a single image.
\newblock In {\em 2006 IEEE Computer Society Conference on Computer Vision and
  Pattern Recognition (CVPR'06)}, volume~1, pages 901--908. IEEE, 2006.

\bibitem{liu2018non}
Ding Liu, Bihan Wen, Yuchen Fan, Chen~Change Loy, and Thomas~S Huang.
\newblock Non-local recurrent network for image restoration.
\newblock In {\em Advances in Neural Information Processing Systems}, pages
  1673--1682, 2018.

\bibitem{liu2014practical}
Xinhao Liu, Masayuki Tanaka, and Masatoshi Okutomi.
\newblock Practical signal-dependent noise parameter estimation from a single
  noisy image.
\newblock {\em IEEE Transactions on Image Processing}, 23(10):4361--4371, 2014.

\bibitem{liu2015deep}
Ziwei Liu, Ping Luo, Xiaogang Wang, and Xiaoou Tang.
\newblock Deep learning face attributes in the wild.
\newblock In {\em Proceedings of the IEEE international conference on computer
  vision}, pages 3730--3738, 2015.

\bibitem{mallat1992singularity}
Stephane Mallat and Wen~Liang Hwang.
\newblock Singularity detection and processing with wavelets.
\newblock {\em IEEE transactions on information theory}, 38(2):617--643, 1992.

\bibitem{martin2001database}
David Martin, Charless Fowlkes, Doron Tal, and Jitendra Malik.
\newblock A database of human segmented natural images and its application to
  evaluating segmentation algorithms and measuring ecological statistics.
\newblock In {\em Proceedings Eighth IEEE International Conference on Computer
  Vision. ICCV 2001}, volume~2, pages 416--423. IEEE, 2001.

\bibitem{moran2020noisier2noise}
Nick Moran, Dan Schmidt, Yu~Zhong, and Patrick Coady.
\newblock Noisier2noise: Learning to denoise from unpaired noisy data.
\newblock In {\em Proceedings of the IEEE/CVF Conference on Computer Vision and
  Pattern Recognition}, pages 12064--12072, 2020.

\bibitem{romano2017little}
Yaniv Romano, Michael Elad, and Peyman Milanfar.
\newblock The little engine that could: Regularization by denoising (red).
\newblock {\em SIAM Journal on Imaging Sciences}, 10(4):1804--1844, 2017.

\bibitem{roth2005fields}
Stefan Roth and Michael~J Black.
\newblock Fields of experts: A framework for learning image priors.
\newblock In {\em 2005 IEEE Computer Society Conference on Computer Vision and
  Pattern Recognition}, volume~2, pages 860--867. IEEE, 2005.

\bibitem{rudin1992nonlinear}
Leonid~I Rudin, Stanley Osher, and Emad Fatemi.
\newblock Nonlinear total variation based noise removal algorithms.
\newblock {\em Physica D: nonlinear phenomena}, 60(1-4):259--268, 1992.

\bibitem{shen2018deep}
Ziyi Shen, Wei-Sheng Lai, Tingfa Xu, Jan Kautz, and Ming-Hsuan Yang.
\newblock Deep semantic face deblurring.
\newblock In {\em Proceedings of the IEEE Conference on Computer Vision and
  Pattern Recognition}, pages 8260--8269, 2018.

\bibitem{soltanayev2018training}
Shakarim Soltanayev and Se~Young Chun.
\newblock Training deep learning based denoisers without ground truth data.
\newblock In {\em Advances in Neural Information Processing Systems}, pages
  3257--3267, 2018.

\bibitem{sreehari2016plug}
Suhas Sreehari, S~Venkat Venkatakrishnan, Brendt Wohlberg, Gregery~T Buzzard,
  Lawrence~F Drummy, Jeffrey~P Simmons, and Charles~A Bouman.
\newblock Plug-and-play priors for bright field electron tomography and sparse
  interpolation.
\newblock {\em IEEE Transactions on Computational Imaging}, 2(4):408--423,
  2016.

\bibitem{tao2018scale}
Xin Tao, Hongyun Gao, Xiaoyong Shen, Jue Wang, and Jiaya Jia.
\newblock Scale-recurrent network for deep image deblurring.
\newblock In {\em Proceedings of the IEEE Conference on Computer Vision and
  Pattern Recognition}, pages 8174--8182, 2018.

\bibitem{ulyanov2018deep}
Dmitry Ulyanov, Andrea Vedaldi, and Victor Lempitsky.
\newblock Deep image prior.
\newblock In {\em Proceedings of the IEEE Conference on Computer Vision and
  Pattern Recognition}, pages 9446--9454, 2018.

\bibitem{valery2020self}
Dewil Val{\'e}ry, Arias Pablo, Facciolo Gabriele, Anger J{\'e}r{\'e}my, Davy
  Axel, and Ehret Thibaud.
\newblock Self-supervised training for blind multi-frame video denoising.
\newblock {\em arXiv preprint arXiv:2004.06957}, 2020.

\bibitem{venkatakrishnan2013plug}
Singanallur~V Venkatakrishnan, Charles~A Bouman, and Brendt Wohlberg.
\newblock Plug-and-play priors for model based reconstruction.
\newblock In {\em 2013 IEEE Global Conference on Signal and Information
  Processing}, pages 945--948. IEEE, 2013.

\bibitem{weickert1998anisotropic}
Joachim Weickert.
\newblock {\em Anisotropic diffusion in image processing}, volume~1.
\newblock Teubner Stuttgart, 1998.

\bibitem{xia2019training}
Zhihao Xia and Ayan Chakrabarti.
\newblock Training image estimators without image ground truth.
\newblock In {\em Advances in Neural Information Processing Systems}, pages
  2436--2446, 2019.

\bibitem{xu2013unnatural}
Li~Xu, Shicheng Zheng, and Jiaya Jia.
\newblock Unnatural l0 sparse representation for natural image deblurring.
\newblock In {\em Proceedings of the IEEE conference on computer vision and
  pattern recognition}, pages 1107--1114, 2013.

\bibitem{zhang2017beyond}
Kai Zhang, Wangmeng Zuo, Yunjin Chen, Deyu Meng, and Lei Zhang.
\newblock Beyond a gaussian denoiser: Residual learning of deep cnn for image
  denoising.
\newblock {\em IEEE Transactions on Image Processing}, 26(7):3142--3155, 2017.

\bibitem{zhang2020residual}
Yulun Zhang, Yapeng Tian, Yu~Kong, Bineng Zhong, and Yun Fu.
\newblock Residual dense network for image restoration.
\newblock {\em IEEE Transactions on Pattern Analysis and Machine Intelligence},
  2020.

\bibitem{zhussip2019extending}
Magauiya Zhussip, Shakarim Soltanayev, and Se~Young Chun.
\newblock Extending stein's unbiased risk estimator to train deep denoisers
  with correlated pairs of noisy images.
\newblock In {\em Advances in Neural Information Processing Systems}, pages
  1463--1473, 2019.

\bibitem{zuo2016learning}
Wangmeng Zuo, Dongwei Ren, David Zhang, Shuhang Gu, and Lei Zhang.
\newblock Learning iteration-wise generalized shrinkage--thresholding operators
  for blind deconvolution.
\newblock {\em IEEE Transactions on Image Processing}, 25(4):1751--1764, 2016.

\end{thebibliography}

\IEEEpeerreviewmaketitle

\clearpage
\appendices
\setcounter{page}{1} %

\twocolumn[
\begin{center}
{\Huge
Supplementary Materials:\\
Unsupervised Image Restoration Using Partially Linear Denoisers}\\
\vspace{0.4cm}
{\large Rihuan Ke  and Carola-Bibiane Sch\"onlieb}
\vspace{0.2cm}
\end{center}
]

    \section{Additional proofs for Section \ref{sec:method}}

\noindent\textbf{Remark} \ref{remark2}. Assume that the conditions of Proposition \ref{prop:2} hold except that ${\rm Cov}\qut{\bmz, \bmz \mid \bmx} \!=\! (1+\beta) {\rm Cov}\qut{\bmn, \bmn \mid \bmx}$ where $\beta\!>\!-1$. Then $\mathcal{J}\qut{R} = \bbE\qut{ \qutn{{R(\bmyh) - \bmx}}^2} - 2 \bbE\qut{\qutan{\bme, {\bmn - \bmz/\alpha}}} + c + 2 \beta \bbE\qut{\qutan{\bmz, L\bmz}}
$. 
\begin{proof}
The proof is similar to that of Proposition \ref{prop:2}. 

Firstly, we repeat Equation \eqref{eq:prop2-eq1} here 
\begin{equation*}\label{eq:prop2-eq1_repeat}
    \begin{split}
    R\qut{\bmyh} - \qut{\bmy - \bmz/\alpha} 
     = & \qut{g\qut{\bmx} + L \bmnh + \bme - \bmx} \\ & - \qut{\bmn - \bmz/\alpha}.
\end{split}
\end{equation*}
Taking expectation of both sizes over $\bmx$, $\bmn$ and $\bmz$, one has
\begin{equation}\label{eq:appendixA_eq1}
    \begin{split}
    \mathcal{J}\qut{R}
    = 
    & \bbE\qut{ \qutn{R(\bmyh) - \bmx}^2}
     + \bbE\qut{ \qutn{\bmn - \bmz/\alpha}^2} \\
    & - 2\bbE\qut{ \qutan{g(\bmx) - \bmx, \bmn - \bmz/\alpha}} \\
    & - 2\bbE\qut{ \qutan{\bme, \bmn - \bmz/\alpha}} \\
    & - 2\bbE\qut{ \qutan{L\bmn+\alpha L\bmz, \bmn - \bmz/\alpha}}. \\
    \end{split}
\end{equation}

Secondly, by the assumption ${\rm Cov}\qut{\bmz, \bmz \mid \bmx} \!=\! (1+\beta) {\rm Cov}\qut{\bmn, \bmn \mid \bmx}$, we have the following equalities associated with the last term of Equation \eqref{eq:appendixA_eq1}:
\begin{equation}\label{eq:noise_compensate_env}
\begin{split}
& \bbE\qut{\qutan{ L{\bmn} + \alpha L{\bmz}, \bmn - \bmz/\alpha } \mid \bmx}\\ 
= & \bbE\qut{\qutan{ L{\bmn}, \bmn }\mid \bmx} +
\bbE\qut{\qutan{\alpha L{\bmz},  -\bmz/\alpha }\mid \bmx} \\
= & - \beta \bbE\qut{\qutan{ L{\bmz},  \bmz }\mid \bmx}.
\end{split}
\end{equation}
Also, the term $\bbE\qut{ \qutan{g(\bmx) - \bmx, \bmn - \bmz/\alpha}} = 0$ according to the proof of Proposition \ref{prop:2}. 

Finally, let $c:= \bbE\qut{\qutn{\bmn - \bmz/\alpha}^2}$, then the desired equality follows from Equations \eqref{eq:noise_compensate_env} and
\eqref{eq:appendixA_eq1}. 
The proof is completed.  
\end{proof}

\propositiond*
\begin{proof}
To simplify the notations, let $\bmp:=R^{\epsilon,\delta}\qut{\bmyh}$ and $\hat{\bmp}:=\hat{R}^{\epsilon,\delta}\qut{\bmyh}$. 

\textbf{I}). By Lemma \ref{lm:convex},  $R^{\epsilon,\delta}+t\qut{\hat{R}^{\epsilon,\delta}-R^{\epsilon,\delta}} \in \mathcal{R}_\epsilon$ for $t \in [0,1]$. It follows from the definition of $R^{\epsilon,\delta}$ in \eqref{eq:rmmse} that 
\begin{equation}\label{eq:prop4-definition1}
 \bbE\qut{\qutn{\bmp-\bmx}^2} - \delta \leq \bbE\qut{\qutn{\bmp+t\qut{\bmph-\bmp}-\bmx}^2}
\end{equation}
Besides, the right hand side of the above inequality can be decomposed as 
\begin{equation}\label{eq:prop4-eq1}
\begin{split}
   & \bbE\qut{\qutn{\bmp+t\qut{\bmph-\bmp}-\bmx}^2} \\
   = & \bbE\qut{\qutn{\bmp-\bmx}^2} + 2 t \bbE\qut{\qutan{\bmp-\bmx, \bmph-\bmp}} \\
   & + t^2 \bbE\qut{\qutn{\bmph-\bmp}^2}. 
\end{split}    
\end{equation}
Combining \eqref{eq:prop4-definition1} and \eqref{eq:prop4-eq1}, one has 
\[2\bbE\qut{\qutan{\bmp-\bmx, \bmph-\bmp}} \geq - s \bbE\qut{\qutn{\bmph-\bmp}^2 }- \delta/s, \ \forall s \in (0,1].\] 
Letting $t=1$, Equation \eqref{eq:prop4-eq1} and the last inequality force
\begin{equation}\label{eq:ineq1}
\begin{split}
    & \qut{1-s}\bbE\qut{\qutn{\bmph - \bmp}^2} \\
    \leq  & \bbE\qut{\qutn{\bmph - \bmx}^2} -  \bbE\qut{\qutn{\bmp - \bmx}^2} + \delta /s, \ \forall s \in (0,1].    
\end{split}
\end{equation}

\textbf{II}). With the same argument as in \textbf{I}), for $\hat{R}^{\epsilon,\delta}$ one can show that for any $s \in (0,1]$, %
\begin{equation}\label{eq:ineq2}
\begin{split}
    \qut{1-s} \bbE\qut{\qutn{\bmph - \bmp}^2} & \leq \bbE\qut{\qutn{\bmp - \qut{\bmy-\bmz/\alpha}}^2} \\ 
    & \quad - \bbE\qut{\qutn{\bmph - \qut{\bmy-\bmz/\alpha}}^2} + \delta /s   \\
    & = \mathcal{J}\qut{R^{\epsilon,\delta}} - \mathcal{J}\qut{\hat{R}^{\epsilon,\delta}} + \delta /s.  
\end{split}
\end{equation}
Note that \eqref{eq:ineq1} and \eqref{eq:ineq2} give two upper bounds of $\bbE\qut{\qutn{\bmph - \bmp}^2}$. Averaging both sides of the two inequalities,
\[
\begin{split}
    \qut{1-s} \bbE\qut{\qutn{\bmph - \bmp}^2} \leq 
 & \ \bbE\qut{\qutn{\bmph - \bmx}^2}/2 - \mathcal{J}\qut{\hat{R}^{\epsilon,\delta}}/2 \\
&
+ \mathcal{J}\qut{R^{\epsilon,\delta}}/2
- \bbE\qut{\qutn{\bmp - \bmx}^2}/2 \\
& + \delta /s.
\end{split}
\]
If we set $s = \sqrt{\delta}$, the desired inequality then follows from Proposition \ref{prop:2}.
\end{proof}

\propositiondb*
\begin{proof}
Let 
$\bmp:=R^{\epsilon,\delta}\qut{\bmyh}$ and $\hat{\bmp}:=\hat{R}^{\epsilon,\delta}\qut{\bmyh}$. 
According to Equation \eqref{eq:part-lin}, we have the decomposition
\begin{equation}\label{eq:1211}
\begin{split}
\bmp & = g_1 \qut{\bmx} + L_1 \bmnh + \bme_1, \\
\bmph & = g_2 \qut{\bmx} + L_2 \bmnh + \bme_2,
\end{split}    
\end{equation}
where $L_i$ is linear and depends on $\bmx$, and $\qutn{\bme_i} \leq \epsilon$ for $i \in \{0,1\}$. Let $\bm{q} := \bmp - \bme_1 + \bme_2$, then 
\begin{equation}\label{eq:1212}
\bm{q}\qut{\bmyh} = g_1\qut{\bmx} + L_1 \bmnh + \bme_2,
\end{equation}
as a function of $\bmyh$, is well defined because the distribution of $\bmx$ is a delta function and $\bm{q}\qut{\hat{y}_1}=\bm{q}\qut{\hat{y}_2}$ if $\hat{y}_1=\hat{y}_2$ for any realizations $\hat{y}_1$, $\hat{y}_2$ of $\bmyh$. Furthermore, it is easy to check that $\bm{q} \in \mathcal{R}_\epsilon$. 

Analogous to the argument in \eqref{eq:ineq2}, for $\bm{q} \in \mathcal{R}_\epsilon$ we have
\begin{equation}\label{eq:1213}
\begin{split}
    \qut{1-s} \bbE\qut{\qutn{\bmph - \bm{q}}^2} & \leq \bbE\qut{\qutn{\bm{q} - \qut{\bmy-\bmz/\alpha}}^2} \\ 
    & \quad - \bbE\qut{\qutn{\bmph - \qut{\bmy-\bmz/\alpha}}^2} + \delta /s, 
\end{split}
\end{equation}
for any $s \in (0,1]$. 
It follows from Equation \eqref{eq:part-lin-mse} that
\[
\begin{split}
\bbE\qut{\qutn{\bm{q} - \qut{\bmy-\bmz/\alpha}}^2}  = & \bbE\qut{\qutn{\bm{q} - \bmx }^2} - 2\bbE\qut{ \qutan{\bme_2, \bmn-\bmz/\alpha}  } \\
& + c,\\
\bbE\qut{\qutn{\bmph - \qut{\bmy-\bmz/\alpha}}^2}  = & \bbE\qut{\qutn{\bmph - \bmx }^2} - 2\bbE\qut{ \qutan{\bme_2, \bmn-\bmz/\alpha}  }  \\
& + c\\
\end{split}
\]
for some constant $c$. This together with \eqref{eq:1213} gives 
\begin{equation*}\label{eq:prop5eq2}
    \begin{split}
    \qut{1-s} \bbE\qut{\qutn{\bmph - \bm{q}}^2} \leq 
 &  \bbE\qut{\qutn{\bm{q} - \bmx }^2} - \bbE\qut{\qutn{\bmph - \bmx }^2} + \delta /s.
\end{split}
\end{equation*}
Combining the last inequality with \eqref{eq:ineq1}, one has 
\begin{equation}\label{eq:prop5eq100}
\begin{split}
    & \qut{1-s} \bbE\qut{\qutn{\bmph - \bm{q}}^2} + \qut{1-s} \bbE\qut{\qutn{\bmph - \bm{p}}^2} \\
    \leq 
 &  \bbE\qut{\qutn{\bm{q} - \bmx }^2} - \bbE\qut{\qutn{\bmp - \bmx }^2} + 2 \delta /s.
\end{split}    
\end{equation}
Next, we will show that  $\bbE\qut{\qutn{\bm{q} - \bmx }^2} - \bbE\qut{\qutn{\bmp - \bmx }^2} \leq \epsilon^2$ which together with \eqref{eq:prop5eq100} gives the desired result \eqref{eq:coro6-1} by setting $s = \sqrt{\delta}$. 

From Remark \ref{remark1}, the definitions of $L_1$ and $\bme_1$ in 
\eqref{eq:1211} imply
\[
\begin{split}
\bbE\qut{\qutn{\bme_1}^2\mid \bmx} & = \bbE\qut{\qutn{\bmp - L_1 \bmnh}^2|\bmx}\\ & \leq \bbE\qut{\qutn{\bmp - L_1 \bmnh + tL \bmnh}^2|\bmx}    
\end{split}
\]
for any $t\in \mathbb{R}$ and any linear $L$. It forces that $\bbE\qut{\qutan{L \bmn, \bme_1}\mid \bmx} = 0$. Furthermore, the conditional mean of $\bme_1$ is zero, i.e., $\bbE\qut{\bme_1\mid \bmx} = 0$. Similarly, one can also show that $\bbE\qut{\qutan{L \bmnh, \bme_2}\mid \bmx} = 0$ and $\bbE\qut{\bme_2\mid \bmx} = 0$. Therefore,
\[
\bbE\qut{\qutan{g_1\qut{\bmx} + L_1 \bmnh - \bmx, \bme_i} \mid \bmx} = 0, {\rm~for~} i \in \{1, 2\},
\]
and consequently, 
\[
\begin{split} 
& \bbE\qut{\qutn{\bm{q} - \bmx }^2 \mid \bmx}  \\
 = & \bbE\qut{\qutn{g_1\qut{\bmx} + L_1 \bmnh + \bme_2 - \bmx}^2\mid \bmx} \\
\leq & \bbE\qut{\qutn{\bme_2}^2\mid \bmx} + \bbE\qut{\qutn{g_1\qut{\bmx} + L_1 \bmnh - \bmx}^2\mid \bmx} \\
\leq & \bbE\qut{\qutn{\bme_2}^2\mid \bmx} + \bbE\qut{\qutn{g_1\qut{\bmx} + L_1 \bmnh + \bme_1 - \bmx}^2\mid \bmx} \\
= & \bbE\qut{\qutn{\bme_2}^2\mid \bmx} + \bbE\qut{\qutn{\bmp - \bmx }^2 \mid \bmx}.
\end{split}
\]
Taking expectation over $\bmx$, the last inequality leads to 
\[
\bbE\qut{\qutn{\bm{q} - \bmx }^2} \leq \epsilon^2 + \bbE\qut{\qutn{\bmp - \bmx }^2},
\]
and the proof is completed. 
\end{proof}

\section{Additional details for Subsection \ref{subsect:microscopydenoising}}
In this section, further details about the experiments on real microscopy images will be given. We first provide the training specification for the denoising networks. The details of the estimation of noise variance , required by the training process, will be discussed next. Finally, we present a noise variance refinement method which helps us to find a more accurate noise variance. The general experimental setting is similar to that of Subsection \ref{sect:denoisingexp}. 

\textit{Training specification.}
We use the network architecture DnCNN \cite{zhang2017beyond} throughout the experiments. During training, the images are copped into overlapping patches, using a sliding window of size $40\!\times\!40$ and a step size of $10$. Additionally, for the N2DH-GOWT1 and C2DL-MSC datasets, since the major regions in the images are of low intensity corresponding to the background, we reorder the patches by their averaged pixel intensity (from low to high), and randomly delete $80\%$ of the first $80\%$ patches. This aims to let the network put more attention on the foreground objects during training. For all three datasets, we apply random flipping to augment the training images. The parameter $\gamma$ is set to $2^2$ for N2DH-GOWT1 and C2DL-MSC, and $2^6$ for TOMO110. The optimization process follows the ones described in Subsection \ref{sect:denoisingexp}, but one needs to use the estimated noise variance which will be introduced below.

\textit{Noise variance estimation.} 
We estimate the noise variance using the methods that are introduced next. 

1). For the datasets N2DH-GOWT1 and C2DL-MSC, the variance is estimated via the mean square difference of neighboring pixels at smooth regions of the images. Firstly, given the noisy image $y$, we define the average squared difference at pixel $i$ as
\[
 d_i^2 =  \sum_{j \in \mathcal{N}(i)}(y_i - y_j)^2 / (2 |\mathcal{N}(i)|) 
\]
where $\mathcal{N}(i)$ denotes the 4-connected pixels of $i$. Secondly, we compute a smooth version of $y$ 
\[
s = k * y
\]
where $*$ denotes convolution operations and $k$ is a uniform kernel matrix of $10\!\times\!10$ (the entries sum to $1$). Thirdly, to find the smooth regions of the images, we define 
\[
\mathcal{F} = \qut{ i \mid 
f_i \leq 0.02  (s_i-\min_j(s_j) ) }
\]
where $f = k * \qut{(s - y)\odot(s - y)}$ and $\odot$ denotes element-wise multiplication. Finally, we assume that the noise variance depends only on the intensity $v$ of pixels, and the estimated variance is computed as 
\[
V(v) = {\rm mean} \qutc{  
d_i^2 \mid i \in \mathcal{F}, |s_i - v| < 0.5/255}. 
\]
In our experiment, $V(v)$ is averaged over all images $y$ in the training set. The computed variance is plotted in Fig. \ref{fig:micropyenv}, where the blue squares indicate $V(v)$. We observe that $V(\cdot)$ has a linear pattern, therefore we approximate $V(v)$ with a linear function $V(v) \approx (v-\mu)/\lambda$ (see the red lines in Fig. \ref{fig:micropyenv}) where $\mu$ and $\lambda$ are parameters. For dataset N2DH-GOWT1, $\mu = 0, \lambda = 69$. For C2DL-MSC $\mu = 0.09, \lambda = 76$, and  $V(v)\!=\!0$ for $v \leq \mu$. 

\begin{figure}[h!]
    \centering
    \setlength{\tabcolsep}{1pt}
    \begin{tabular}{cc}
        \includegraphics[width=0.5\linewidth, trim=0 7 0 10, clip]{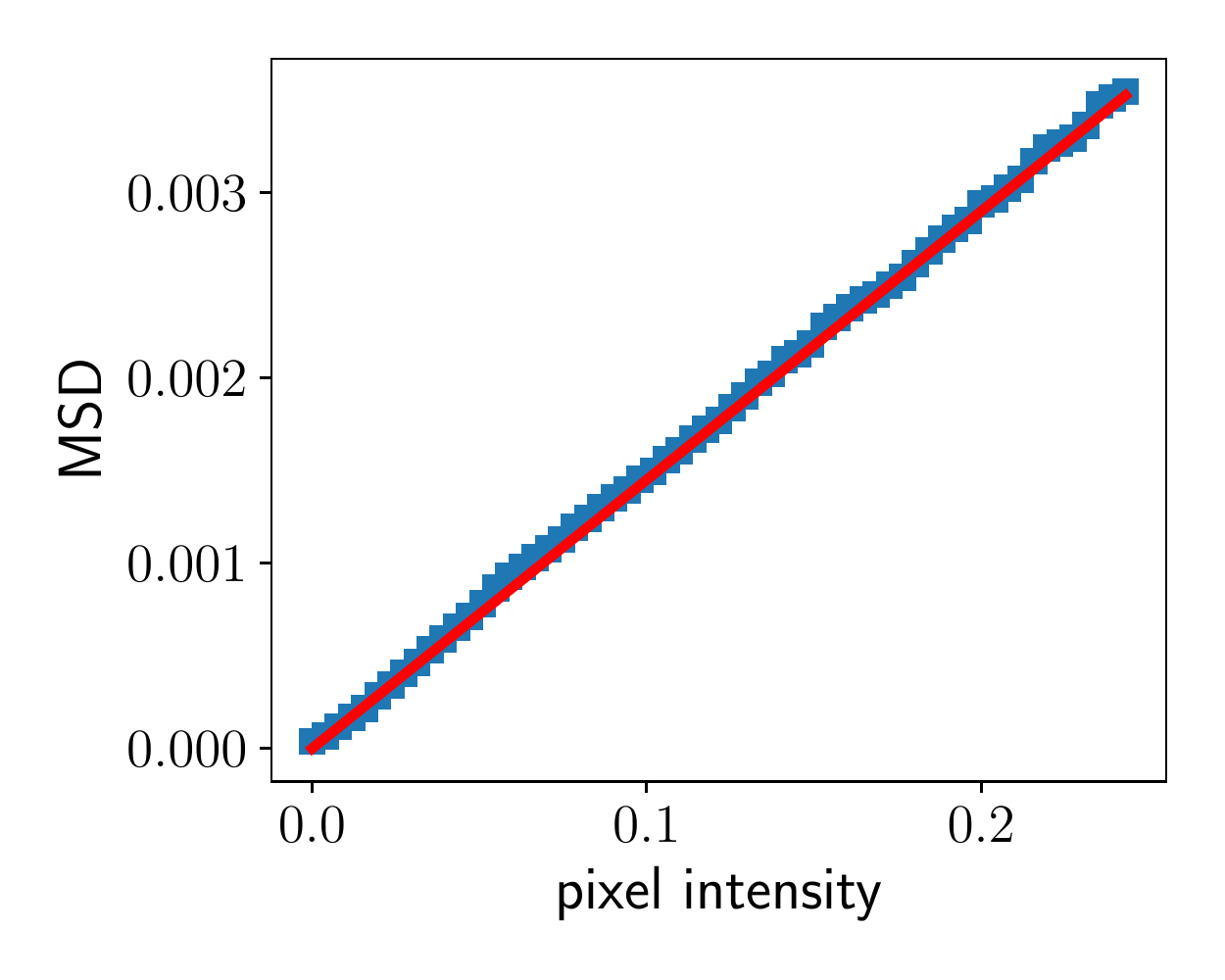}
        &
        \includegraphics[width=0.5\linewidth,trim=0 7 0 10, clip]{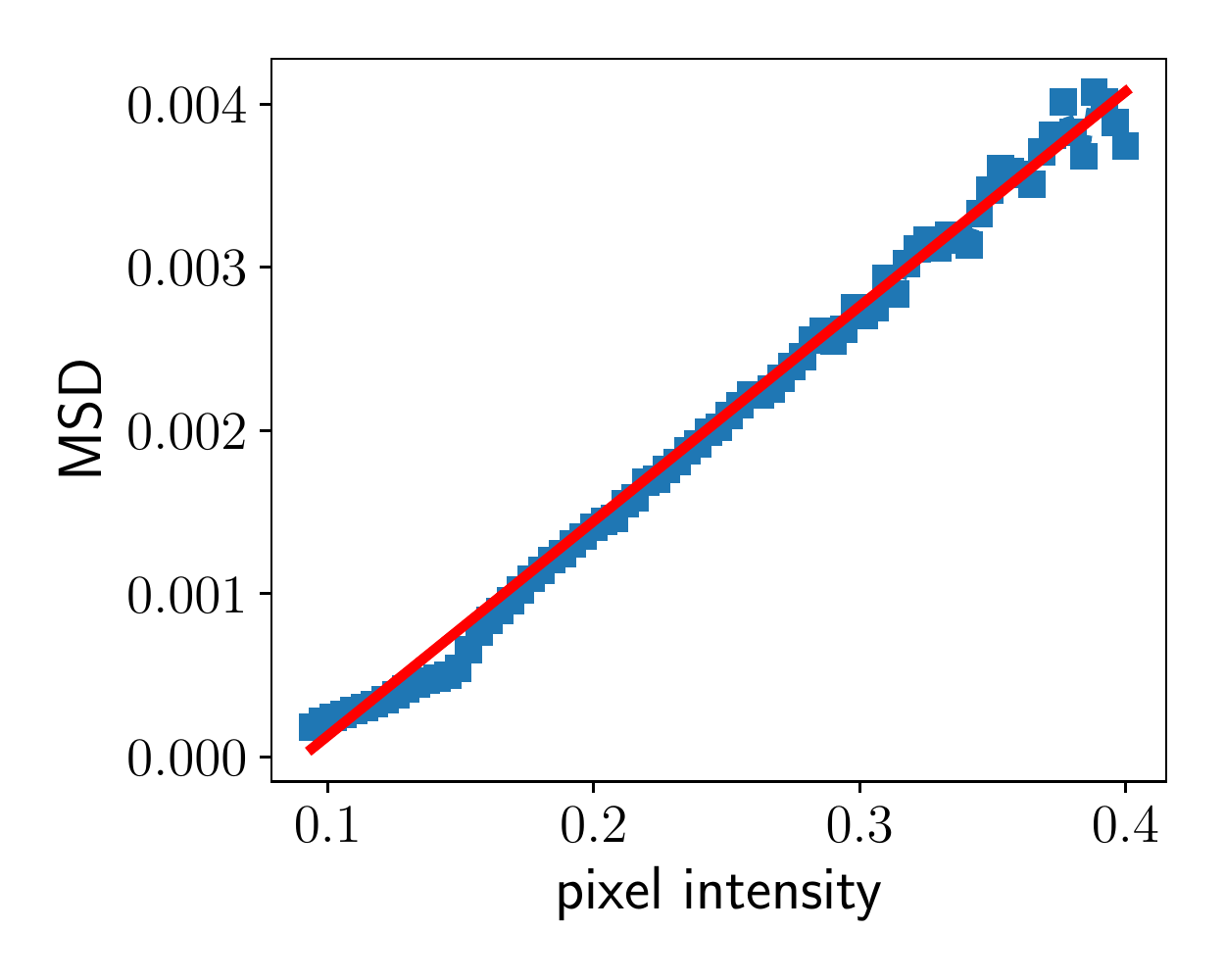}\\
        \footnotesize{ (a).  N2DH-GOWT1} & \footnotesize{ (b). C2DL-MSC} 
    \end{tabular}
    \caption{initial noise variance estimates.}\label{fig:micropyenv}
\end{figure}

2). For TOMO110, since the training image has multiple frames (as a consequence of dose-fractionated acquisition), we estimate the variance by the differences among the frames. Specifically, the raw dose-fractionated data contains $a^{(1)}, a^{(2)}, \cdots, a^{(n)}$ where $n$ is the number of frames, and the frames $a^{(j)}$ have independent noise. In the dataset being considered, for example, $n=10$. The observed image is the sum of all frames, i.e., $y = \sum_{j=1}^n a^{(j)}$. By computing the variance among $a^{(j)}$, the noise variance at pixel $i$ is estimated as
\[
V(i) = \frac{n}{n-1} \sum_{j=1}^n \qut{ a_i^{(j)} - y_i/n }^2.
\]
We note that $V(i)$ is an estimate of the noise variance for the pixel $i$, and our method denoise the whole image $y$ instead of individual frames $a^{(j)}$. In the Noise2Noise approach, in contrast, the network leans to denoise half of the frames, therefore does not make full use of $y$ during the denoising process. 

\textit{Noise variance refinement.} The rough noise variance estimates for datasets N2DH-GOWT1 and C2DL-MSC are refined by the trained models. Specifically, a denoising network is first trained based on the rough noise estimates $V(v)$. The trained network is then fine tuned for several different values of $\lambda$ that are close to the estimates. For fine tuning, we apply $50000$ optimization steps, and the other hyperparameters remain unchanged. This results in several denoising models (corresponding to different $\lambda$). For each model, we compute the term $\bbE\qut{\qutan{\bmz, L\bmz}}$ for a constant image $\bmx$, by using Remark \ref{remark1}. Since the ground truth noise variance is unknown, here we generate $\bmn$ using the estimated variance that is parameterized by $\lambda$. The results are shown in Fig. \ref{fig:micropyrefine}. Motivated by experimental results in Subsection \ref{subsect:env} and Fig. \ref{fig:noisevarianceerror}, we select the $\lambda$ values which lead to a small positive $\bbE\qut{\qutan{\bmz, L\bmz}}$. Our final estimated value of $\lambda$ is $69$ for N2DH-GOWT1 and $88$ for C2DL-MSC. No noise estimate refinement steps are performed for TOMO110 because its noise variance is estimated on individual pixels. 

\begin{figure}[h!]
    \centering
    \setlength{\tabcolsep}{1pt}
    \begin{tabular}{cc}
        \includegraphics[width=0.5\linewidth, trim=0 7 0 10, clip]{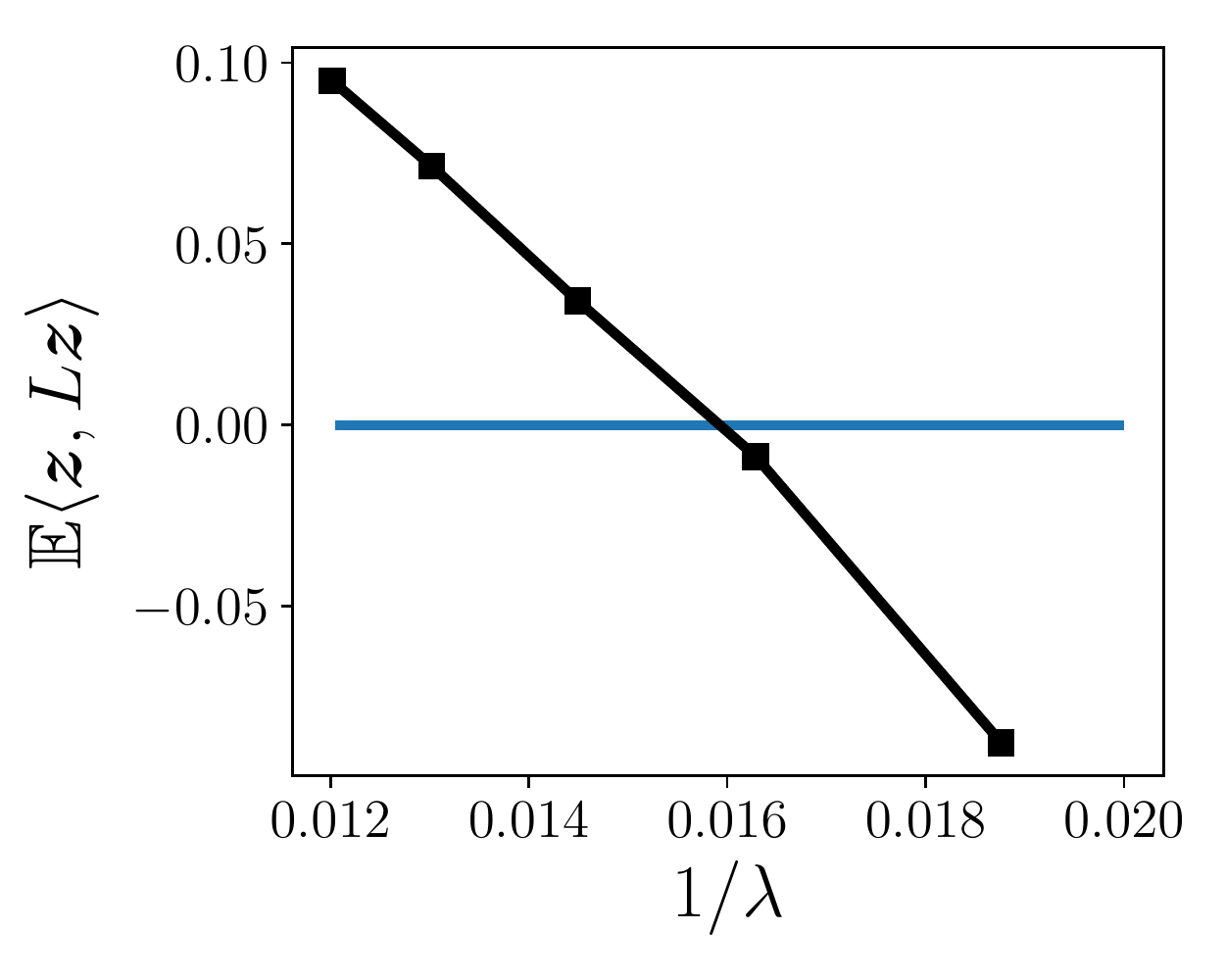}
        &
        \includegraphics[width=0.5\linewidth,trim=0 7 0 10, clip]{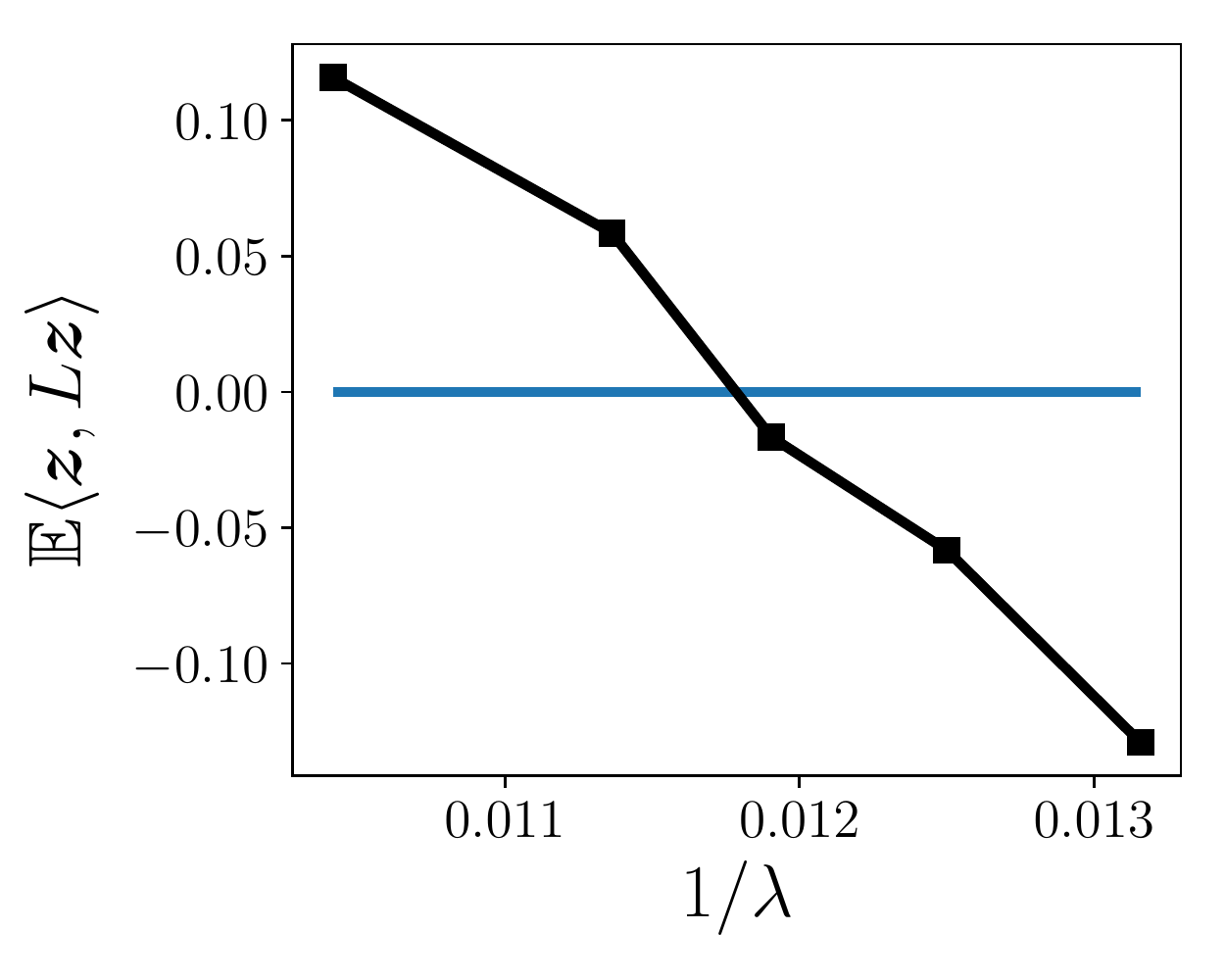}\\
        \footnotesize{ (a).  N2DH-GOWT1} & \footnotesize{ (b). C2DL-MSC} 
    \end{tabular}
    \caption{Refining noise variance estimates using the learned denoisers.}\label{fig:micropyrefine}
\end{figure}

\def\widthfive{0.18\linewidth}
\def\getdemod#1#2#3{
\begin{tikzpicture}[spy using outlines]
  \node{\includegraphics[width=\widthfive]{#1}};
    \spy [red, magnification=4, width=\widthfive, height=.618*\widthfive, line width=2] on (#2*\widthfive,#3*\widthfive) in node at (0,-0.83*\widthfive);
\end{tikzpicture}
}
\begin{figure*}[h!]
    \centering
    \setlength{\tabcolsep}{1pt}
    \begin{tabular}{ccccc}
        \addlinespace[0.2cm]
        \getdemod{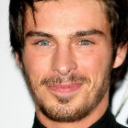}{0.18}{0.2} 
        & 
        \getdemod{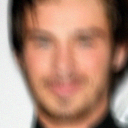}{0.18}{0.2}& 
        \getdemod{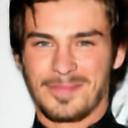}{0.18}{0.2}
        & 
        \getdemod{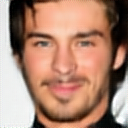}{0.18}{0.2}
        & 
        \getdemod{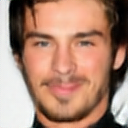}{0.18}{0.2}
        \\ 
        \addlinespace[0.2cm]
        \getdemod{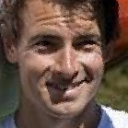}{0.22}{-0.1}
        &
        \getdemod{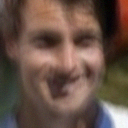}{0.22}{-0.1}
        & 
        \getdemod{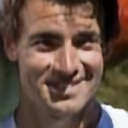}{0.22}{-0.1}
        & 
        \getdemod{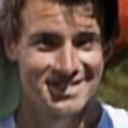}{0.22}{-0.1}
        & 
        \getdemod{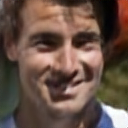}{0.22}{-0.1}
        \\
        \addlinespace[0.2cm]
        \getdemod{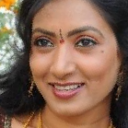}{-0.22}{-0.17}
        & 
        \getdemod{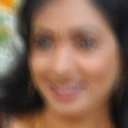}{-0.22}{-0.17}
        &
        \getdemod{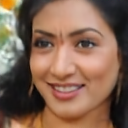}{-0.22}{-0.17}
        &
        \getdemod{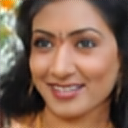}{-0.22}{-0.17}
        &
        \getdemod{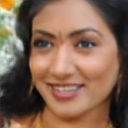}{-0.22}{-0.17}
        \\
        Ground truth & Blurry image & Fully supervised & Xia et al. \cite{xia2019training} & Ours \\
    \end{tabular}
    \caption{Additional visual results about deblurring. The images are taken from the test sets of Helen \cite{le2012interactive} and CelebA \cite{liu2015deep}}
    \label{fig:deblurpart2}
\end{figure*}
\section{Additional results about deblurring}
Fig. \ref{fig:deblurpart2} contains additional visual outputs of our model, in comparison with the fully supervised baseline and Xia et al. \cite{xia2019training}. The experimental setting is described in Subsection \ref{subsect:blind_deblurring}. The results show that, learning without ground truth images, our model recovers major structures of the faces, and performance is comparable to the Noise2Noise based method Xia et al. \cite{xia2019training}.

\ifCLASSOPTIONcaptionsoff
  \newpage
\fi

\end{document}